
\documentclass[a4paper,UKenglish]{lipics-neutered}
 
\usepackage{microtype}

\bibliographystyle{plain}

\title{Expressibility in the Lambda Calculus with $\mu$}
  
\titlerunning{Expressibility in the Lambda Calculus with $\mu$}

\author[1]{Clemens Grabmayer}
\author[2]{Jan Rochel
                     }
\affil[1]{Department of Philosophy, Utrecht University\\
  PO Box 80126, 3508 TC Utrecht, The Netherlands\\
  \texttt{clemens@phil.uu.nl}}
\affil[2]{Department of Information and Computing Sciences\\
  PO Box 80089, 3508 TB Utrecht\\
  \texttt{jan@rochel.info}}
\authorrunning{C.\ Grabmayer and J.\ Rochel} 

\Copyright[by]{C.\ Grabmayer and J.\ Rochel}

\subjclass{F.3.3 Studies of Program Constructs}

\serieslogo{}
\volumeinfo
  {Editor A., Editors B.}
  {2}
  {Conference title on which this volume is based on}
  {1}
  {1}
  {1}
\EventShortName{}
\DOI{10.4230/LIPIcs.xxx.yyy.p}

\usepackage{bussproofs}

\usepackage{amsmath}
\usepackage{amssymb}
\usepackage{amsfonts}
\usepackage{graphicx}
\usepackage{enumerate}
\usepackage{color}
\usepackage[utf8]{inputenc}

\usepackage{MnSymbol}

\usepackage{microtype}

\usepackage{tikz}
\usetikzlibrary{matrix,positioning}

\theoremstyle{definition}
\newtheorem{proposition}[theorem]{Proposition}



\newcommand{\bindcaptchain}{bin\-ding--cap\-tu\-ring chain}
\newcommand{\bindcaptchains}{\bindcaptchain{s}}
\newcommand{\Bindcaptchain}{Bin\-ding--cap\-tu\-ring chain}
\newcommand{\Bindcaptchains}{\Bindcaptchain{s}}


\newcommand\fig[1]{\includegraphics[scale=0.81]{figs/{{#1}}}}

\newcommand\vcentered[1]{\raisebox{-0.5\height}{#1}}

%
\newcommand\lambdacal\blambda
\newcommand\blambda{\ensuremath{\bs{\lambda}}} 

\newcommand{\inflambdacal}{\ensuremath{\bs{\lambda}^{\hspace*{-1pt}\bs{\infty}}}}
\newcommand{\lambdaletreccal}{\ensuremath{\indap{\bs{\lambda}}{\stxtletrec}}}

\newcommand{\lambdamucal}{\ensuremath{\indap{\bs{\lambda}}{{\mu}}}}
\newcommand{\lambdaprefixcal}{\ensuremath{\bs{(\lambda)\lambda}}}
\newcommand{\lambdamuprefixcal}{\ensuremath{\bs{(\lambda)}\lambdamucal}}

\newcommand{\inflambdaprefixcal}{\ensuremath{\bs{(\lambda)}\inflambdacal}}
\newcommand{\inflambdaprefixposcal}{\ensuremath{\bs{(\lambda)}}_{\sbfpos}\bs{\lambda}^{\hspace*{-0.25pt}\bs{\infty}}}

\newcommand{\lambdacalculus}{$\lambda$\nb-calculus}

\let\oldlambda\lambda
\renewcommand\lambda{\ensuremath\oldlambda}

\newcommand{\lambdamu}{\ensuremath{\sslabs_{\ssmuabs}}}

\newcommand{\lambdaterm}{$\lambda$\nb-term}
\newcommand{\lambdaterms}{\lambdaterm{s}}
\newcommand{\lambdaabstraction}{$\lambda$\nb-ab\-strac\-tion}
\newcommand{\lambdaabstractions}{\lambdaabstraction{s}}
\newcommand{\lambdabinding}{$\lambda$\nb-bin\-ding}
\newcommand{\lambdabindings}{\lambdabinding{s}}

\newcommand{\lambdaexpressions}{\lambdaexpressions}

\newcommand{\lambdaletreccalexpressible}{$\lambdaletreccal$\nb-ex\-press\-ible}
\newcommand{\lambdaletreccalexpressibility}{$\lambdaletreccal$\nb-ex\-press\-ibi\-lity}
\newcommand{\lambdaletrecexpressible}{$\lambda_{\stxtletrec}$\nb-ex\-press\-ible}
\newcommand{\lambdaletrecexpressibility}{$\stxtlambdaletreccal$\nb-ex\-press\-ib\-ility}
\newcommand{\lambdamuexpressible}{$\stxtlambdamucal$\nb-ex\-press\-ible}
\newcommand{\lambdamuexpressibility}{$\stxtlambdamucal$\nb-ex\-press\-ib\-ility}

\newcommand{\lambdamucalexpressible}{$\lambdamucal$\nb-ex\-press\-ible}
\newcommand{\lambdamucalexpressibility}{$\lambdamucal$\nb-ex\-press\-ibi\-lity}
\newcommand{\lambdamuterm}{$\stxtlambdamucal$\protect\nb-term}
\newcommand{\lambdamuterms}{\lambdamuterm{s}}

\newcommand{\posannotated}{po\-si\-tion-anno\-ta\-ted}

\newcommand{\CRSabstraction}{\CRS\nb-ab\-strac\-tion}

\newcommand{\murecursion}{$\smu$\nb-re\-cur\-sion}
\newcommand{\mubinding}{$\mu$\nb-bin\-ding}
\newcommand{\mubindings}{\mubinding{s}}

\newcommand{\letrecbinding}{$\stxtletrec$\nb-bin\-ding}
\newcommand{\letrecbindings}{\letrecbinding{s}}

%

\newcommand{\sslabs}{\lambda}
\newcommand{\slabs}[1]{\sslabs{#1}}
\newcommand{\labs}[2]{\slabs{#1}.\hspace*{0.5pt}{#2}}
\newcommand{\sslapp}{@}
\newcommand{\lapp}[2]{{#1}\,{#2}}

%
\newcommand{\ssmuabs}{\mu}
\newcommand{\smuabs}[1]{\ssmuabs{#1}}
\newcommand{\muabs}[2]{\smuabs{#1}.\hspace*{0.5pt}{#2}}

\newcommand{\snlvar}{\ensuremath{\mathsf{0}}}
\newcommand{\nlvar}{\snlvar}
\newcommand{\snlvarsucc}{\ensuremath{\mathsf{S}}}

\newcommand{\smin}{\ensuremath{\mathbf{in}}}

\newcommand{\sletrec}{\ensuremath{\mathbf{letrec}}}

\newcommand{\letrecin}[2]{\sletrec\;{#1}\;\smin\;{#2}}

\newcommand{\stxtletrec}{\ensuremath{\text{\normalfont\sf letrec}}}

\newcommand{\stxtlambdaletreccal}{\indap{\lambda}{\stxtletrec}}
\newcommand{\stxtlambdamucal}{\indap{\lambda}{\mu}}

\newcommand{\smuCRS}{\ensuremath{\textsf{mu}}}

\newcommand{\allter}{L}

\newcommand{\almter}{M}
\newcommand{\blmter}{N}

\newcommand{\almteri}{\indap{\almter}}

\newcommand{\arecvar}{f}

\makeatletter
\def\a#1{\reflectbox{$\m@th#1{\lambda}$}}

\makeatother

\newcommand{\sflabs}[1]{(\slabs{#1})}
\newcommand{\flabs}[2]{\sflabs{#1}\hspace*{1pt}{#2}}
\newcommand{\femptylabs}[1]{()\hspace*{0.5pt}{#1}}

\newcommand{\flabspos}[4]{\bpap{(\slabs{#1})}{#2}{#3}\hspace*{0.5pt}{#4}}
\newcommand{\femptylabspos}[3]{\bpap{()}{#1}{#2}\hspace*{0.5pt}{#3}}

\newcommand{\sflabsposCRS}[2]{{\mathsf{pre}}^{#2}_{\enumsequence{#1}}}
\newcommand{\flabsposCRS}[4]{\funap{\sflabsposCRS{#2}{#3}}{\absCRS{#1}{#4}}}

\newcommand{\avar}{x}
\newcommand{\bvar}{y}
\newcommand{\cvar}{z}
\newcommand{\dvar}{u}
\newcommand{\evar}{v}

\newcommand{\avari}[1]{\avar_{#1}}
\newcommand{\bvari}[1]{\bvar_{#1}}
\newcommand{\cvari}[1]{\cvar_{#1}}

\newcommand{\ater}{T}
\newcommand{\bter}{U}
\newcommand{\cter}{V}

\newcommand{\ateri}[1]{\ater_{#1}}

\newcommand{\aiter}{T}
\newcommand{\biter}{U}
\newcommand{\citer}{V}
\newcommand{\diter}{W}
\newcommand{\aiteri}[1]{\aiter_{#1}}
\newcommand{\biteri}[1]{\biter_{#1}}
\newcommand{\citeri}[1]{\citer_{#1}}

\newcommand{\biteracc}{U'}

\newcommand{\diteracc}{W'}

\newcommand{\biteracci}[1]{\biteracc_{#1}}



\newcommand{\sann}{{:}}
\newcommand{\annexp}[2]{{{#1}\hspace*{3pt}\sann\hspace*{3pt}{#2}}}
\newcommand{\annexptxt}[2]{{{#1}\hspace*{2pt}\sann\hspace*{2pt}{#2}}}

\newcommand{\flabsann}[3]{(\slabs{#1})\hspace*{3pt}\annexp{#2}{#3}}
\newcommand{\femptylabsann}[2]{()\hspace*{2pt}\annexp{#1}{#2}}
\newcommand{\flabsanntxt}[3]{(\slabs{#1})\hspace*{3pt}\annexptxt{#2}{#3}}

\newcommand{\sTer}{\mit{Ter}}
\newcommand{\Ter}{\funap{\sTer}}
\newcommand{\siTer}{{\sTer^{\infty}}}
\newcommand{\iTer}{\funap{\siTer}}




%
\newcommand{\asig}{\Sigma}

%
%
\newcommand{\sametavar}{X}

\newcommand{\scmetavar}{Z}

\newcommand{\cmetavar}{\funap{\scmetavar}}

\newcommand{\aconstname}{\mathsf{c}}

\newcommand{\aconstnamei}{\subap{\aconstname}}



%






%
\newcommand{\sabsCRS}[1]{[{#1}]}
\newcommand{\absCRS}[2]{[{#1}]\hspace*{1pt}{#2}}

\newcommand{\CRS}{\ensuremath{\text{CRS}}}
\newcommand{\CRSs}{\ensuremath{\text{CRSs}}}
\newcommand{\iCRS}{\ensuremath{\text{iCRS}}}
\newcommand{\iCRSs}{\ensuremath{\text{iCRSs}}}

\newcommand{\slabsCRS}{\mathsf{abs}}
\newcommand{\labsCRS}[2]{\funap{\slabsCRS}{\absCRS{#1}{#2}}}
\newcommand{\slappCRS}{\mathsf{app}}
\newcommand{\lappCRS}[2]{\funap{\slappCRS}{#1,#2}}

\newcommand{\smuabsCRS}{\mathsf{mu}}
\newcommand{\muabsCRS}[2]{\funap{\smuabsCRS}{\absCRS{#1}{#2}}}

\newcommand{\sflabsCRS}[1]{\mathsf{pre}_{#1}}
\newcommand{\flabsCRS}[3]{\funap{\sflabsCRS{#1}}{\absCRS{#2}{#3}}}

\newcommand\sRegCRS{\mbox{\textit{\text{Reg}}}}

\newcommand{\RegCRS}{\sRegCRS}
\newcommand{\stRegCRS}{\sRegCRS^{\bs{+}}}

\newcommand{\decompARS}{\ensuremath{{\mathit{(\Lambda)}}}}

\newcommand{\stRegposCRS}{\sRegCRS_{\textit{\textbf{pos}}}^{\bs{+}}}

\newcommand{\RegARS}{\textit{Reg}}
\newcommand{\stRegARS}{\textit{Reg}^{+}}

\newcommand{\stRegposARS}{\textit{Reg}_{\textit{pos}}^+}

\newcommand{\siglcCRS}{\asig_{\lambda}}
\newcommand{\siglmcCRS}{\asig_{\lambda_{\smu}}}

\newcommand{\siglpcCRS}{\asig_{(\lambda)\lambda}}
\newcommand{\siglpposcCRS}{\asig_{(\lambda)_{\spos}\lambda}}

%
\newcommand{\ARS}{ARS}

\newcommand{\aARS}{{\cal A}}



%
\newcommand{\aocc}{o}

\newcommand{\apath}{\pi}
\newcommand{\athread}{\theta}

%

%

\newcommand{\sgST}{{\mit ST}}

\newcommand{\sgSTreg}{{\mit ST}}
\newcommand\gSTreg[1]{\funap\sgSTreg{#1}}

\newcommand{\sgSTstreg}{\sgST^+}
\newcommand\gSTstreg[1]{\funap\sgSTstreg{#1}}

\newcommand{\arewseq}{\tau}

\newcommand{\crewseq}{\pi}

\newcommand{\Checkreg}{\Check}

%
\newcommand{\sred}{{\to}}
\newcommand{\red}{\mathrel{\to}}

\newcommand{\smred}{{\twoheadrightarrow}}

\newcommand\sconvred{{\leftarrow}}

\newcommand{\sconvmred}{{\twoheadleftarrow}}

\newcommand{\snfred}{\to^{\scriptstyle !}} 





\newcommand{\sregred}{\subap{\sred}{\reg}}
\newcommand{\sstregred}{\subap{\sred}{\streg}}

\newcommand{\sregmred}{\subap{\smred}{\reg}}
\newcommand{\sstregmred}{\subap{\smred}{\streg}}

\newcommand{\sstregconvmred}{\subap{\sconvmred}{\streg}}


\newcommand{\sregeqred}{\bpap{\sred}{\reg}{=}}
\newcommand{\sstregeqred}{\bpap{\sred}{\streg}{=}}

\newcommand{\regred}{\mathrel{\sregred}}  
\newcommand{\stregred}{\mathrel{\sstregred}}

\newcommand{\stregconvmred}{\mathrel{\sstregconvmred}}

\newcommand{\regmred}{\mathrel{\sregmred}}  
\newcommand{\stregmred}{\mathrel{\sstregmred}}

\newcommand{\reg}{\ensuremath{\text{\normalfont reg}}}
\newcommand{\streg}{\ensuremath{\text{\normalfont reg}^+}}

\newcommand{\smu}{{\mu}}

\newcommand{\sunfold}{\mu}
\newcommand{\sunfoldred}{\indap{\sred}{\hspace*{-1pt}\sunfold}}

%




\newcommand{\sunfoldinfred}{\indap{\sinfred}{\hspace*{-1pt}\sunfold}}
\newcommand{\unfoldinfred}{\mathrel{\sunfoldinfred}}


\newcommand{\sunfoldominfnfred}{\mbox{$\indap{\sominfnfred}{\sunfold}$}}
\newcommand{\unfoldominfnfred}{\mathrel{\sunfoldominfnfred}}

\newcommand{\sunfoldinfnfred}{\indap{\sinfnfred}{\hspace*{-1pt}\sunfold}}

\newcommand{\unfemptyred}{\mathrel{\unfemptyred}}

\newcommand{\unffreered}{\mathrel{\unffreered}}







\newcommand{\scompressregnfred}{{\snfred_{\scompressreg}}}

\newcommand{\scompressstregnfred}{{\snfred_{\scompressstreg}}}

%
\newcommand\thsp{-1.785ex}
\newcommand\threeheadrightarrow{\twoheadrightarrow\hspace*\thsp\twoheadrightarrow}

\newcommand{\sinfred}{\threeheadrightarrow}

\newcommand{\sinfnfred}{\sinfred^{!}}

\newcommand{\sominfred}{\overset{\text{\normalfont out}\hspace*{3.5pt}}{\sinfred}}
\newcommand{\sominfnfred}{\sominfred{\hspace*{-4pt}}^{\scriptstyle !}}

\newcommand{\astrat}{\mathbb{S}}

\newcommand{\slabsdecomp}{\sslabs}
\newcommand{\slabsdecompred}{{\subap{\sred}{\slabsdecomp}}}






\newcommand{\slappdecompi}{\subap{@}}
\newcommand{\slappdecompired}[1]{{\subap{\sred}{\slappdecompi{#1}}}}
\newcommand{\lappdecompired}[1]{\mathrel{\slappdecompired{#1}}}



\newcommand{\scompress}{\text{\normalfont del}}
\newcommand{\scompressreg}{\scompress}
\newcommand{\scompressregred}{{\subap{\sred}{\scompress}}}

\newcommand{\scompressregconvred}{{\subap{\sconvred}{\scompress}}}


\newcommand{\scompressregconveqred}{{\bpap{\sconvred}{\scompress}{=}}}

\newcommand{\scompressregmred}{{\subap{\smred}{\scompress}}}
\newcommand{\compressregmred}{\mathrel{\scompressregmred}}
\newcommand{\scompressregconvmred}{{\subap{\sconvmred}{\scompress}}}

\newcommand{\scompressstreg}{\snlvarsucc} 
\newcommand{\scompressstregred}{{\subap{\sred}{\snlvarsucc}}}


\newcommand{\scompressstregeqred}{{\bpap{\sred}{\snlvarsucc}{=}}}





%
%
%
%



\newcommand{\constred}{\mathrel{\constred}}

\newcommand{\srule}{\varrho}

\newcommand{\srulep}{\supap{\srule}}

\newcommand\srulepos{\varrho_{pos}}
\newcommand\rulepos[1]{\srulepos^{#1}}

\newcommand\subrule[2]{{#1}.{#2}}

%
\newcommand{\lambdatg}{$\lambda$\nb-term-graph}
\newcommand{\lambdatgs}{$\lambda$\nb-term-graphs}

\newcommand{\agraph}{G}

\newcommand{\scope}{scope}
\newcommand{\extscope}{\scope$^+$}

\newcommand{\Vacregeager}{{\normalfont ($\Vacreg$)}\nb-eager}
\newcommand{\Vacstregeager}{{\normalfont ($\Vacstreg$)}\nb-eager}






%


%
\newcommand{\ainst}{\iota}
\newcommand{\binst}{\kappa}

\newcommand{\aproofsys}{{\cal S}}

\newcommand{\saprop}{P}

\newcommand{\aprop}{\funap{\saprop}}


%
\newcommand{\Reg}{\ensuremath{\normalfont\textbf{Reg}}}
\newcommand{\stReg}{\ensuremath{\normalfont\textbf{Reg}^{\bs{+}}}}
\newcommand{\stRegzero}{\ensuremath{\normalfont\textbf{Reg}_{\bs{0}}^{\bs{+}}}}

\newcommand{\Lambdaprefixreginf}{\ensuremath{\bs{(\lambda)}\bs{\Lambda}^{\hspace*{-1.5pt}\bs{\infty}}}}
\newcommand{\Lambdaprefixstreginf}{\ensuremath{\bs{(\lambda)}^{\hspace*{-0.75pt}\bs{+}}\hspace*{-1.75pt}\bs{\Lambda}^{\hspace*{-1.5pt}\bs{\infty}}}}
\newcommand{\Lambdaprefixstregposinf}{\ensuremath{\bs{(\lambda)}^{\hspace*{-0.75pt}\bs{+}}_{\hspace*{-0.5pt}\sbfpos}\hspace*{-0.25pt}\bs{\Lambda}^{\hspace*{-1.5pt}\bs{\infty}}}}

\newcommand{\Reginf}{\ensuremath{\supap{\normalfont\textbf{Reg}}{\bs{\infty}}}}
\newcommand{\stReginf}{\ensuremath{\supap{\normalfont\textbf{Reg}}{\bs{+}\bs{,}\bs{\infty}}}}

\newcommand{\Expr}{\ensuremath{{\normalfont\textbf{Expr}}}}
\newcommand{\Exprinf}{\ensuremath{\supap{\normalfont\textbf{Expr}}{\hspace*{-0.75pt}\bs{\infty}}}}
\newcommand{\Exprmu}{\ensuremath{\subap{\normalfont\textbf{Expr}}{\hspace*{-1pt}\bs{\ssmuabs}}}}
\newcommand{\Exprmumin}{\ensuremath{\subap{\normalfont\textbf{Expr}}{\hspace*{-1pt}\bs{\ssmuabs,\bs{-}}}}}
\newcommand{\Exprmuinf}{\ensuremath{\pbap{\normalfont\textbf{Expr}}{\hspace*{-0.75pt}\bs{\infty}}{{{\hspace*{-1pt}\bs{\ssmuabs}}}}}}

\newcommand{\Unfinf}{\ensuremath{\supap{\normalfont\textbf{Unf}}{\hspace*{1pt}\bs{\infty}}}}

\newcommand{\Vacreg}{\ensuremath{\text{\normalfont del}}} 
\newcommand{\Vacstreg}{\ensuremath{\snlvarsucc}} 

\newcommand{\annVacstreg}{\ensuremath{\snlvarsucc}} 

\newcommand{\sFIX}{\text{\normalfont FIX}}
\newcommand{\FIX}{{\normalfont (\ensuremath\sFIX)}}

\newcommand{\sFIXExpr}{\sFIX} 

\newcommand{\labscomp}{\sslabs}
\newcommand{\lappcomp}{@}
\newcommand{\bvarax}{\nlvar}

\newcommand{\sderivable}{\vdash}
\newcommand{\derivablein}[2]{\sderivable_{#1}\hspace*{0.5pt}{#2}}

%
\newcommand{\sDeriv}{{\cal D}}
\newcommand{\Deriv}{\sDeriv}
\newcommand{\Derivi}[1]{\sDeriv_{#1}}

\newcommand{\Derivann}{\hat{\Deriv}}
\newcommand{\Derivacc}{\Deriv'}

\newcommand{\sinfDeriv}{{\cal T}}
\newcommand{\infDeriv}{\sinfDeriv}
\newcommand{\infDerivi}[1]{\sinfDeriv_{#1}}
\newcommand{\infDerivacc}{{\sinfDeriv'}}

\newcommand{\Derivanni}{\subap{\hat{\Deriv}}}

\newcommand{\depth}[1]{\left|{#1}\right|}

\newcommand{\amarker}{l}

\newcommand{\funin}{\mathrel{:}}
\newcommand{\funap}[2]{{#1}({#2})}

\newcommand{\indap}[2]{#1_{#2}}

\newcommand{\sdefdby}{{:=}}
\newcommand{\defdby}{\mathrel{\sdefdby}}
\newcommand{\length}[1]{\left|{#1}\right|}
\newcommand{\lengthnormalsize}[1]{|{#1}\hspace*{-1pt}|}
\newcommand{\nb}{\nobreakdash}

\newcommand{\bs}{\boldsymbol}

\newcommand{\srestrictto}[2]{{#1}\!\mid_{#2}}
\newcommand{\restrictto}[2]{\funap{\srestrictto}}

\newcommand{\niks}{}

\newcommand{\ssbinrelcomp}{\cdot}
\newcommand{\sbinrelcomp}[2]{{#1}\mathrel{\ssbinrelcomp}{#2}}

\newcommand{\subap}[2]{#1_{#2}}
\newcommand{\supap}[2]{#1^{#2}}
\newcommand{\bpap}[3]{{#1}_{#2}^{#3}}
\newcommand{\pbap}[3]{{#1}^{#2}_{#3}}

\newcommand{\existsst}[2]{\exists{#1}.\;{#2}}

\newcommand{\descsetexpmid}{\mathrel{\vert}}

\newcommand{\descsetexp}[2]{\left\{{#1}\descsetexpmid{#2}\right\}}
\newcommand{\descsetexpnormalsize}[2]{\{{#1}\descsetexpmid{#2}\}}

\newcommand{\setexp}[1]{\left\{{#1}\right\}}

\newcommand{\aset}{A}
\newcommand{\bset}{B}

\newcommand{\sequence}[2]{\{{#1}\}_{#2}}

\newcommand{\enumsequence}[1]{\langle{#1}\rangle}


\newcommand{\punc}[1]{\ensuremath{\hspace*{1pt}{#1}}}

\newcommand{\tuple}[1]{\langle{#1}\rangle}

\newcommand{\nats}{\mathbb{N}}


\newcommand{\noopsort}[1]{}



%

\newcommand{\sbisim}[1][]{%
    \setbox0=\hbox{\kern-.1ex{$\leftrightarrow$}\kern-.1ex}
    \setbox1=\vbox{\hbox{\raise .1ex \box0}\hrule}%
    \ensuremath{\mathrel{\hbox{\kern.1ex\box1\kern.1ex}_{#1}}}
  }

\newcommand\sbisimstep\rightsquigarrow
\newcommand\sconvbisimstep\leftsquigarrow


%

\newcommand{\sltg}{{\cal G}}
\newcommand\altg\sltg





\definecolor{azure}{rgb}{0.94,1.00,1.00}
\definecolor{blue}{rgb}{0,0,0.5}
\definecolor{brown}{rgb}{.75,.25,.25}
\definecolor{cyan}{rgb}{0.25,0.88,0.82}
\definecolor{chocolate}{rgb}{0.82,0.41,0.12}
\definecolor{darkcyan}{rgb}{0.5,0,1}
\definecolor{darkgreen}{rgb}{0,0.39,0}
\definecolor{darkmagenta}{rgb}{0.5,0,0.5}
\definecolor{firebrick}{RGB}{175,25,25}
\definecolor{forestgreen}{rgb}{0.13,0.55,0.13}
\definecolor{lightcyan}{rgb}{0.88,1.00,1.00}
\definecolor{lightpink}{rgb}{1.00,0.71,0.76}
\definecolor{lightyellow}{rgb}{1.00,1.00,0.88}
\definecolor{lightgoldenrod}{rgb}{0.83,0.97,0.51}
\definecolor{lightgoldenrodyellow}{rgb}{0.98,0.98,0.82}
\definecolor{lightskyblue}{rgb}{0.53,0.81,0.98}
\definecolor{moccasin}{rgb}{1.00,0.89,0.71}
\definecolor{magenta}{rgb}{1,0,1}
\definecolor{navyblue}{rgb}{0,0,0.5}
\definecolor{orange}{rgb}{1.0,0.65,0.0}
\definecolor{orangered}{rgb}{1.0,0.27,0.0}
\definecolor{palegreen}{rgb}{0.60,0.98,0.60}
\definecolor{powderblue}{rgb}{0.69,0.88,0.90}
\definecolor{purple}{rgb}{1,0.5,1}
\definecolor{royalblue}{RGB}{65,105,225}
\definecolor{mediumblue}{RGB}{0,0,205}
\definecolor{cornflowerblue}{RGB}{100,149,237}
\definecolor{springgreen}{rgb}{0.0,1.0,0.5}
\definecolor{turquoise}{rgb}{0.25,0.88,0.82}
\definecolor{snow}{rgb}{1.00,0.98,0.98}
\definecolor{tan}{rgb}{0.82,0.71,0.55}
\definecolor{red}{rgb}{1,0,0}

%

%
\newcommand{\sbinds}{{\leftspoon}}
\newcommand{\binds}{\mathrel{\sbinds}}

\newcommand{\scaptures}{{\dashleftarrow}}
\newcommand{\captures}{\mathrel{\scaptures}}

\newcommand{\siscapturedby}{{\dashrightarrow}}
\newcommand{\iscapturedby}{\mathrel{\siscapturedby}}

\newcommand{\sbindsiter}[1]{{\indap{\leftspoon}{#1}}}

\newcommand{\siscapturedbyiter}[1]{{\indap{\dashrightarrow}{#1}}}

\newcommand{\spos}{\text{\normalfont pos}}
\newcommand{\sbfpos}{\text{\normalfont\bf pos}}
\newcommand{\sPositions}{{\mit Pos}}
\newcommand{\Positions}{\funap{\sPositions}}

\newcommand{\positionannotated}{po\-si\-tion-anno\-ta\-ted}

\newcommand{\positions}{\nats^*}
\newcommand{\vecpositions}{\vec{\nats}^*}

\newcommand{\rootpos}{\epsilon}

\newcommand{\apos}{p}
\newcommand{\bpos}{q}
\newcommand{\cpos}{r}
\newcommand{\dpos}{s}

\newcommand{\bposacc}{\bpos'}

\newcommand{\aposi}{\indap{\apos}}
\newcommand{\bposi}{\indap{\bpos}}
\newcommand{\cposi}{\indap{\cpos}}
\newcommand{\dposi}{\indap{\dpos}}

\newcommand{\aposacci}{\pbap{\apos}{\prime}}

%

%

%

%


\newcommand{\sunfoldsto}{\overset{\raisebox{-2.5pt}{\text{\scriptsize\normalfont unf}\hspace*{2.5pt}}}{\Longrightarrow}}
\newcommand{\unfoldsto}{\mathrel{\sunfoldsto}}

\newcommand{\sparselappred}{\indap{\sred}{\subrule\subparse@}}

%

\newcommand{\myparagraph}[1]{\noindent\emph{{#1}.}}

\setlength{\itemsep}{-0.25ex}
\setlength{\parskip}{-0.1ex}

\raggedbottom

\begin{document}

\maketitle

\begin{abstract}
  We address a problem connected to the unfolding semantics of functional programming languages:
  give a useful characterization of those infinite \lambdaterms\ that are \lambdaletrecexpressible\  
  in the sense that they arise as infinite unfoldings of terms in \lambdaletreccal, the \lambdacalculus\ with ${\sf letrec}$.
  We provide two characterizations, using concepts we introduce for infinite \lambdaterms:
  regularity, strong regularity, and \bindcaptchains.
  It turns out that \lambdaletrecexpressible\ infinite \lambdaterms\ 
  form a proper subclass of the regular infinite \lambdaterms.
  In this paper we establish these characterizations only for expressibility in $\lambdamucal$,
  the \lambdacalculus\ with explicit $\mu$\nb-recursion.
  We show that for all infinite \lambdaterms~$\aiter$ the following are equivalent:
  (i):~$\aiter$~is $\lambdamucal$\nb-expressible;
  (ii):~$\aiter$~is strongly regular;
  (iii):~$\aiter$~is regular, and it only has finite \bindcaptchains.

  We define regularity and strong regularity for infinite \lambdaterms\ as two different
  generalizations of regularity for infinite first-order terms:
  as the existence of only finitely many subterms that are defined as the reducts of
  two rewrite systems for decomposing \lambdaterms. 
  These rewrite systems act on infinite \lambdaterms\ furnished with a bracketed prefix of abstractions 
  for collecting decomposed \lambdaabstraction{s} and keeping the terms closed under decomposition.
  They differ in which vacuous abstractions in the prefix are removed. 
  
  This report accompanies the article \cite{grab:roch:2013:RTA}, 
  and mainly differs from that by providing the proof
  of the characterization of $\lambdamucal$\nb-expressibility with \bindcaptchains. 
\end{abstract}   

\section{Introduction}
\label{sec:intro}

A syntactical core of functional programming languages
is formed by $\lambdaletreccal$, the \lambdacalculus\ with $\stxtletrec$, 
which can also be viewed as an abstract functional language.
Formally, $\lambdaletreccal$ is the extension of the \lambdacalculus\ 
by adding the construct $\stxtletrec$ for expressing recursion as well as explicit substitution. 
In a slightly enriched form (of e.g.\ Haskell's Core language) 
it is used as an intermediate language for the compilation of functional programs,
and as such it is the basis for optimizing program transformations.
A calculus that in some respects is weaker than $\lambdaletreccal$ is 
$\lambdamucal$, the \lambdacalculus\ with the binding construct $\smu$ for \murecursion.
Terms in $\lambdamucal$
can be interpreted directly as terms in $\lambdaletreccal$
(expressions $\muabs{\arecvar}{\funap{\almter}{\arecvar}}$ as 
 $\letrecin{\arecvar = \funap{\almter}{\arecvar}}{\arecvar}$),
but translations in the other direction are more complicated, and have weaker properties.  

For analyzing the execution behavior of functional programs,
and for constructing program transformations, 
expressions in \lambdaletreccal\ or in \lambdamucal\
are frequently viewed as finite representations of their unfolding semantics:
the infinite \lambdaterm\ that is obtained  
by completely unfolding all occurring recursive definitions, the $\stxtletrec$- or \mubindings, in the expression. 

In order to provide a theoretical foundation for such practical tasks,
we aim to understand how infinite \lambdaterms\ look like
that are expressible in $\lambdaletreccal$ or in $\lambdamucal$
in the sense that they are infinite unfoldings of expressions from the respective calculus.
In particular, we want to obtain useful characterizations of these classes of  
infinite \lambdaterms.
Quite clearly, any such infinite \lambdaterm\ must exhibit an, in some sense, repetitive structure
that reflects the cyclic dependencies present in the finite description.
This is because these dependencies are only `rolled out', and so are preserved, by a typically infinite, stepwise unfolding process. 

For infinite terms over a first-order signature there is a well-known concept
of repetitive structure, namely regularity. 
An infinite term is called `regular' if it has only a finite number of different subterms. 
Such infinite terms correspond to trees over ranked alphabets that are regular \cite{cour:1983}. 
Like regular trees also regular terms can be expressed finitely 
by systems of recursion equations \cite{cour:1983},
by `rational expressions' \cite[Def.\hspace*{1pt}4.5.3]{cour:1983} which correspond to $\mu$\nb-terms (see e.g.\ \cite{endr:grab:klop:oost:2011}),
or by terms using \letrecbindings.
In this context finite expressions denote infinite terms either
via a mathematical definition (a fixed-point construction, or induction on paths) 
or as the limit of a rewrite sequence consisting of unfolding steps.
Regularity of infinite terms coincides, furthermore,
with expressibility by finite terms enriched with either
of the binding constructs $\smu$ or $\stxtletrec$. 
%
It is namely well-known that both representations are equally expressive with respect to denoting infinite terms,
because a representation using $\stxtletrec$'{s} can also be transformed into one using $\smu$'s
while preserving the infinite unfolding. 

For infinite \lambdaterms, however, the situation is different:
A definition of regularity is less clear due to the presence of variable binding.
And there are infinite \lambdaterms\ that are regular in an intuitive sense,
yet apparently are not \lambdaletreccal- or \lambdamucalexpressible.
For example, the syntax trees of the infinite \lambdaterms\ $\aiter$ 
in Fig.~\ref{fig:ll:expressible} and $\biter$ in Fig.~\ref{fig:not:ll:expressible}
both exhibit a regular structure. 
But while $\aiter$ clearly is $\lambdamucal$- and $\lambdaletreccal$\nb-ex\-press\-ible
(by  
$\muabs{\arecvar}{\labs{\avar\bvar}{\lapp{\lapp{\arecvar}{\bvar}}{\avar}}}$
and 
$\letrecin{\arecvar = \labs{\avar\bvar}{\lapp{\lapp{\arecvar}{\bvar}}{\avar}}}{\arecvar}$,
respectively), 
this seems not to be the case for $\biter$:
the \lambdabindings\ in $\biter$ are infinitely entangled, 
which suggests that it cannot be the result of just an unfolding process. 
Therefore it appears that the intuitive notion of regularity
is too weak for capturing the properties of \lambdamucal- and of \lambdaletreccalexpressibility.
We note that actually these two properties coincide, 
because between $\stxtlambdamucal$\nb-terms and $\stxtlambdaletreccal$\nb-terms similar transformations
are possible as between representations with $\smu$ and with $\stxtletrec$ of infinite first-order terms
(but this will not be proved here). 

It is therefore desirable to obtain a precise, and conceptually satisfying, definition of regularity for infinite \lambdaterms\
that formalizes the intuitive notion, and that makes it possible to
prove that \lambdamucal-/\lambdaletreccalexpressible\ infinite \lambdaterms\
form only a proper subclass of the regular ones. 
Furthermore the question arises 
of whether the property of \lambdamucal-/\lambdaletreccalexpressibility\
can be captured by a stronger concept of regularity that is still natural in some sense.

We tackle both desiderata at the same time, and provide solutions,
but treat only the case of \lambdamucalexpressibility\ here. 
We introduce two concepts of regularity for infinite \lambdaterms.
For this, we devise two closely related rewrite systems
(infinitary Combinatory Reduction Systems)
that allow to `observe' infinite \lambdaterms\ by subjecting them 
to primitive decomposition steps and thereby obtaining `generated subterms'.  
Then regular, and strongly regular infinite \lambdaterms\
are defined as those 
that give rise to only a finite number of generated subterms 
in the respective decomposition system.
We establish the inclusion of the class of strongly regular
in the class of regular infinite \lambdaterms, and the fact 
that this is
a proper inclusion (by recognizing that the \lambdaterm~$\biter$ in Fig.~\ref{fig:not:ll:expressible}
is regular, but not strongly regular).
As our main result 
we show  that an infinite \lambdaterm\ is \lambdamucalexpressible\  (that is, expressible by a term in $\lambdamucal$)
if and only if it is strongly regular.
Here we say that a term $\almter$ in $\lambdamucal$ expresses an infinite \lambdaterm~$\citer$
if $\citer$ is the infinite unfolding of $\almter$.
An infinite unfolding is unique if it exists, and can be obtained 
as the limit of an infinite rewrite sequence of unfolding steps. 

This expressibility theorem is a special case of a result we reported in \cite{grab:roch:2012},
which states that strong regularity coincides with \lambdaletreccalexpressibility. 
That more general result settles a conjecture %
by Blom in \cite[Sect.\hspace*{2pt}1.2.4]{blom:2001}. 
Its proof is closely connected to the proof of the result on \lambdamucalexpressibility\ we give here,
which exhibits and highlights all the same features, but lacks 
the complexity that is inherent to the formal treatment of unfolding for terms in $\lambdaletreccal$. 

Additionally we give a result that explains the relationship between regularity and strong regularity
by means of the concept of `\bindcaptchain':
a regular infinite \lambdaterms\ is strongly regular if and only if it does not contain an infinite \bindcaptchain. 

This report is associated with the article \cite{grab:roch:2013:RTA} in the proceedings of RTA~2013.
It extends that article by providing more details on Section~\ref{sec:chains},
and it contains some changes in the exposition of the proof in Section~\ref{sec:express}.
Furthermore it contains some changes of notation%
  \footnote{For example, the symbol used for the version of the \lambdacalculus\ with abstraction prefixes 
            defined in Section~\ref{sec:regular}
            has been changed from 
            $\bs{(\lambda)}^{\hspace*{-1pt}\bs{\infty}}$ in \cite{grab:roch:2013:RTA} 
            to $\inflambdaprefixcal$ here.
            Similarly, the symbol used for a proof system 
            defined in Section~\ref{sec:proofs} 
            that is sound and complete for such terms
            has been changed from
            $\bs{(\hspace*{-0.7pt}\Lambda\hspace*{-0.7pt})^{\hspace*{-1pt}\infty}}$ in \cite{grab:roch:2013:RTA}
            to $\Lambdaprefixreginf$ here. 
            }   
as well as modifications and corrections of details. 
Also closely related is the report \cite{grab:roch:2012} about the more general case of expressibility in $\lambdaletreccal$.

\vspace{0.75ex}
\myparagraph{Overview}
  In Section~\ref{sec:regular} we introduce rewriting systems (infinitary \CRSs)
  for decomposing \lambdaterms\ into their generated subterms.
  By means of these systems we define regularity and strong regularity 
  for infinite \lambdaterms.
  In Section~\ref{sec:proofs}
  we provide sound and complete proof systems for these notions,
  that play a vital role for the proof of the main result in Section~\ref{sec:express} later.
  In Section~\ref{sec:chains}
  we develop the notion of \bindcaptchain\ in infinite \lambdaterms, 
  and show that strong regularity amounts to regularity plus the absence of infinite \bindcaptchains.
  In Section~\ref{sec:express} we establish 
  the correspondence between strong regularity and 
  \lambdamuexpressibility\ for infinite \lambdaterms.
  In the final Section~\ref{sec:conclusion} we place the results presented here
  in the context of our investigations about sharing in cyclic \lambdaterms.

\begin{figure}[tbp]
\setlength\tabcolsep{-0.5mm}
\vspace*{-3ex}

\hspace*{-2ex}
\begin{tabular}{ccccc}
\vcentered{\fig{simpleletrec-unf}}
& \vcentered{\fig{simpleletrec-chains}}
& \vcentered{\fig{simpleletrec-scopes}}
& \vcentered{\fig{simpleletrec-extscopes}}
& ~~~~\vcentered{\fig{simpleletrec-ltg}}
\\[1ex]
{syntax tree} & \parbox[t]{19ex}{binding--capturing\\ \centering chains}\hspace*{2ex} & {scopes} & {\extscope}s 
              & 
                \hspace*{2ex}
                \parbox[t]{30ex}{\centering
                                   $\sstregred$-generated subterms,\\
                                   \extscope{s} reflected on them}
\end{tabular}
\caption{\label{fig:ll:expressible}%
Strongly regular infinite \protect\lambdaterm~$\aiter$,
which can be expressed by the \protect\lambdamuterm~$\muabs{\arecvar}{\labs{\avar\bvar}{\lapp{\lapp{\arecvar}{\bvar}}{\avar}}}$.}
 \vspace*{-2ex}
\end{figure}

\section{Regular and strongly regular infinite \protect\lambdaterms} 
\label{sec:regular}

In this section we motivate the introduction of higher-order versions of regularity,
and subsequently introduce the concepts of regularity and strong regularity
for infinite \lambdaterms.

For higher-order infinite terms such as infinite \lambdaterms,
regularity has been used with as meaning
the existence of a first-order syntax tree with named variables that is regular
(e.g.\ in \cite{ario:klop:1997,ario:blom:1997}).
For example, the infinite \lambdaterms\ $\aiter$ and $\biter$ from Figures~\ref{fig:ll:expressible} and \ref{fig:not:ll:expressible} 
are regular in this sense.
However, such a definition of regularity has the drawback that 
it depends on a first-order representation 
(as syntax trees with named abstractions and variables)
that is not invariant under
$\alpha$\nb-con\-ver\-sion, the re\-na\-ming of bound variables.
Note that the syntax trees of $\aiter$ and $\biter$ 
have renaming variants that contain infinitely many variables, and that for this reason are not regular as first-order trees.
It is therefore desirable to obtain a definition of regularity that uses the condition for the first-order case
but adapts the notion of subterm to \lambdaterms, 
and that pertains to a formulation of infinite \lambdaterms\ as higher-order terms. 

Viable notions of subterm for \lambdaterms\ in a higher-order formalization require a stipulation 
on how to treat variable binding 
when stepping from a \lambdaabstraction\ $\labs{\cvar}{\cter}$ into its body~$\cter$.
For this purpose we enrich the syntax of \lambdaterms\ with a bracketed prefix of
abstractions (similar to a proof system for weak $\mu$\nb-equality in
\cite[Fig.\hspace*{1.5pt}12]{endr:grab:klop:oost:2011}),
and consider $\flabs{\cvar}{\cter}$ as a `generated subterm' of $\labs{\cvar}{\cter}$,
obtained by a \lambdaabstraction\ decomposition applied to
$\femptylabs{\labs{\cvar}{\cter}}$, where $()$ is the empty prefix. 
An expression $\flabs{\avari{1}\ldots\avari{n}}{\aiter}$ 
represents a partially decomposed \lambdaterm: 
the body $\aiter$ typically contains free occurrences of variables 
that in the original \lambdaterm\ were bound by \lambdaabstractions\
but have since been split off by decomposition steps.
The role of such abstractions has then been taken over by abstractions in the prefix $\sflabs{\avari{1}\ldots\avari{n}}$.
In this way expressions with abstraction prefixes are kept closed under decomposition steps. 

We formulate infinite \lambdaterms\ and their prefixed variants
as terms in \iCRSs\ (infinitary Combinatory Reduction Systems)
for which we draw on the literature.
By \emph{\iCRS\nb-terms} we mean $\alpha$\nb-equi\-va\-lence classes 
of \iCRS\nb-preterms that are defined by metric completion from finite \CRS\nb-terms
\cite{kete:simo:2011}. 
For denoting and manipulating infinite terms we use customary notation for finite terms.
In order to simplify our exposition we restrict to closed terms,
but at one stage (a proof system in Section~\ref{sec:express}) we allow constants in our terms. 

Note that we do not formalize $\beta$\nb-re\-duc\-tion 
since we are only concerned with a static analysis of infinite \lambdaterms\ 
and later with finite expressions that express them via unfolding. 

\begin{definition}[\iCRS-representation of $\inflambdacal$]
    \normalfont\label{def:sigs:lambdacal:lambdaletrec:CRS}  
  The \CRS\nb-signature for the \lambdacalculus\ $\lambdacal$ and the infinitary \lambdacalculus\ $\inflambdacal$
  consists of the set $\siglcCRS = \setexp{\slappCRS,\,\slabsCRS}$ 
  where $\slappCRS$ is a binary and $\slabsCRS$ a unary function symbol.
  By $\Ter{\inflambdacal}$   
  we denote the set of infinite closed \iCRS\nb-terms over $\siglcCRS$
  with the restriction that CRS-abstraction can only occur as an argument of an $\slabsCRS$-symbol.
  Note that we restrict attention to closed terms, 
  and that here and below we subsume finite \lambdaterms\ among the infinite ones.
\end{definition}

\begin{example}
  The \lambdaterm\ $\labs{\avar\bvar}{\lapp{\bvar}{\avar}}$ in \CRS-notation is
  $\labsCRS{\avar}{\labsCRS\bvar{\lappCRS{\bvar}\avar}}$.
\end{example}

\begin{definition}[\iCRS-representation of $\inflambdaprefixcal$]
  \label{def:sig:lambdaprefixcal:CRS}
  The \CRS-signature $\siglpcCRS$ for $\inflambdaprefixcal$, the version of $\inflambdacal$ 
  with bracketed abstractions, extends $\siglcCRS$ 
  by unary function symbols of arbitrary arity:
  $\siglpcCRS = \siglcCRS \cup \descsetexp{\sflabsCRS{n}}{n\in\nats}$.
  Prefixed \lambdaterm{s}
  $\flabsCRS{n}{\avari{1}}{\ldots\absCRS{\avari{n}}{\ater}}$ will informally be
  denoted by $\flabs{\avari{1}\ldots\avari{n}}{\ater}$, abbreviated as
  $\flabs{\vec{\avar}}{\ater}$, or $\femptylabs\ater$ in case of an empty prefix.
  By $\Ter{\inflambdaprefixcal}$ we denote the set of closed \iCRS-terms 
  over $\siglpcCRS$ of the form
  $\flabsCRS{n}{\avari{1}}{\ldots\absCRS{\avari{n}}{\ater}}$ 
  for some $n\in\nats$ and some term $\ater$ over the signature $\siglcCRS$ with
  possible free occurrences of $\avari{1}$, \ldots, $\avari{n}$, 
  and
  the restriction that a CRS-abstraction can only occur as an argument of an
  $\slabsCRS$-symbol.
\end{definition}

\begin{example}
  The term 
  $\flabsCRS{1}{\avar}{\labsCRS\bvar{\lappCRS{\bvar}\avar}}$
  in $\Ter{\inflambdaprefixcal}$
  can be written, in informal notation, as the prefixed \lambdaterm\
  $\flabs{\avar}{\labs{\bvar}{\lapp{\bvar}{\avar}}}$.
\end{example}

 

On these prefixed \lambdaterms, we define two rewrite strategies $\sregred$ and $\sstregred$
that deconstruct infinite \lambdaterms\ by steps that decompose
applications and \lambdaabstractions, and take place just below the marked abstractions.
They differ with respect to which vacuous prefix bindings they remove:
while $\sregred$\nb-steps drop such bindings always before 
steps over applications and $\lambda$\nb-abstractions,
$\sstregred$\nb-steps remove vacuous bindings only if they occur 
at the end of the abstraction prefix. 
These rewrite strategies will define respective notions of `generated subterm', 
and will give rise to 
                      two concepts of regularity: 
a \lambdaterm\ is called regular\discretionary{/}{}{/}strongly regular
if its set of $\sregred$-reachable\discretionary{/}{}{/}$\sstregred$-reachable generated subterms is finite.

\begin{definition}[decomposing $\inflambdaprefixcal$-terms with 
        rewrite strategies $\sregred$ and $\sstregred$] 
  \label{def:RegCRS:stRegCRS}
We consider the following \CRS-rules over
$\siglpcCRS$ in informal notation:%
  \footnote{E.g.\ explicit form of scheme 
   ($\srulep{\snlvarsucc}$):
    $\flabsCRS{n+1}{\avari{1}\ldots\avari{n+1}}{\cmetavar{\avari{1},\ldots,\avari{n}}}
      \red 
    \flabsCRS{n}{\avari{1}\ldots\avari{n}}{\cmetavar{\avari{1},\ldots,\avari{n}}}$.
            }
  \begin{align*}
    (\srulep{\slappdecompi{i}}): \hspace*{-0.5ex}
      & &
    \flabs{\avari{1}\ldots\avari{n}}{\lapp{\ateri{0}}{\aiteri{1}}}
      & {} \red
    \flabs{\avari{1}\ldots\avari{n}}{\aiteri{i}}  
      & & \hspace*{0ex} (i\in\{0,1\})
    \displaybreak[0]\\
    (\srulep{\slabsdecomp}): \hspace*{-0.5ex}
      & &
    \flabs{\avari{1}\ldots\avari{n}}{\labs{\avari{n+1}}{\aiteri{0}}}
      & {} \red
    \flabs{\avari{1}\ldots\avari{n+1}}{\aiteri{0}} 
    \displaybreak[0]\\
    (\srulep{\snlvarsucc}): \hspace*{-0.5ex}
      & &
    \flabs{\avari{1}\ldots\avari{n+1}}{\aiteri{0}}
      & {} \red
    \flabs{\avari{1}\ldots\avari{n}}{\aiteri{0}}  
      & &
      \hspace*{-5ex} (\text{if binding $\lambda\avari{n+1}$ is vacuous})
    \displaybreak[0]\\  
    (\srulep{\scompress}): \hspace*{-0.5ex}
      & &
    \flabs{\avari{1}\ldots\avari{n+1}}{\aiteri{0}}
      & {} \red
    \flabs{\avari{1}\ldots\avari{i-1}\avari{i+1}\ldots\avari{n+1}}{\aiteri{0}}   
      & & 
      \hspace*{-0ex} (\text{if bind.\ $\lambda\avari{i}$ is vacuous})
  \end{align*}
We call an occurrence $\aocc$ of a binding like a \lambdaabstraction~$\slabs{\cvar}$ or a \CRSabstraction\ $\sabsCRS{\cvar}$ 
 in a term $\citer$ \emph{vacuous}
 if $\citer$ does not contain a variable occurrence of $\cvar$ that is bound by $\aocc$.
 
The \iCRS\ with these rules induces an \ARS\ (abstract rewriting system) $\aARS$ on infinite terms over 
                                                                                                $\siglpcCRS$. 
By \decompARS\ we denote the sub-\ARS\ of $\aARS$
with its set of objects restricted to $\Ter{\inflambdaprefixcal}$.
Note that $\Ter{\inflambdaprefixcal}$ is closed under steps in \decompARS.
By $\slappdecompired{0}$, $\slappdecompired{1}$, $\slabsdecompred$,
$\scompressstregred$, $\scompressregred$ we denote the rewrite
relations induced by $\decompARS$\nb-steps with respect to rules
$\srulep{\slappdecompi{0}}$, $\srulep{\slappdecompi{1}}$,
$\srulep{\slabsdecomp}$, $\srulep{\snlvarsucc}$, $\srulep{\scompress}$.
We define $\RegCRS$ ($\stRegCRS$) as the sub-\ARS\ of \decompARS\ that arises from dropping steps that are:
\begin{itemize}
  \item due to $\srulep{\snlvarsucc}$ ($\srulep{\scompress}$), so that the prefix can be shortened only by $\srulep{\scompress}$-steps ($\srulep{\snlvarsucc}$-steps).
  \item 
        due to rules other than $\srulep{\scompress}$ ($\srulep{\snlvarsucc}$)
        but whose source is also a source of a $\srulep{\scompress}$-step ($\srulep{\snlvarsucc}$-step).
\end{itemize}
$\RegCRS$ ($\stRegCRS$)  is \emph{$\srulep{\scompress}$-eager} (\emph{$\srulep{\snlvarsucc}$}-eager) in the sense that on each path $\srulep{\scompress}$-steps ($\srulep{\snlvarsucc}$-steps) occur as soon as possible.
We denote by $\sregred$ ($\sstregred$) the rewrite strategy 
induced by $\RegCRS$ ($\stRegCRS$).%
\footnote{We use `rewrite strategy' for a relation on terms, and not for a sub-\ARS\ of a \CRS\nb-induced \ARS\ \cite{terese:2003}.}
\end{definition}

\begin{example}\label{ex:regred:stregred}
  Using the recursive equation
  $\aiter = \labs{\avar\bvar}{\lapp{\lapp{\aiter}{\bvar}}{\avar}}$
  as a description for the infinite \lambdaterm~$\aiter$ in Fig.$\,$\ref{fig:ll:expressible},
  we find that decomposition by $\sstregred$\nb-steps proceeds as follows,
  repetitively:
  \begin{center}
    $
    \femptylabs{\aiter} ~~~
    \flabs{\avar}{\labs{\bvar}{\lapp{\lapp{\aiter}{\bvar}}{\avar}}} ~~~
    \flabs{\avar\bvar}{\lapp{\lapp{\aiter}{\bvar}}{\avar}} ~~~
    \begin{array}{lll}
    \flabs{\avar\bvar}{{{\lapp{\aiter}{\bvar}}}} &
    \begin{array}{llll}
    \flabs{\avar\bvar}{\aiter} ~~~ &
    \flabs{\avar}{\aiter}  ~~~ &
    \femptylabs{\aiter} ~~~ &
    \ldots
    \\
    \flabs{\avar\bvar}{\bvar}
    \end{array}
    \\[2ex]
    \flabs{\avar\bvar}{\avar} & \;\;
    \flabs{\avar}{\avar}
    \end{array}
    $
  \end{center}
  (in a tree that branches to the right).
  Note that removal steps for vacuous bindings take place only at the end of the prefix.
  See Fig.$\,$\ref{fig:ll:expressible} right for 
  the reduction graph of $\femptylabs{\aiter}$ with displayed sorts of decomposition steps. 
  Although $\scompressstregred$\nb-steps also are $\scompressregred$\nb-steps, 
  this decomposition is not also one according to $\sregred\,$,
  because e.g.\ the step
  $\flabs{\avar\bvar}{{{\lapp{\aiter}{\bvar}}}} 
     \lappdecompired{1}
   \flabs{\avar\bvar}{\bvar}$  
  is not $\srulep{\scompress}$-eager. 
\end{example}

The rules $\srulep{\snlvarsucc}$ are related to the
de\mbox{ }Bruijn notation of \lambda-terms.
Consider
$\labs{x}{\lapp{(\labs{y}{\lapp{x}{x}})}{x}}$
which in de\mbox{ }Bruijn notation is 
$\labs{}{\lapp{(\labs{}{\lapp{1}{1}})}{0}}$
and when using Peano numerals
$\labs{}{\lapp{(\labs{}{\lapp{S(0)}{S(0)}})}{0}}$.
Now if the symbols $\snlvarsucc$ are allowed to appear `shared' and occur further up in the term as in
$\labs{}{\lapp{(\labs{}{S(\lapp{0}{0})})}{0}}$, then this term structure
corresponds to the decomposition with $\sstregred$. 

To understand the difference between $\sregred$ and $\sstregred$, consider
the notions of scope and \extscope, illustrated in
Figures~\ref{fig:ll:expressible} and \ref{fig:not:ll:expressible}. The scope of
an abstraction is the smallest connected portion of a syntax tree that contains
the abstraction itself as well as all of its bound variable occurrences.
And \extscope{s} extend scopes minimally so that the resulting areas appear properly nested.
For a precise definition we refer to
\cite[Sect.\hspace*{2pt}4]{grab:roch:2012}. 
As can be seen in Figures~\ref{fig:ll:expressible} and
\ref{fig:not:ll:expressible}, applications of $\srulep\scompress$
($\srulep\snlvarsucc$) coincide with the positions where scopes (\extscope{s})
are closed.

\begin{definition}[regular/strongly regular \lambdaterms, generated subterms]\label{def:gST:reg:streg}
Let $\ater\in\Ter\inflambdacal$. We define the sets $\gSTreg{\aiter}$ and $\gSTstreg{\aiter}$
of \emph{generated subterms} of $\aiter$ \emph{with respect to $\sregred$} and $\sstregred$:
\begin{center}
  $\gSTreg{\ater} \defdby \descsetexp{\bter\in\Ter\inflambdaprefixcal}{\femptylabs\ater \regmred \bter}$
    \hspace*{2ex}
  $\sgSTstreg(\ater) \defdby \descsetexp{\bter\in\Ter\inflambdaprefixcal}{\femptylabs\ater \stregmred \bter}$
\end{center}
We say that $\aiter$ is \emph{regular} (\emph{strongly regular})
if $\aiter$ has only finitely many generated subterms with respect to $\sregred$ 
(respectively, with respect to $\sstregred$).
\end{definition}

\begin{figure}[tbp]
\setlength\tabcolsep{0.5ex}
\vspace*{-2.5ex}

\hspace*{-7ex}  
\begin{tabular}{ccccc}
  \hspace*{5ex}
  \vcentered{\fig{pstricks/entangled.lgr}}
& \hspace*{1.75ex}
  \vcentered{\fig{entangled-chains}}
& \hspace*{1ex}
  \vcentered{\fig{entangled-scopes}}
& \hspace*{1.75ex}
  \vcentered{\fig{entangled-extscopes}}
& \hspace*{1.75ex}
  \vcentered{\fig{entangled-ltg}}
\\[1ex]
{syntax tree} & \parbox[t]{19ex}{binding--capturing\\ \centering chain} & {scopes} & {\extscope{s}} & \text{$\sregred$-generated subterms}
\end{tabular}
\caption{\label{fig:not:ll:expressible}%
  The regular infinite \protect\lambdaterm~$\biter$ that is not strongly regular,
  and not \protect\lambdamuexpressible.} 
\vspace*{-3ex}\mbox{}  
\end{figure}

\begin{example}\label{ex:def:gST:reg:streg}
  From the $\sstregred$\nb-de\-com\-po\-si\-tion in Example~\ref{ex:regred:stregred} and Fig.$\,$\ref{fig:ll:expressible}
  of the infinite \lambdaterm~$\aiter$ in Fig.$\,$\ref{fig:ll:expressible}
  it follows that 
  $\gSTstreg{\aiter}$ consists of 9 generated subterms.
  Hence $\aiter$ is strongly regular. 
   
  The situation is different for the infinite \lambdaterm~$\biter$ in Fig.\,\ref{fig:not:ll:expressible}.
  When represented as the term $\labs{\avar}{\funap{R}{\avar}}$
  together with the \CRS\nb-rule $\funap{R}{\sametavar} \red \labs{\bvar}{\lapp{\funap{R}{\bvar}}{\sametavar}}$,
  its $\sstregred$\nb-decomposition is:
  \begin{center}
    \scalebox{0.92}{
      $
      \femptylabs{\biter} ~~
      \flabs{\avar}{\funap{R}{\avar}} ~~
      \flabs{\avar\bvar}{\lapp{{\funap{R}{\bvar}}}{\avar}}
      \begin{array}{lll}
      \flabs{\avar\bvar}{{{\funap{R}{\bvar}}}} &
      \flabs{\avar\bvar\cvar}{\lapp{{\funap{R}{\cvar}}}{\bvar}}
                                               & \hspace*{-2ex}
      \begin{array}{lll}
      \flabs{\avar\bvar\cvar}{{{\funap{R}{\cvar}}}} &
      \flabs{\avar\bvar\cvar\dvar}{\lapp{{\funap{R}{\dvar}}}{\cvar}}
                                               &
      \ldots
      \\
      \flabs{\avar\bvar\avar}{\bvar} & \flabs{\avar\bvar}{\bvar}
      \end{array}
      \\
      \flabs{\avar\bvar}{\avar} & \flabs{\avar}{\avar}
      \end{array}
      $
      }
  \end{center} 
  Since here the prefixes grow unboundedly, $\biter$ 
  has infinitely many $\sstregred$\nb-generated subterms,  
%
%
and hence $\biter$ is not strongly regular. 
But its $\sregred$\nb-decomposition exhibits again a repetition
as can be seen from the reduction graph in Fig.\,\ref{fig:not:ll:expressible} on the right.  
Note that a vacuous binding from within a prefix is removed.
$\femptylabs{\biter}$ has 6 only different $\sregred$\nb-reducts.
Hence $\biter$ is regular.

For infinite \lambdaterms\ like 
$\lapp{\lapp{\lapp{(\labs{x_1}{x_1})}{(\labs{x_1}{\labs{x_2}{x_2}})}}{(\labs{x_1}{\labs{x_2}{\labs{x_3}{x_3}}})}}{\ldots}$
that do not have any regular pseudoterm syntax-trees,
both $\sstregred$-decomposition and $\sregred$\nb-decomposition yield infinitely many generated subterms,
and hence they are neither regular nor strongly regular. 
\end{example}

For a better understanding of the precise relationship between $\sregred$ and $\stregred$,
and eventually of the two concepts of generated subterm and of regularity, 
we gather a number of basic properties of these rewrite strategies and their constituents. 

\begin{proposition}\label{prop:rewprops:RegCRS:stRegCRS}
  The restrictions of the rewrite relations from Def.~\ref{def:RegCRS:stRegCRS}
  to $\Ter{\inflambdaprefixcal}$, the set of objects of $\RegARS$ and $\stRegARS$,
  have the following properties:%
  \begin{enumerate}[(i)]
    \item{}\label{prop:rewprops:RegCRS:stRegCRS:item:i}
       $\scompressregred$ is confluent, and terminating.
    \item{}\label{prop:rewprops:RegCRS:stRegCRS:item:iii} 
      $\scompressstregred \subseteq \scompressregred$. 
      Furthermore, $\scompressstregred$ is deterministic, hence confluent,
      and terminating.
    \item{}\label{prop:rewprops:RegCRS:stRegCRS:item:ii}
      $\scompressregred$ one-step commutes with $\slabsdecompred$, $\slappdecompired{0}$, $\slappdecompired{1}$, 
      and one-step sub-commutes with $\scompressstregred\,$;
      $\scompressregred$ postpones over $\slabsdecompred$, $\slappdecompired{0}$, $\slappdecompired{1}$ and $\scompressstregred$.
      Formulated symbolically, this means:\vspace{-1ex}
      \begin{align*}
        \sbinrelcomp{\scompressregconvred}{\slabsdecompred} 
          & \;\subseteq\; 
        \sbinrelcomp{\slabsdecompred}{\scompressregconvred}
          & 
        \sbinrelcomp{\scompressregconvred}{\slappdecompired{i}} 
          & \;\subseteq\; 
        \sbinrelcomp{\slappdecompired{i}}{\scompressregconvred}
          & 
        \sbinrelcomp{\scompressregconvred}{\scompressstregred} 
          & \;\subseteq\; 
        \sbinrelcomp{\scompressstregeqred}{\scompressregconveqred}    
        \\
        \sbinrelcomp{\scompressregred}{\slabsdecompred}
          & \;\subseteq\;
        \sbinrelcomp{\slabsdecompred}{\scompressregred}  
          & 
        \sbinrelcomp{\scompressregred}{\slappdecompired{i}}
          & \;\subseteq\;
        \sbinrelcomp{\slappdecompired{i}}{\scompressregred}
          &
        \sbinrelcomp{\scompressregred}{\scompressstregred} 
          & \;\subseteq\; 
        \sbinrelcomp{\scompressstregred}{\scompressregred}
      \end{align*}
    \item{}\label{prop:rewprops:RegCRS:stRegCRS:item:iv}
      Normal forms of $\sregred$ and $\sstregred$ are of the form
      $\flabs{\avar}{\avar}$, and
      $\flabs{\avari{1}\ldots\avari{n}}{\avari{n}}$, respectively. 
    \item{}\label{prop:rewprops:RegCRS:stRegCRS:item:v}
      $\sregred$ and $\sstregred$ are finitely branching,
      and, on finite terms, terminating.
  \end{enumerate}
\end{proposition}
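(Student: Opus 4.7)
The plan is to handle the five items in order; each reduces to direct inspection of the rules of $\decompARS$ combined with routine case analysis.

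For items (i) and (ii), termination in both cases is immediate: every $\srulep{\scompress}$- and every $\srulep{\snlvarsucc}$-step strictly shortens the abstraction prefix. Confluence of $\scompressregred$ would follow from local confluence (two distinct vacuous prefix bindings can be removed in either order, up to reindexing, producing the same target) together with Newman's lemma. Determinism of $\scompressstregred$ is clear because the rule is applicable only at the end of the prefix, and hence at at most one position; confluence then follows from determinism and termination. The inclusion $\scompressstregred \subseteq \scompressregred$ is by construction, since $\srulep{\snlvarsucc}$-steps are precisely those $\srulep{\scompress}$-steps that remove the last prefix binding.

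Item (iii) is a collection of elementary commutation and postponement diagrams, each verified by a case analysis on the two redex positions involved. The interactions of $\scompressregred$ with a body-decomposition step $\slabsdecompred$ or $\slappdecompired{i}$ are essentially trivial, since the removal of a vacuous prefix binding is structurally independent of a step inside the body (the vacuity condition is preserved in both directions). The only genuinely delicate case is the interaction of $\scompressregred$ with $\scompressstregred$: when the bindings targeted by the two steps differ, the diagram closes cleanly in one step each; when they coincide (both remove the same last vacuous binding), the equational versions $\scompressregconveqred$ and $\scompressstregeqred$ are what is needed to close the square, matching the statement as given.

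For item (iv), normal forms are read off the rules: a term $\flabs{\vec{\avar}}{\aiter}$ is in $\sstregred$-normal form exactly when $\aiter$ is neither an application nor an abstraction (so $\aiter$ is a variable) and the last prefix binding is non-vacuous, forcing $\aiter = \avari{n}$; for $\sregred$, additionally \emph{no} prefix binding is vacuous, and since the body is a single variable this forces the prefix to have length exactly one. For item (v), finite branching is immediate from the shape of the rules. Termination on finite terms I would prove via the measure $\mu(\flabs{\avari{1}\ldots\avari{n}}{\aiter}) \defdby 2 \length{\aiter} + n$: an $\slappdecompired{i}$-step strictly decreases $\length{\aiter}$ without changing $n$; an $\slabsdecompred$-step decreases $\length{\aiter}$ by one while increasing $n$ by one, so $\mu$ drops by one; and any $\scompressregred$- or $\scompressstregred$-step leaves $\aiter$ fixed while decreasing $n$. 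The only mild obstacle anywhere in the proof is the bookkeeping of redex positions in item (iii); everything else reduces to a short direct calculation.
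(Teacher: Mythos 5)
Your proposal is correct and follows essentially the same route as the paper, which simply asserts that all items (including the commutation/postponement diagrams) are verified by direct analysis of the rules on $\Ter{\inflambdaprefixcal}$; your case analyses — prefix-length/termination arguments, Newman's lemma for (i), structural independence of prefix deletion from body decomposition, the coinciding-binding case forcing the reflexive closures in the sub-commutation with $\scompressstregred$, and the measure $2\length{\aiter}+n$ for (v) — are exactly the details the paper leaves implicit.
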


\begin{proof}
  These properties, including those concerning commutation of steps, are easy to verify
  by analyzing the behavior of the rewrite rules in $\RegARS$ on terms of $\Ter{\inflambdaprefixcal}$.
%
\end{proof}
 
\begin{proposition}\label{prop:compress:prefix:RegARS}
  \begin{enumerate}[(i)]
    \item\label{prop:compress:prefix:RegARS:item:i}
      Let $\flabs{\vec{\avar}}{\aiter}$ be a term in $\Ter{\inflambdaprefixcal}$
      with $\length{\vec{\avar}} = n\in\nats$.
      The number of terms $\flabs{\vec{\bvar}}{\biter}$ in $\Ter{\inflambdaprefixcal}$
      with $\flabs{\vec{\bvar}}{\biter} \compressregmred \flabs{\vec{\avar}}{\aiter}$
      and $\length{\vec{\bvar}} = n+k\in\nats$ 
      is $\binom{n+k}{n}$.
    \protect\item\label{prop:compress:prefix:RegARS:item:ii}
      Let $\aset\subseteq\Ter{\inflambdaprefixcal}$ be a finite set, and $k\in\nats$.
      Then also the set of terms in $\Ter{\inflambdaprefixcal}$ that are the form $\flabs{\vec{\bvar}}{\biter}$ 
      with $\length{\vec{\bvar}}\le k$ and that have a $\scompressregmred$\nb-reduct in $\aset$
      is finite.                                            
  \end{enumerate}         
\end{proposition}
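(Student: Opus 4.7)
The plan is to exploit the very simple structure of $\scompressregred$: a single step deletes exactly one vacuous abstraction from the prefix and leaves the body untouched. Hence if $\flabs{\vec{\bvar}}{\biter} \compressregmred \flabs{\vec{\avar}}{\aiter}$, then $\biter$ and $\aiter$ agree as $\alpha$\nb-equivalence classes, and $\vec{\avar}$ is obtained from $\vec{\bvar}$ by deleting a subset of positions whose corresponding bindings are vacuous in the common body. By the confluence and termination of $\scompressregred$ from Proposition~\ref{prop:rewprops:RegCRS:stRegCRS}\,(\ref{prop:rewprops:RegCRS:stRegCRS:item:i}), the order of these deletions is immaterial; only the \emph{set} of deleted positions matters.

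For part~(\ref{prop:compress:prefix:RegARS:item:i}) I would set up a bijection between the $\compressregmred$\nb-pre-images $\flabs{\vec{\bvar}}{\biter}$ of $\flabs{\vec{\avar}}{\aiter}$ with $\length{\vec{\bvar}} = n+k$ and the $k$\nb-element subsets $S \subseteq \{1, \ldots, n+k\}$, where $S$ specifies the positions in the new prefix $\vec{\bvar}$ not inherited from $\vec{\avar}$. Given $S = \{i_1 < \cdots < i_k\}$, construct the pre-image by placing $k$ fresh vacuous bindings at positions $i_1, \ldots, i_k$ and the $n$ bindings of $\vec{\avar}$ (in their original order) at the remaining positions, with the body $\aiter$ unchanged. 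Conversely, any pre-image determines $S$ as the set of positions at which bindings do not come from $\vec{\avar}$. Since terms are taken modulo $\alpha$\nb-equivalence, the chosen names for the inserted vacuous bindings are irrelevant, while distinct subsets $S$ yield distinct $\alpha$\nb-equivalence classes because the ordered list-structure of the prefix (as witnessed by the arity-$n$ symbols $\sflabsCRS{n}$ of Def.~\ref{def:sig:lambdaprefixcal:CRS}) is an invariant of the class. This yields exactly $\binom{n+k}{k} = \binom{n+k}{n}$ pre-images.

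Part~(\ref{prop:compress:prefix:RegARS:item:ii}) then follows readily: the set in question decomposes as the union, over $\bter = \flabs{\vec{\avar}}{\aiter} \in \aset$ with $\length{\vec{\avar}} = n_{\bter} \le k$, of the $\compressregmred$\nb-pre-images of $\bter$ whose prefix length lies in $\{n_{\bter}, \ldots, k\}$. By~(\ref{prop:compress:prefix:RegARS:item:i}) each such set has at most $\sum_{j=0}^{k - n_{\bter}} \binom{n_{\bter}+j}{n_{\bter}}$ elements, so the whole is a finite union of finite sets and hence finite. The point I expect to require most care is the injectivity side of the bijection in~(\ref{prop:compress:prefix:RegARS:item:i}) — rigorously showing that two distinct choices of $S$ produce non-$\alpha$\nb-equivalent pre-images — since this hinges on the fact that the list-structure of a \CRS-abstraction prefix carries positional information that $\alpha$\nb-renaming cannot scramble, a property which should be read off directly from the definition of the syntax as $\alpha$\nb-equivalence classes of \iCRS-preterms.
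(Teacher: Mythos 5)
Your reduction of $\compressregmred$\nb-expansion to ``insert $k$ vacuous bindings into the prefix, leave the body fixed'' is right, and the surjectivity half of your correspondence is sound: every pre-image of $\flabs{\vec{\avar}}{\aiter}$ of prefix length $n+k$ arises from some choice of $k$ insertion positions, so there are \emph{at most} $\binom{n+k}{n}$ of them. The genuine gap is the injectivity claim --- exactly the point you flagged as delicate. Distinct insertion sets $S$ need not yield distinct $\alpha$\nb-equivalence classes, because when $\vec{\avar}$ itself contains vacuous bindings the inserted dummies are indistinguishable from the old ones. Concretely, take $n=k=1$ and the term $\flabs{\avar}{\labs{\cvar}{\cvar}}$, whose prefix binding $\slabs{\avar}$ is vacuous: inserting a fresh vacuous binding before or after $\avar$ gives the same term $\flabs{\avari{1}\avari{2}}{\labs{\cvar}{\cvar}}$, which is the \emph{unique} pre-image of prefix length $2$, whereas $\binom{2}{1}=2$. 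Your appeal to the list structure of the prefix does not close this hole: that structure is indeed an $\alpha$\nb-invariant, but the two candidate pre-images have identical list structure, and ``which binding was inherited from $\vec{\avar}$'' is a property of a particular reduction, not of the term, so the inverse map you describe is not well defined.

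In fact the exact count in item~(i) is only correct when no binding of $\vec{\avar}$ is vacuous; in general the pre-images of prefix length $n+k$ correspond to distributions of $k$ extra dummies over the $m+1$ gaps determined by the $m$ non-vacuous bindings of $\vec{\avar}$, giving $\binom{m+k}{m}$, so the paper's literal statement (given there without proof) is touched by the same degenerate case. None of this damages what the proposition is needed for: the bound $\le\binom{n+k}{n}$, which your surjection does establish, suffices, and your derivation of item~(ii) from it --- a finite union, over the elements of $\aset$ and the prefix lengths up to $k$, of finite sets of pre-images --- is correct as it stands. So either restrict (i) to prefixes without vacuous bindings, prove the corrected count $\binom{m+k}{m}$, or weaken (i) to the upper bound; the last option is all that the later applications (Proposition~\ref{prop:def:reg:streg} and Theorem~\ref{thm:streg:fin:bind:capt:chains}) actually use.
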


\vspace{-1ex}
We state a lemma about a close connection between 
$\sregred$- and $\sstregred$\nb-rewrite sequences.  
\vspace{-1ex}
 
\begin{lemma}\label{lem:projection:lifting:RegCRS:stRegCRS}
  \begin{enumerate}[(i)]
    \item{}\label{lem:projection:lifting:RegCRS:stRegCRS:item:projection}
      On $\Ter{\inflambdaprefixcal}$ it holds:
      $\sbinrelcomp{\scompressregconvmred}{\sstregred}
         \;\subseteq\;
       \sbinrelcomp{\sbinrelcomp{\scompressregnfred}{\sregeqred}}{\scompressregconvmred}\,$,
      where $\scompressregnfred$ denotes many-step $\scompressregred\,$-reduction to $\scompressregred\,$-normal form. 
      As a consequence of this and of
      $\sbinrelcomp{\scompressregnfred}{\sregeqred}
         \;\subseteq\;
       \sregmred\,$, 
      every finite or infinite rewrite sequence in $\Ter{\inflambdaprefixcal}$: 
      \begin{center}
        $
        \arewseq \funin
          \flabs{\vec{\avar}_0}{\aiteri{0}}
            \stregred
          \flabs{\vec{\avar}_1}{\aiteri{1}}
            \stregred
          \ldots
            \stregred  
         \flabs{\vec{\avar}_k}{\aiteri{k}}
            \stregred
         \ldots
         $
      \end{center}   
      projects over a sequence
      $\crewseq \funin \flabs{\vec{\avar}_0}{\aiteri{0}} \compressregmred \flabs{\vec{\avar}'_0}{\aiteri{0}}$
      to a rewrite sequence of the form:
      \begin{center}
        $
        \check{\arewseq} \funin
          \flabs{\vec{\avar}'_0}{\aiteri{0}}
            \regmred  
          \flabs{\vec{\avar}'_1}{\aiteri{1}}
            \regmred  
          \;\:\ldots\;\:
            \regmred  
          \flabs{\vec{\avar}'_k}{\aiteri{k}}
            \regmred  
          \ldots
          $
      \end{center}    
      in the sense that
      $\flabs{\vec{\avar}_k}{\aiteri{k}} \compressregmred \flabs{\vec{\avar}'_k}{\aiteri{k}}\,$
      for all $k\in\nats$ less or equal to the length of $\arewseq$.
    \item{}\label{lem:projection:lifting:RegCRS:stRegCRS:item:lifting} 
      On $\Ter{\inflambdaprefixcal}$ it holds: 
      $\sbinrelcomp{\scompressregmred}{\sregred}
         \;\subseteq\;
       \sbinrelcomp{\sbinrelcomp{\scompressstregnfred}{\sstregeqred}}{\scompressregmred}\,$.
      Due to this and 
      $\sbinrelcomp{\scompressstregnfred}{\sstregeqred} 
         \;\subseteq\;
        \sstregmred\,$,
      every rewrite sequence
        $  
        \arewseq \funin
          \flabs{\vec{\avar}'_0}{\aiteri{0}}
            \regred
          \flabs{\vec{\avar}'_1}{\aiteri{1}}
            \regred
          \ldots
            \regred  
         \flabs{\vec{\avar}'_k}{\aiteri{k}}
            \regred
         \ldots
         $
      in $\Ter{\inflambdaprefixcal}$
      lifts over a sequence
      $\crewseq \funin \flabs{\vec{\avar}_0}{\aiteri{0}} \compressregmred \flabs{\vec{\avar}'_0}{\aiteri{0}}$
      to a $\stregred$\nb-rewrite sequence of the form:
        $
        \Hat{\arewseq} \funin
          \flabs{\vec{\avar}_0}{\aiteri{0}}
            \stregmred 
          \flabs{\vec{\avar}_1}{\aiteri{1}}
            \stregmred 
          \;\:\ldots\;\:
            \stregmred 
          \flabs{\vec{\avar}_k}{\aiteri{k}}
            \stregmred 
          \ldots
          $
      in the sense that
      $\flabs{\vec{\avar}_k}{\aiteri{k}} \compressregmred \flabs{\vec{\avar}'_k}{\aiteri{k}}\,$
      for all $k\in\nats$ less or equal to the length of $\arewseq$.
  \end{enumerate}     
\end{lemma}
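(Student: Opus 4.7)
The plan is to prove both single-step inclusions by case analysis on the underlying rewrite step, and then obtain the rewrite-sequence statements by a straightforward induction, invoking the inclusion $\scompressstregnfred \cdot \sstregeqred \subseteq \sstregmred$ noted in the statement (and its $\sregred$-counterpart for part~(i)). Throughout, the technical core consists of the confluence and termination of $\scompressregred$ (Proposition~\ref{prop:rewprops:RegCRS:stRegCRS}(i)), determinism of $\scompressstregred$ together with the inclusion $\scompressstregred \subseteq \scompressregred$ (Proposition~\ref{prop:rewprops:RegCRS:stRegCRS}(iii)), and the one-step commutations and sub-commutations also from Proposition~\ref{prop:rewprops:RegCRS:stRegCRS}(iii).

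For part~(i), fix a triple $\flabs{\vec{\avar}'}{\aiter'} \compressregconvmred \flabs{\vec{\avar}}{\aiter} \stregred \flabs{\vec{\bvar}}{\biter}$. I would first compress $\flabs{\vec{\avar}'}{\aiter'}$ by $\compressregnfred$ to its unique $\scompressregred$-normal form $\flabs{\vec{\cvar}}{\aiter'}$; by confluence and termination this is also the $\scompressregred$-normal form of $\flabs{\vec{\avar}}{\aiter}$. Case-split on the $\stregred$-step: if it is $\scompressstregred$, the removed binding is already absorbed into $\vec{\cvar}$, so take $\sregeqred$ empty and observe $\flabs{\vec{\bvar}}{\biter} \compressregmred \flabs{\vec{\cvar}}{\aiter'}$. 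Otherwise it is $\slabsdecompred$ or $\slappdecompired{i}$, and the absence of vacuous bindings in $\vec{\cvar}$ means $\sregred$ is not forced to compress, so the analogous decomposition is a valid $\sregeqred$-step; the commutation identities from Proposition~\ref{prop:rewprops:RegCRS:stRegCRS}(iii) then align the resulting prefixes with those of $\flabs{\vec{\bvar}}{\biter}$ up to $\compressregconvmred$.

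For part~(ii), fix $\flabs{\vec{\avar}}{\aiter} \compressregmred \flabs{\vec{\avar}'}{\aiter} \sregred \flabs{\vec{\bvar}'}{\biter}$, and let $\flabs{\vec{\dvar}}{\aiter}$ be the $\scompressstregred$-normal form of $\flabs{\vec{\avar}}{\aiter}$, unique by determinism. If the $\sregred$-step is $\scompressregred$, take $\sstregeqred$ empty; the needed $\flabs{\vec{\dvar}}{\aiter} \compressregmred \flabs{\vec{\bvar}'}{\biter}$ is obtained from $\scompressstregred \subseteq \scompressregred$ and confluence of $\scompressregred$, which makes $\vec{\bvar}'$ and $\vec{\dvar}$ co-reducible as $\scompressregred$-reducts of $\vec{\avar}$. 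If the step is $\slabsdecompred$ or $\slappdecompired{i}$, then $\scompress$-eagerness forces $\vec{\avar}'$ to be in $\scompressregred$-normal form, whence $\flabs{\vec{\dvar}}{\aiter}$ admits no enabled $\scompressstregred$-step and performs the analogous decomposition as its single $\sstregeqred$-step. The main obstacle is the genuine asymmetry between where $\sregred$ and $\sstregred$ are permitted to remove vacuous bindings: a $\scompressregred$-step by $\sregred$ that targets the interior of the prefix has no corresponding $\sstregred$-step, and must be absorbed into the outer $\compressregmred$ on the right of the inclusion; the final sequence-level statements then follow by an induction on the length of $\arewseq$ in which each intermediate term $\flabs{\vec{\avar}_k}{\aiteri{k}}$ is produced by applying the single-step inclusion to its predecessor.
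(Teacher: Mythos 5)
Your part (i) is correct and is essentially the intended diagram argument: there the appeal to confluence and termination of $\scompressregred$ is made towards the \emph{unique} $\scompressregred$-normal form, so every $\scompressregred$-reduct of the source indeed reduces to it; the decomposition step from that normal form is a legitimate $\sregred$-step because no $\srulep{\scompress}$-redex is left; iterated one-step commutation closes the remaining leg; and since that final leg is $\scompressregconvmred$ (pointing from the $\sstregred$-target down to the projected term), over-compressing can do no harm. The decomposition case of part (ii) is also essentially right, although you should make the final leg explicit: since the source of a $\slabsdecompred$- or $\slappdecompired{i}$-step of $\sregred$ is a $\scompressregred$-normal form, the $\scompressstregred$-normal form of the lifted term still $\scompressregmred$-reduces to that source (uniqueness of normal forms), and then the commutation properties give that the decomposed lifted term $\scompressregmred$-reduces to the decomposed original term.

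The genuine gap is the case of part (ii) in which the $\sregred$-step is a $\scompressregred$-step. There you first pass to the $\scompressstregred$-normal form $\flabs{\vec{\dvar}}{\aiter}$ and claim $\flabs{\vec{\dvar}}{\aiter} \compressregmred \flabs{\vec{\bvar}'}{\biter}$ ``by confluence''; but confluence only yields a common reduct of the two $\scompressregred$-reducts, not that one reduces to the other, and here the needed reduction can actually fail: $\scompressstregred$-normalization may delete a trailing vacuous binding that the original $\sregred$-step keeps, and $\scompressregred$-steps never lengthen a prefix. Concretely, take $\flabs{\avar\bvar}{\labs{\cvar}{\cvar}}$ (both prefix bindings vacuous), zero $\scompressregred$-steps, and the $\sregred$-step deleting $\slabs{\avar}$, which reaches $\flabs{\bvar}{\labs{\cvar}{\cvar}}$; the $\scompressstregred$-normal form of the lifted term is $\femptylabs{\labs{\cvar}{\cvar}}$, whose only $\sstregred$-successor is $\flabs{\cvar}{\cvar}$, and neither term $\scompressregmred$-reduces to $\flabs{\bvar}{\labs{\cvar}{\cvar}}$, since compression preserves the body. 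So the refined single-step inclusion cannot be closed in this case with the mandatory first leg $\scompressstregnfred$; your own closing remark about absorbing interior deletions into the outer $\scompressregmred$ already contains the right fix, namely to perform \emph{no} $\sstregred$-step on the lifted side and absorb the entire original $\scompressregred$-step into the final leg, i.e.\ to establish and use in the induction the weaker inclusion $\sbinrelcomp{\scompressregmred}{\sregred} \subseteq \sbinrelcomp{\sstregmred}{\scompressregmred}$. This keeps the invariant $\flabs{\vec{\avar}_k}{\aiteri{k}} \compressregmred \flabs{\vec{\avar}'_k}{\aiteri{k}}$ (the bodies agree because a $\scompressregred$-step does not change them) and is all that the sequence-level lifting statement, and its later applications, require.
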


\begin{proof}
  The inclusion properties in (\ref{lem:projection:lifting:RegCRS:stRegCRS:item:projection})
  and (\ref{lem:projection:lifting:RegCRS:stRegCRS:item:lifting})
  can be shown by easy arguments with diagrams
  using the commutation properties in Proposition\,\ref{prop:rewprops:RegCRS:stRegCRS}, (\ref{prop:rewprops:RegCRS:stRegCRS:item:ii}),
  as well as (\ref{prop:rewprops:RegCRS:stRegCRS:item:i}) and (\ref{prop:rewprops:RegCRS:stRegCRS:item:iii}) from there.
\end{proof}

\vspace{-1ex}
Now we are able to establish that strong regularity implies regularity for infinite \lambdaterms. 

\vspace{-1ex}
\begin{proposition}\label{prop:def:reg:streg}
      Every strongly regular infinite \lambdaterm\ is also regular.
      Finite $\lambda$\nb-terms are both regular and strongly regular. 
\end{proposition}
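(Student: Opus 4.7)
The plan is to leverage the lifting lemma (Lemma~\ref{lem:projection:lifting:RegCRS:stRegCRS}(\ref{lem:projection:lifting:RegCRS:stRegCRS:item:lifting})) with the trivial reflexive reduction $\femptylabs{\aiter} \compressregmred \femptylabs{\aiter}$ as the starting pair. Since $\femptylabs{\aiter}$ has an empty abstraction prefix, no $\scompressregred$-step applies, so this is the only $\compressregmred$-reduction available from it. Consequently, every $\sregred$-rewrite sequence
\[
\femptylabs{\aiter} \regred \flabs{\vec{\avar}'_1}{\aiteri{1}} \regred \flabs{\vec{\avar}'_2}{\aiteri{2}} \regred \;\ldots
\]
lifts to a $\sstregred$-rewrite sequence
\[
\femptylabs{\aiter} \stregmred \flabs{\vec{\avar}_1}{\aiteri{1}} \stregmred \flabs{\vec{\avar}_2}{\aiteri{2}} \stregmred \;\ldots
\]
with $\flabs{\vec{\avar}_k}{\aiteri{k}} \compressregmred \flabs{\vec{\avar}'_k}{\aiteri{k}}$ for every $k$; in particular the bodies $\aiteri{k}$ on the two sides coincide, since $\scompressregred$-steps only remove vacuous bindings from the prefix.

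From this I would extract the key inclusion
\[
  \gSTreg{\aiter} \;\subseteq\; \descsetexpbig{\bter' \in \Ter{\inflambdaprefixcal}}{\exists\, \bter \in \gSTstreg{\aiter}.\; \bter \compressregmred \bter'} \,.
\]
Next I would note that each term $\flabs{\vec{\avar}}{\biter} \in \gSTstreg{\aiter}$ has only finitely many $\scompressregred$-reducts: such a reduct is obtained by dropping some subset of the vacuous bindings from the finite prefix $\vec{\avar}$, so their number is bounded by $2^{\length{\vec{\avar}}}$. If $\gSTstreg{\aiter}$ is finite, the right-hand side is therefore a finite union of finite sets, and hence $\gSTreg{\aiter}$ is finite, i.e., $\aiter$ is regular.

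For the second statement, I would invoke Proposition~\ref{prop:rewprops:RegCRS:stRegCRS}(\ref{prop:rewprops:RegCRS:stRegCRS:item:v}): both $\sregred$ and $\sstregred$ are finitely branching and terminate on finite terms. By König's lemma, the reduction graphs of $\femptylabs{\aiter}$ under either strategy contain only finitely many terms whenever $\aiter$ is finite, so $\gSTreg{\aiter}$ and $\gSTstreg{\aiter}$ are both finite in that case.

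The only even mildly delicate point will be the use of the lifting lemma in its iterated form: the statement already covers full sequences, but one may prefer to verify the lift by induction on the sequence length, invoking at each step the single-step inclusion $\sbinrelcomp{\scompressregmred}{\sregred} \subseteq \sbinrelcomp{\sstregmred}{\scompressregmred}$ that underlies Lemma~\ref{lem:projection:lifting:RegCRS:stRegCRS}(\ref{lem:projection:lifting:RegCRS:stRegCRS:item:lifting}). Because the initial prefix is empty, there is really nothing to reshape at the starting end, so this induction is essentially immediate and I do not anticipate any genuine difficulty.
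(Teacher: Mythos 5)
Your proposal is correct and follows essentially the same route as the paper: the first claim is obtained by lifting $\sregred$-rewrite sequences from $\femptylabs{\aiter}$ to $\sstregred$-sequences via Lemma~\ref{lem:projection:lifting:RegCRS:stRegCRS}, so that every element of $\gSTreg{\aiter}$ is a $\scompressregmred$-compression of an element of the finite set $\gSTstreg{\aiter}$, and the second claim by finite branching plus termination on finite terms together with K\H{o}nig's Lemma. The only (harmless) deviation is that where the paper invokes Proposition~\ref{prop:compress:prefix:RegARS} for the finiteness of the compression step, you bound the $\scompressregmred$-reducts of each prefixed term directly by $2^{\length{\vec{\avar}}}$, which is equally valid.
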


\begin{proof} 
  Let $\aiter$ be a strongly regular infinite \lambdaterm. Therefore $\gSTstreg{\aiter}$ is finite. 
  Since every $\sregred\hspace*{0.5pt}$-re\-write-se\-quence from $\femptylabs{\aiter}$ lifts
  to a $\sstregred\,$-re\-write-se\-quence from $\femptylabs{\aiter}$ over $\scompressregmred$\nb-com\-pres\-sion
  due to Lemma~\ref{lem:projection:lifting:RegCRS:stRegCRS},~(\ref{lem:projection:lifting:RegCRS:stRegCRS:item:lifting}),
  every term in $\gSTreg{\aiter}$ is the $\scompressregmred$\nb--com\-pres\-sion of a term in $\gSTstreg{\aiter}$. 
  Then it follows by Proposition\,\ref{prop:compress:prefix:RegARS}, (\ref{prop:compress:prefix:RegARS:item:i}),
  that also $\gSTreg{\aiter}$ is finite. Hence $\aiter$ is also regular.
  
  Let $\aiter$ be a finite \lambdaterm. Due to to Proposition\,\ref{prop:rewprops:RegCRS:stRegCRS}, (\ref{prop:rewprops:RegCRS:stRegCRS:item:v}),
  K\H{o}nig's Lemma can be applied to the reduction graph of $\femptylabs{\aiter}$ with respect to $\sstregmred$ 
  to yield that $\aiter$ has only finitely many generated subterms with respect to $\sstregmred$. Hence $\aiter$ is strongly regular.
%
%
%
\end{proof}

\section{Proving regularity and strong regularity} 
\label{sec:proofs}

As a preparation for the proof of the main expressibility result in Section~\ref{sec:express},
we introduce, in this section, proof systems for regularity and strong regularity of infinite \lambdaterms\
that formulate these notions in terms of derivability:
the systems $\Reginf$\hspace*{-1pt} and $\stReginf$\hspace*{-1pt} with typically infinite derivations,
and the systems $\Reg$, $\stReg$, and $\stRegzero$ for provability by finite derivations.
A completed derivation of $\femptylabs{\biter}$ in $\Reginf$\hspace*{-1pt} (in $\stReginf$\hspace*{-1pt}) corresponds to
the `tree unfolding' of the $\sregred\,$-reduction graph (the $\sstregred\,$-re\-duc\-tion graph) of $\femptylabs{\biter}$, 
which is a tree that describes all $\sregred\,$-(resp.\ $\stregred\,$-)re\-write sequences from $\femptylabs{\biter}$.
Closed derivations of $\femptylabs{\biter}$ in $\Reg$ (in $\stReg$, or $\stRegzero$)
correspond to finite unfoldings of the $\sregred\,$-reduction graph (the $\sstregred\,$-re\-duc\-tion graph)
into a graph with only vertical sharing. 

We start by introducing proof systems for well-formed prefixed terms, that is, terms from the set $\Ter{\inflambdaprefixcal}$).

\begin{definition}[proof systems $\Lambdaprefixreginf$, $\Lambdaprefixstreginf$ for well-formed $\inflambdacal$\nb-terms] 
    \label{def:Lambdaprefixreg:Lambdaprefixstreg}
  The proof systems defined here act on \CRS-terms over signature $\siglpcCRS$ 
  as formulas, and are Hilbert-style systems
  for finite or infinite prooftrees (of depth $\le\omega$). 
  The system $\Lambdaprefixstreginf$ 
  has the axioms ($\bvarax$) and the rules ($\lappcomp$), ($\labscomp$), and ($\Vacstreg$) 
  in Fig.\,\ref{fig:stReg:Lambdaprefixstreg:stRegzero}.
  The system $\Lambdaprefixreginf$ arises from $\Lambdaprefixstreginf$ 
  by replacing the axioms ($\bvarax$) and the rule ($\Vacstreg$)
  with the axioms ($\bvarax$) and the rule ($\Vacreg$) in Fig.\,\ref{fig:Reg:Lambdaprefixreg}, respectively.
  
  A finite or infinite derivation $\infDeriv$ in $\Lambdaprefixreginf$ (in $\Lambdaprefixstreginf$)
  is called \emph{closed} 
  if all terms in leafs of $\infDeriv$ are axioms.
  Derivability of a term $\flabs{\vec{\avar}}{\aiter}$ in $\Lambdaprefixreginf$ (in $\Lambdaprefixstreginf$),
  denoted symbolically by $\derivablein{\Lambdaprefixreginf}{\flabs{\vec{\avar}}{\aiter}}$ (resp.\ by $\derivablein{\Lambdaprefixstreginf}{\flabs{\vec{\avar}}{\aiter}}$), 
  means the existence of a closed derivation with conclusion $\flabs{\vec{\avar}}{\aiter}$. 
\end{definition}

We say that a proof system $\aproofsys$ is \emph{sound} (\emph{complete}) for a property $\saprop$ 
of infinite \lambdaterms\
if $\derivablein{\aproofsys}{\femptylabs{\aiter}}$ implies $\aprop{\aiter}$
(if $\aprop{\aiter}$ implies $\derivablein{\aproofsys}{\femptylabs{\aiter}}$)
for all infinite \lambdaterms~$\aiter\in\Ter{\inflambdacal}$. 


\begin{proposition}\label{prop:Lambdaprefixreg:Lambdaprefixstreg}
  $\Lambdaprefixreginf$ and $\Lambdaprefixstreginf$ are sound and complete 
  for 
  all infinite \lambdaterms.
  What is more, these systems are also sound and complete 
  for all prefixed infinite \lambdaterms\ among all terms over signature $\siglpcCRS$:
  for all terms $\diter\in\Ter{\siglpcCRS}$ it holds that
  $\derivablein{\Lambdaprefixreginf}{\diter}$
  if and only if
  $\diter\in\Ter{\inflambdaprefixcal}$
  (and hence $\diter$ is of the form $\flabs{\vec{\dvar}}{\diteracc}$).
\end{proposition}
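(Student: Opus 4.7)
The plan is to exploit a tight correspondence between derivations in these proof systems and decompositions of prefixed $\inflambdacal$-terms by the strategies $\sregred$ and $\sstregred$ of Section~\ref{sec:regular}. Read from conclusion to premises, each rule executes one decomposition step: $(\lappcomp)$ matches $\slappdecompired{i}$ for $i\in\setexp{0,1}$, $(\labscomp)$ matches $\slabsdecompred$, and $(\Vacreg)$ (resp.~$(\Vacstreg)$) matches $\scompressregred$ (resp.~$\scompressstregred$); the axioms $(\bvarax)$ match the normal forms of the respective strategy on $\Ter{\inflambdaprefixcal}$ as given by Proposition~\ref{prop:rewprops:RegCRS:stRegCRS}~(iv). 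Under this correspondence a closed derivation of $\diter$ is essentially the tree unfolding of the reduction graph of $\diter$ in the appropriate strategy, with every maximal branch either ending in an axiom or continuing infinitely.

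Since the second claim implies the first (take $\diter = \femptylabs{\aiter}$ for $\aiter \in \Ter{\inflambdacal}$), the work focuses on the equivalence $\derivablein{\Lambdaprefixreginf}{\diter} \iff \diter \in \Ter{\inflambdaprefixcal}$ and its analogue for $\Lambdaprefixstreginf$. For completeness, for every $\diter \in \Ter{\inflambdaprefixcal}$ I will construct a closed derivation by coinductive case analysis on the shape $\diter = \flabs{\vec{\avar}}{\aiteracc}$: apply $(\lappcomp)$ if $\aiteracc$ is an application, $(\labscomp)$ if an abstraction, and finitely many $(\Vacreg)$- or $(\Vacstreg)$-steps if $\aiteracc$ is a variable, in order to strip vacuous prefix bindings down to the axiom shape of the relevant normal form. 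In each case the premises remain in $\Ter{\inflambdaprefixcal}$: closedness is preserved because $(\labscomp)$ transfers the bound variable into the prefix, no $\sflabsCRS{n}$-symbol is introduced into the body, and the restriction that CRS-abstraction occur only as an argument of $\slabsCRS$ is visibly retained.

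For soundness I will show that every $\diter \in \Ter{\siglpcCRS}$ admitting a closed derivation lies in $\Ter{\inflambdaprefixcal}$, i.e.\ has a topmost $\sflabsCRS{n}$-prefix, is closed under that prefix, and contains no $\sflabsCRS{n}$-symbol inside the body. Because derivations may be infinite, induction on derivation depth is unavailable; instead I will verify well-formedness position-wise on $\diter$, using that at each position the rule applied there, together with the well-formedness of the finitely many immediate premises, determines both the local shape at that position and the admissibility of the sub-terms contributed by the premises.

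The main obstacle will be the soundness argument, where the invariant that no $\sflabsCRS{n}$-symbol occurs inside the body must be propagated downward through all derivation paths; for infinite derivations this requires a genuinely position-wise inspection rather than a finite induction. It must also be checked that prefixes introduced by iterated $(\Vacreg)$- or $(\Vacstreg)$-applications compose consistently with the prefix established by $(\labscomp)$. Once these invariants are in place, the remaining conditions reduce to routine case checks on the four rule shapes.
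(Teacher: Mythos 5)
Your proposal is correct and takes essentially the same route as the paper: the paper's own proof consists solely of the completeness observation that every term $\flabs{\vec{\bvar}}{\biter}\in\Ter{\inflambdaprefixcal}$ with $\biter$ not a variable is the conclusion of a rule instance, from which a closed (typically infinite) derivation is built exactly as in your coinductive case analysis, with the variable case discharged by finitely many ($\Vacreg$)/($\Vacstreg$)-steps down to an axiom. Your position-wise soundness argument spells out a direction the paper treats as immediate, and is consistent with its setup (note that $\Lambdaprefixreginf$ and $\Lambdaprefixstreginf$ impose no eagerness condition, so your lazy placement of the prefix-shortening steps is indeed admissible).
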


\begin{proof}
  For completeness of these systems note
  that every prefixed term $\flabs{\vec{\bvar}}{\biter}\in\Ter{\inflambdaprefixcal}$ with $\biter$ not a variable
  is the conclusion of an instance of a rule in these systems. 
\end{proof}

Next we define proof systems for proving
regularity and strong regularity of infinite \lambdaterms\
by means of typically infinite derivations.

\begin{definition}[proof systems $\Reginf$, $\stReginf$]
  \label{def:Reginf:stReginf}
  The proof systems $\Reginf$ and $\stReginf$ 
  have the same axioms and rules as $\Lambdaprefixreginf$ and $\Lambdaprefixstreginf$, respectively,
  but they restrict the notion of derivability. 
  A derivation $\Deriv$ in $\Lambdaprefixreginf$ (in $\Lambdaprefixstreginf$)
  is called \emph{admissible in $\Reginf$} (\emph{in $\stReginf$})
  if it contains only finitely many different terms,
  and if 
  it is \emph{\Vacregeager} (\emph{\Vacstregeager}), that is, if
  no conclusion of an instance of ($\lappcomp$) or ($\labscomp$) in $\Deriv$
  is the source of a $\scompressregred$\nb-step (a $\scompressstregred$\nb-step). 
  Derivability in $\Reginf$ (in $\stReginf$)
  means the existence of a closed admissible derivation.
\end{definition}

For $\Reginf$ and $\stReginf$ we easily obtain the following soundness and completeness results. 
 
\begin{figure}[t]
\vspace*{-2ex}  
  
\begin{center}\framebox{\begin{minipage}{350pt}\begin{center}
  \mbox{}
  \\[1ex]
  \mbox{ 
    \AxiomC{$\phantom{\flabs{\vec{\avar}\bvar}{\bvar}}$}
    \RightLabel{\bvarax}
    \UnaryInfC{$\flabs{\vec{\avar}\bvar}{\bvar}$}
    \DisplayProof
  } 
  \hspace*{3.5ex}    
  \mbox{
    \AxiomC{$ \flabs{\vec{x}\bvar}{\aiteri{0}} $}
    \RightLabel{$\labscomp$}
    \UnaryInfC{$ \flabs{\vec{\avar}}{\labs{\bvar}{\aiteri{0}}} $}
    \DisplayProof
        }
  \hspace*{3.5ex}
  \mbox{
    \AxiomC{$ \flabs{\vec{\avar}}{\aiteri{0}}$}
    \AxiomC{$ \flabs{\vec{\avar}}{\aiteri{1}}$}
    \RightLabel{$\lappcomp$}
    \BinaryInfC{$ \flabs{\vec{\avar}}{\lapp{\aiteri{0}}{\aiteri{1}}} $}
    \DisplayProof 
    }     
   \\[2.5ex]
  \mbox{
    \AxiomC{$ \flabs{\avari{1}\ldots\avari{n-1}}{\aiter} $}
    \RightLabel{\Vacstreg\ $\;$
                \parbox[c]{80pt}{\small (if the binding\\$\slabs{\avari{n}}$ is vacuous)
                                 }%
                }
    \UnaryInfC{$ \flabs{\avari{1}\ldots\avari{n}}{\aiter} $}
    \DisplayProof
        }  
  \hspace*{-2ex}
  \mbox{
    \AxiomC{$ [\flabs{\vec{\avar}}{\aiter}]^{\amarker}$}
    \noLine
    \UnaryInfC{$\Derivi{0}$}
    \noLine
    \UnaryInfC{$ \flabs{\vec{\avar}}{\aiter} $}
    \RightLabel{$\sFIX,\amarker$ $\,$
                \parbox{49pt}{\small (if $\depth{\Derivi{0}} \ge 1$)}
                }                  
    \UnaryInfC{$ \flabs{\vec{\avar}}{\aiter} $}
    \DisplayProof
    }
\end{center}\end{minipage}}\end{center} 
  \vspace*{-2.5ex}  
  \caption{\label{fig:stReg:Lambdaprefixstreg:stRegzero}%
           The proof system $\stReg$ for strongly regular $\lambda$\nb-terms.
           In the variant system $\stRegzero$ of $\stReg$,
           instances of \FIX\ are subject to the additional side-condition: 
           for all $\flabs{\vec{y}}{\bter}$ on threads in $\Derivi{0}$ from
           open marked assumptions $(\flabs{\vec{\avar}}{\ater})^u$ downwards
           it holds that $\length{\vec{y}} \ge \length{\vec{x}}$. 
           The systems $\Lambdaprefixstreginf$ and $\stReginf$ do not contain the rule $\sFIX$.
           Derivations in $\stReg$, $\stRegzero$, and $\Reginf$ must be \Vacstregeager.
           }
\end{figure}

\begin{figure}[t]
\begin{center}  
  \framebox{
\begin{minipage}{330pt}
\begin{center}
  \mbox{}
  \\[0.75ex]
  \mbox{ 
    \AxiomC{$\phantom{\flabs{\bvar}{\bvar}}$}
    \RightLabel{\bvarax}
    \UnaryInfC{$\flabs{\bvar}{\bvar}$}
    \DisplayProof
        } 
  \hspace*{4.5ex}      
  \mbox{
    \AxiomC{$ \flabs{\avari{1}\ldots\avari{i-1}\avari{i+1}\ldots\avari{n}}{\aiter} $}
    \RightLabel{\Vacreg\ $\;$
                \parbox{65pt}{\small (if the binding\\[-0.5ex]\hspace*{\fill} $\lambda\avari{i}$ is vacuous)}
                }
    \UnaryInfC{$ \flabs{\avari{1}\ldots\avari{n}}{\aiter} $} 
    \DisplayProof
        }  
\end{center}
\end{minipage}
            }
\end{center} 
  \vspace*{-2.5ex}  
  \caption{\label{fig:Reg:Lambdaprefixreg}%
           The proof system $\Reg$ for regular $\lambda$\nb-terms 
           arises from $\stReg$ 
           through replacing the rule ($\Vacstreg$) by the rule ($\Vacreg$),
           and the axiom scheme ($\bvarax$) by the more restricted version here.
           The systems $\Lambdaprefixreginf$ and $\Reginf$ do not contain the rule ($\sFIX$). 
           Derivations in $\Reg$ and $\Reginf$ must be \Vacregeager.
           }
  \vspace*{-2ex}
\end{figure}

\begin{proposition}\label{prop:Reginf:stReginf}
  \begin{enumerate}[(i)]
    \item{}\label{prop:Reginf:stReginf:item:Reginf} 
      $\Reginf$ is sound and complete for regularity of infinite \lambdaterms.  
    \item{}\label{prop:Reginf:stReginf:item:stReginf}
      $\stReginf$ is sound and complete for strong regularity of infinite \lambdaterms.
  \end{enumerate}
\end{proposition}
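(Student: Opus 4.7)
The plan is to prove (\ref{prop:Reginf:stReginf:item:Reginf}) and (\ref{prop:Reginf:stReginf:item:stReginf}) by separate soundness and completeness arguments, guided by the observation that the proof rules of $\Reginf$ and $\stReginf$ mirror the rewrite steps of $\sregred$ and $\sstregred$: the axiom $(\bvarax)$ matches variable leaves, $(\labscomp)$ and $(\lappcomp)$ mirror $\slabsdecompred$ and the two $\slappdecompired{i}$-steps (with both decomposition reducts appearing as premises of $(\lappcomp)$), and $(\Vacreg)$ respectively $(\Vacstreg)$ mirror $\scompressregred$ respectively $\scompressstregred$. The eagerness condition \Vacregeager{} (respectively \Vacstregeager) on admissible derivations mirrors the $\scompressregred$-eagerness (respectively $\scompressstregred$-eagerness) of $\sregred$ (respectively $\sstregred$). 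Thus a closed admissible derivation of $\femptylabs{\aiter}$ is, morally, a tree unfolding of the reduction graph of $\femptylabs{\aiter}$ under the relevant strategy.

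For completeness in both cases, I construct a closed admissible derivation $\Deriv$ of $\femptylabs{\aiter}$ recursively: at each formula reached, apply the rule whose conclusion has the matching structure (using $(\Vacreg)$ or $(\Vacstreg)$ whenever a suitable vacuous binding is present, to respect eagerness), and derive each premise likewise. Each formula so produced is an $\sregred$- (resp.\ $\sstregred$-)reduct of $\femptylabs{\aiter}$, and hence lies in the finite set $\gSTreg{\aiter}$ (resp.\ $\gSTstreg{\aiter}$), giving admissibility; any leaves are axioms by the shape of normal forms in Proposition~\ref{prop:rewprops:RegCRS:stRegCRS}(\ref{prop:rewprops:RegCRS:stRegCRS:item:iv}).

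Soundness of $\stReginf$ is straightforward because $\scompressstregred$ is deterministic (removing only the last vacuous prefix-binding), so $(\Vacstreg)$'s premise is the unique $\scompressstregred$-reduct of its conclusion. A simple induction on the length of a $\sstregred$-reduction $\femptylabs{\aiter} \sstregmred \cter$ shows that $\cter$ occurs as a formula of $\Deriv$ (using \Vacstregeager{} to force $(\Vacstreg)$ whenever applicable, and using that $(\lappcomp)$ and $(\labscomp)$ expose all possible decomposition reducts as premises), whence $\gSTstreg{\aiter}$ is contained in the finite formula set of $\Deriv$ and $\aiter$ is strongly regular.

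The main obstacle lies in the soundness of $\Reginf$: a formula $\bter$ with several vacuous prefix bindings has several $\scompressregred$-reducts, but each $(\Vacreg)$-instance in $\Deriv$ removes just one chosen binding, so $\gSTreg{\aiter}$ need not itself be contained in the formula set $\aset$ of $\Deriv$. The plan is to circumvent this in three steps. First, by induction on reduction length, show that $\compressregnf{\cter} \in \aset$ for every $\cter \in \gSTreg{\aiter}$: decomposition steps in an $\sregred$-reduction can only be taken at $\scompressregred$-normal forms (by the compression-eagerness of $\sregred$), and such normal forms are kept inside $\aset$ because \Vacregeager{} forces chains of $(\Vacreg)$-steps above any non-normal formula in $\Deriv$ until its unique $\scompressregred$-normal form (cf.\ Proposition~\ref{prop:rewprops:RegCRS:stRegCRS}(\ref{prop:rewprops:RegCRS:stRegCRS:item:i})) is reached. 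Second, bound the prefix length of any $\cter \in \gSTreg{\aiter}$ by $N+1$, where $N$ is the maximum prefix length occurring in $\aset$: tracing along a reduction, $\cter$ sits at most one $\slabsdecompred$-step past an immediately preceding $\scompressregred$-normal form, which by the first step lies in $\aset$, while any subsequent $\scompressregred$-steps only shorten the prefix. Third, invoke Proposition~\ref{prop:compress:prefix:RegARS}(\ref{prop:compress:prefix:RegARS:item:ii}) with the finite set $\aset$ and $k := N+1$ to deduce that the set of terms in $\Ter{\inflambdaprefixcal}$ with prefix length at most $N+1$ whose $\scompressregred$-normal form lies in $\aset$ is finite; by the preceding two steps this set contains $\gSTreg{\aiter}$, yielding regularity of $\aiter$.
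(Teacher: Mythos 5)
Your proposal is correct, and for part (\ref{prop:Reginf:stReginf:item:stReginf}) as well as for both completeness claims it follows essentially the paper's route: a \Vacstregeager\ derivation in $\Lambdaprefixstreginf$ with conclusion $\femptylabs{\aiter}$ is precisely the tree unfolding of the $\sstregred$-reduction graph, so finiteness of its formula set and finiteness of $\gSTstreg{\aiter}$ coincide, and completeness follows because every term admits such an eager derivation. Where you genuinely diverge is the soundness half of part (\ref{prop:Reginf:stReginf:item:Reginf}): the paper disposes of it with ``(i) can be seen analogously'', whereas you correctly observe that the analogy is not literal, since $\scompressregred$ is non-deterministic, a \Vacregeager\ derivation commits to one order of removing vacuous prefix bindings, and so its formula set may omit $\sregred$-reducts arising from other removal orders. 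Your repair --- every $\sregred$-reduct lies on a compression chain issuing from a formula of the derivation (decomposition steps of $\sregred$ only occur at $\scompressregred$-normal forms, and eagerness keeps those normal forms, and the premises of the decomposition instances, inside the formula set), hence has a $\scompressregmred$-reduct in the finite formula set and prefix length bounded in terms of the maximal prefix length occurring there, after which Proposition~\ref{prop:compress:prefix:RegARS},~(\ref{prop:compress:prefix:RegARS:item:ii}) yields finiteness of $\gSTreg{\aiter}$ --- is sound, and it is the same device the paper itself uses for Proposition~\ref{prop:def:reg:streg} and in the proof of Theorem~\ref{thm:streg:fin:bind:capt:chains}. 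So the trade-off is: the paper's proof is terser but leaves exactly this non-determinism issue implicit, while your version makes the required extra finiteness argument explicit at the cost of some length; both establish the proposition.
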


\begin{proof}
  We argue only for (\ref{prop:Reginf:stReginf:item:stReginf}), since (\ref{prop:Reginf:stReginf:item:Reginf}) can be seen analogously.
  Every \Vacstregeager\ derivation $\infDeriv$ in $\Lambdaprefixstreginf$ with conclusion $\femptylabs{\aiter}$
  assembles the maximal $\sstregred$\nb-rewrite sequences from $\femptylabs{\aiter}$ in the following sense: 
  the steps of every such rewrite sequence
  correspond to the steps through $\infDeriv$ along a thread from the conclusion upwards.
  Therefore if $\infDeriv$ is an admissible derivation in $\stReginf$, and hence contains only finitely many terms,
  then $\gSTstreg{\aiter}$ is finite. Since every term $\femptylabs{\aiter}$ in $\Ter{\lambdaprefixcal}$ 
  has a \Vacstregeager\ derivation in $\Lambdaprefixstreginf$,
  the converse holds as well.
\end{proof}

Finally we introduce proof systems for proving regularity and strong regularity of infinite \lambdaterms\
by means of finite derivations. Derivations in these systems are able to detect the cyclic structure 
of a regular or strongly regular \lambdaterm, and correspondingly, cyclicity in 
$\sregred$- and $\sstregred$\nb-re\-write sequences that decompose the term.
These proof systems are reminiscent of coinductively motivated proof systems
such as the ones for recursive type equality and subtyping by Brandt and Henglein \cite{bran:heng:1998}
(proof-theoretic connections with more traditional proof systems have been studied in \cite{grab:2005a}).

\begin{definition}[proof systems $\Reg$, $\stReg$, and $\stRegzero$]
  The natural-deduction style proof system $\stReg$ 
  has the axioms and rules in Fig.\,\ref{fig:stReg:Lambdaprefixstreg:stRegzero}.
  Its variant $\stRegzero$ demands an additional side-condition on instances of the rule ($\sFIX$) as described there.
  The system $\Reg$ arises from $\stReg$ by dropping the rule ($\Vacstreg$),
  and restricting the axioms to the axioms ($\bvarax$) in Fig.\,\ref{fig:Reg:Lambdaprefixreg}.
  
  A derivation in one of these systems is called \emph{closed}
  if it does not contain any undischarged marker assumptions 
  (discharging assumptions is indicated by assigning the appertaining assumption markers
   to instances of $\sFIX$, see Fig.\,\ref{fig:stReg:Lambdaprefixstreg:stRegzero}).  
  Derivability in $\Reg$ (in $\stReg$ or in $\stRegzero$)
  means the existence of a closed, \Vacregeager\ (\Vacstregeager), finite derivation.
\end{definition}

The proposition below explains that the side-condition `$\depth{\Derivi{0}}\ge 1$' on subderivations of FIX-in\-stan\-ces
guarantees a `guardedness' property for threads in derivations in these systems. 
 
\begin{proposition}\label{prop:guardedness:Reg:stReg:stRegzero}
  Let $\Deriv$ be a derivation in $\Reg$, $\stReg$, or $\stRegzero$.
  Then for every instance $\ainst$ of the rule \FIX\ in $\Deriv$ it holds: 
  every thread from $\ainst$ upwards to a marked assumption that is discharged at $\ainst$
  passes at least one instance of a rule {\normalfont ($\labscomp$)} or {\normalfont ($\lappcomp$)}.
\end{proposition}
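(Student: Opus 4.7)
The plan is to argue by contradiction. I will consider an arbitrary thread $\theta$ from a FIX-instance $\ainst$ (with conclusion $\flabs{\vec{\avar}}{\aiter}$, set $n \defdby \length{\vec{\avar}}$) upward to a marked assumption $[\flabs{\vec{\avar}}{\aiter}]^{\amarker}$ discharged at $\ainst$, and suppose that $\theta$ contains no instance of $(\labscomp)$ or $(\lappcomp)$.

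The key invariant is the prefix length. By inspection of the rule schemes in Fig.~\ref{fig:stReg:Lambdaprefixstreg:stRegzero} and Fig.~\ref{fig:Reg:Lambdaprefixreg}, reading each rule upward (from conclusion to premise): both $(\Vacreg)$ and $(\Vacstreg)$ strictly decrease the prefix length by one; $(\lappcomp)$ leaves it unchanged; $(\labscomp)$ increases it by one; and FIX, whose unique premise is the conclusion of its own subderivation $\Derivi{0}$, leaves it unchanged. Since the bottom of $\theta$ (the conclusion of $\ainst$) and the top of $\theta$ (the marked assumption) carry the same term $\flabs{\vec{\avar}}{\aiter}$, and hence the same prefix length $n$, the net change of prefix length along $\theta$ is zero.

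Under the hypothesis that $\theta$ avoids $(\labscomp)$ and $(\lappcomp)$, the only rule instances remaining on $\theta$ are FIX, $(\Vacreg)$, and $(\Vacstreg)$. A zero net budget, combined with the fact that every $(\Vacreg)/(\Vacstreg)$ step strictly decreases the length while FIX preserves it, forces $(\Vacreg)/(\Vacstreg)$ not to appear on $\theta$ at all, so $\theta$ consists exclusively of FIX instances terminating at the marked assumption.

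To close the argument I will invoke the side condition $\depth{\Derivi{0}} \ge 1$ on FIX. Upon passing through any FIX on $\theta$, the thread enters its subderivation $\Derivi{0}$, whose depth is at least one, so the formula reached immediately above is the conclusion of a further rule instance and not a leaf. By the restriction established above, this rule instance must again be a FIX, which iterated generates an infinite ascending chain of FIX instances along $\theta$, contradicting the finiteness of $\Deriv$. The same reasoning applies uniformly to $\Reg$, $\stReg$, and $\stRegzero$, since it relies only on the common strict-decrease effect of $(\Vacreg)/(\Vacstreg)$ and on the $\depth$-side-condition on FIX; the further side condition of $\stRegzero$ and the \Vacregeager/\Vacstregeager requirements are not needed. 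The only point requiring care is the precise bookkeeping of prefix lengths under each rule, but no substantive obstacle is expected.
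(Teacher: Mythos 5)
Your proof is correct and takes essentially the same route as the paper's: both rest on the prefix-length bookkeeping along the thread (only ($\labscomp$) increases the prefix, ($\Vacreg$)/($\Vacstreg$) decrease it, ($\lappcomp$) and \FIX\ preserve it, and the two endpoints carry the same term) combined with the side-condition $\depth{\Derivi{0}}\ge 1$ on \FIX. The paper argues directly via the topmost \FIX\ instance on the thread, whereas you argue by contradiction, first excluding the vacuous-binding rules and then using the side-condition together with finiteness of the derivation to rule out an all-\FIX\ thread; this is only an inessential rearrangement.
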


\begin{proof}
  Let $\Deriv$ be a derivation in $\Reg$, as the argument is analogous for $\stReg$ and $\stRegzero$.
  Let $\ainst$ be an instance of \FIX\ in $\Deriv$,
  and $\apath$ a thread from 
  the conclusion $\flabs{\vec{\bvar}}{\biter}$ 
                 of $\ainst$
  to a marked assumption $(\flabs{\vec{\bvar}}{\biter})^{\amarker}$ 
                         that is discharged at $\ainst$. 
  Then due to the side-condition on the topmost instance $\binst$ of \FIX\ passed on $\apath$ there is at least one
  instance of a rule ($\labscomp$), ($\lappcomp$), or ($\Vacreg$) passed on $\apath$
  above $\binst$. We are done unless that is an instance of ($\Vacreg$). But then there must\enlargethispage{3ex}
  also be an instance of ($\labscomp$) on $\apath$, since ($\Vacreg$) decreases the prefix length,
  only ($\labscomp$) increases it, and the prefix lengths in the formula at the start and at the end of $\apath$ are the same. 
\end{proof}

\pagebreak[4]
\begin{example}\label{ex:Reg:stReg}
\begin{enumerate}[(i)]
  \item{}\label{ex:Reg:stReg:item:stReg}
    The following are two derivations in $\stReg$ of different efficiency of
    the infinite \lambdaterm~$\ater$ from Fig.\,\ref{fig:ll:expressible} when
    represented by the recursive equation
    $\ater = \labs{\avar\bvar}{\lapp{\lapp{\ater}{\bvar}}{\avar}}\,$:
    \vspace{-1ex}
    \begin{equation*}
      \hspace*{-6.5ex}
      \begin{aligned}[c]
      \scalebox{0.9}{  
      \AxiomC{$ (\femptylabs{\ater})^{\amarker} $}
      \RightLabel{$\Vacstreg$}
      \UnaryInfC{$ \flabs{\avar}{\ater} $}
      \RightLabel{$\Vacstreg$}
      \UnaryInfC{$ \flabs{\avar\bvar}{\ater} $}
      \AxiomC{\mbox{}}
      \RightLabel{$\bvarax$}
      \UnaryInfC{$ \flabs{\avar\bvar}{\bvar} $}
      \RightLabel{$\lappcomp$}
      \BinaryInfC{$ \flabs{\avar\bvar}{\lapp{\ater}{\bvar}} $}
      \AxiomC{\mbox{}}
      \RightLabel{$\bvarax$}
      \UnaryInfC{$ \flabs{\avar}{\avar} $}
      \RightLabel{$\Vacstreg$}
      \UnaryInfC{$ \flabs{\avar\bvar}{\avar} $}
      \RightLabel{$\lappcomp$}
      \BinaryInfC{$ \flabs{\avar\bvar}{\lapp{\lapp{\ater}{\bvar}}{\avar}} $}
      \RightLabel{$\labscomp$}
      \UnaryInfC{$ \flabs{\avar}{\labs{\bvar}{\lapp{\lapp{\ater}{\bvar}}{\avar}}} $}
      \RightLabel{$\labscomp$}
      \UnaryInfC{$ \femptylabs{}{\labs{\avar\bvar}{\lapp{\lapp{\ater}{\bvar}}{\avar}}} $}
      \RightLabel{$\sFIX,\amarker$}
      \UnaryInfC{$ \femptylabs{}{\ater} $}
      \DisplayProof
        }
      \end{aligned}
      \hspace*{1ex}
      \begin{aligned}[c]
      \scalebox{0.9}{  
      \AxiomC{$ (\flabs{\avar}{\labs{\bvar}{\lapp{\lapp{\ater}{\bvar}}{\avar}}})^{\amarker} $}
      \RightLabel{$\labscomp$}
      \UnaryInfC{$ \femptylabs{\ater} $}
      \RightLabel{$\Vacstreg$}
      \UnaryInfC{$ \flabs{\avar}{\ater} $}
      \RightLabel{$\Vacstreg$}
      \UnaryInfC{$ \flabs{\avar\bvar}{\ater} $}
      \AxiomC{\mbox{}}
      \RightLabel{$\bvarax$}
      \UnaryInfC{$ \flabs{\avar\bvar}{\bvar} $}
      \RightLabel{$\lappcomp$}
      \BinaryInfC{$ \flabs{\avar\bvar}{\lapp{\ater}{\bvar}} $}
      \AxiomC{\mbox{}}
      \RightLabel{$\bvarax$}
      \UnaryInfC{$ \flabs{\avar}{\avar} $}
      \RightLabel{$\Vacstreg$}
      \UnaryInfC{$ \flabs{\avar\bvar}{\avar} $}
      \RightLabel{$\lappcomp$}
      \BinaryInfC{$ \flabs{\avar\bvar}{\lapp{\lapp{\ater}{\bvar}}{\avar}} $}
      \RightLabel{$\labscomp$}
      \UnaryInfC{$ \flabs{\avar}{\labs{\bvar}{\lapp{\lapp{\ater}{\bvar}}{\avar}}} $}
      \RightLabel{$\sFIX,\amarker$}
      \UnaryInfC{$ \flabs{\avar}{\labs{\bvar}{\lapp{\lapp{\ater}{\bvar}}{\avar}}} $}
      \RightLabel{$\labscomp$}
      \UnaryInfC{$ \femptylabs{}{\ater} $}
      \DisplayProof
             }
      \end{aligned}
    \end{equation*}
    Note that only the left derivation is one in $\stRegzero$, because
    the right one contains a term with shorter prefix than the discharged assumption on a thread to the instance of $\sFIX$. 
  \item{}\label{ex:Reg:stReg:item:Reg}
    The infinite \lambdaterm\ from Fig.\,\ref{fig:not:ll:expressible},
    denoted by the term $\femptylabs{\labs{\avar}{\funap{R}{\avar}}}$ 
    and generated by the \CRS\nb-rule $\funap{R}{\sametavar} \red \labs{\avar}{\lapp{\funap{R}{\avar}}{\sametavar}}$
    is derivable in $\Reg$ by the closed derivation on the left, but it is not derivable in $\stReg\,$:
\vspace{-2ex}
\begin{equation*} 
  \hspace*{-2ex}
  \begin{aligned}[c]
    \scalebox{0.9}{
      \AxiomC{$  (\overbrace{\flabs{\bvar}{\funap{R}{\bvar}}}^{{} = \flabs{\avar}{\funap{R}{\avar}}})^{\amarker}  $}
      \RightLabel{$\Vacreg$}
      \UnaryInfC{$ \flabs{\avar\bvar}{\funap{R}{\bvar}} $}
      \AxiomC{\mbox{}}
      \RightLabel{$\bvarax$}
      \UnaryInfC{$ \flabs{\avar}{\avar} $}
      \RightLabel{$\Vacreg$}
      \UnaryInfC{$ \flabs{\avar\bvar}{\avar}$}
      \RightLabel{$\lappcomp$}
      \BinaryInfC{$ \flabs{\avar\bvar}{\lapp{\funap{R}{\bvar}}{\avar}} $}
      \RightLabel{$\labscomp$}
      \UnaryInfC{$\flabs{\avar}{\labs{\bvar}{\lapp{\funap{R}{\bvar}}{\avar}}}$}
      \RightLabel{\sFIX, $\amarker$}
      \UnaryInfC{$ \flabs{\avar}{\funap{R}{\avar}} $}
      \RightLabel{$\labscomp$}
      \UnaryInfC{$ \femptylabs{\labs{\avar}{\funap{R}{\avar}}} $}
      \DisplayProof
          }
  \end{aligned}
    \hspace*{1ex}
  \begin{aligned}[c]
    \scalebox{0.9}{
      \AxiomC{$\vdots $}
      \noLine
      \UnaryInfC{$ \flabs{\avar\bvar\cvar\dvar\evar}{\lapp{\funap{R}{\evar}}{\dvar}} $}
      \RightLabel{$\labscomp$}
      \UnaryInfC{$ \flabs{\avar\bvar\cvar\dvar}{\labs{\evar}{\lapp{\funap{R}{\evar}}{\dvar}}} $}
      \AxiomC{\mbox{}}
      \RightLabel{$\bvarax$}
      \UnaryInfC{$ \flabs{\avar\bvar\cvar}{\cvar} $}
      \RightLabel{$\Vacstreg$}
      \UnaryInfC{$ \flabs{\avar\bvar\cvar\dvar}{\cvar}$}  
      \RightLabel{$\lappcomp$}
      \BinaryInfC{$ \flabs{\avar\bvar\cvar\dvar}{\lapp{\funap{R}{\dvar}}{\cvar}} $}
      \RightLabel{$\labscomp$}
      \UnaryInfC{$ \flabs{\avar\bvar\cvar}{\labs{\dvar}{\lapp{\funap{R}{\dvar}}{\cvar}}} $}
      \AxiomC{\mbox{}}
      \RightLabel{$\bvarax$}
      \UnaryInfC{$ \flabs{\avar\bvar}{\bvar} $}
      \RightLabel{$\Vacstreg$}
      \UnaryInfC{$ \flabs{\avar\bvar\cvar}{\bvar}$}  
      \RightLabel{$\lappcomp$}
      \BinaryInfC{$ \flabs{\avar\bvar\cvar}{\lapp{\funap{R}{\cvar}}{\bvar}} $}
      \RightLabel{$\labscomp$}
      \UnaryInfC{$ \flabs{\avar\bvar}{ \labs{\cvar}{\lapp{\funap{R}{\cvar}}{\bvar}} } $}
      \AxiomC{\mbox{}}
      \RightLabel{$\bvarax$}
      \UnaryInfC{$ \flabs{\avar}{\avar} $}
      \RightLabel{$\Vacstreg$}
      \UnaryInfC{$ \flabs{\avar\bvar}{\avar}$}
      \RightLabel{$\lappcomp$}
      \BinaryInfC{$ \flabs{\avar\bvar}{\lapp{\funap{R}{\bvar}}{\avar}} $}
      \RightLabel{$\labscomp$}
      \UnaryInfC{$ \flabs{\avar}{\labs{\bvar}{\lapp{\funap{R}{\bvar}}{\avar}}}$}
      \RightLabel{$\labscomp$}
      \UnaryInfC{$ \femptylabs{\labs{\avar}{\funap{R}{\avar}}}$}
      \DisplayProof
          }
  \end{aligned}
\end{equation*}
    The latter follows from the infinite prooftree on the right,
    the result of a bottom-up proof search in $\stReg$, which is a derivation in 
    $\Lambdaprefixstreginf$ but not in $\stReginf$, since, as it does not contain repetitions,
    the rule $\sFIX$ 
                     cannot be used to cut off repetitive subderivations.
  \end{enumerate}
\end{example}

Finally, we can link derivability in $\Reg$ and $\stReg$ to regularity and strong regularity.

\begin{theorem}
   \label{thm:Reg:stReg:stRegzero}
  \begin{enumerate}[(i)]
    \item{}\label{thm:Reg:stReg:stRegzero:item:Reg} 
      $\Reg$ is sound and complete for regularity of infinite \lambdaterms.
    \item\label{thm:Reg:stReg:stRegzero:item:stReg:stRegzero} 
      $\stReg$ and $\stRegzero$ are sound and complete for strong regularity of infinite \lambdaterms.
  \end{enumerate}
\end{theorem}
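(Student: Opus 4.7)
The plan is to obtain both directions of (i) and (ii) from the corresponding soundness and completeness of the infinitary systems $\Reginf$ and $\stReginf$ already established in Proposition~\ref{prop:Reginf:stReginf}, by setting up a correspondence between closed finite derivations in $\Reg$, $\stReg$, $\stRegzero$ on one hand, and closed admissible (i.e.\ finitely-many-terms) derivations in $\Reginf$ and $\stReginf$ on the other hand. The correspondence is given by unfolding and folding of $\sFIX$-instances.

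For soundness, given a closed finite derivation $\Deriv$ in $\Reg$ (in $\stReg$ or in $\stRegzero$) with conclusion $\femptylabs{\aiter}$, I would define its unfolding $\infDeriv$ coinductively: iteratively replace every instance $\ainst$ of $\sFIX$ in $\Deriv$ with a copy of its premise subderivation $\Derivi{0}$, after first substituting such a copy of $\Derivi{0}$ for every marked assumption in $\Derivi{0}$ that is discharged at $\ainst$. By the guardedness property (Proposition~\ref{prop:guardedness:Reg:stReg:stRegzero}), each unfolding step strictly increases the distance on every thread from the root to the next (unfolded) assumption, so the limit $\infDeriv$ is a well-defined prooftree of depth at most $\omega$. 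The formulas in $\infDeriv$ are precisely those appearing in $\Deriv$, hence $\infDeriv$ contains only finitely many distinct terms; and $\Vacregeager$ness (resp.\ $\Vacstregeager$ness) is inherited, since it is a local condition on individual instances of $\lappcomp$ and $\labscomp$, which are preserved verbatim by unfolding. Applying Proposition~\ref{prop:Reginf:stReginf} to $\infDeriv$ then gives regularity (resp.\ strong regularity) of $\aiter$.

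For completeness in the cases of $\Reg$ and $\stReg$, given a regular (strongly regular) infinite \lambdaterm~$\aiter$, Proposition~\ref{prop:Reginf:stReginf} supplies a closed admissible derivation $\infDeriv$ of $\femptylabs{\aiter}$ in $\Reginf$ (in $\stReginf$), using only finitely many distinct formulas. I would fold $\infDeriv$ into a finite derivation by a top-down traversal from the root: whenever a formula $F$ is encountered at a node $\aocc$ on the current thread at which $F$ has already occurred at some strictly lower ancestor $\aocc'$, I insert a fresh $\sFIX$-instance with fresh marker $\amarker$ whose conclusion is $F$ placed at $\aocc'$ and replace the subtree above $\aocc$ by the discharged assumption $F^{\amarker}$. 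Since only finitely many distinct formulas occur in $\infDeriv$, the depth of the resulting tree is bounded, and the procedure terminates in a finite closed derivation in $\Reg$ (in $\stReg$). The guardedness side-condition $\depth{\Derivi{0}} \ge 1$ is automatic because the segment between $\aocc'$ and $\aocc$ cannot be empty; and $\Vacregeager$/$\Vacstregeager$ness carries over since the $\lappcomp$- and $\labscomp$-instances are unchanged.

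The principal obstacle is the additional side-condition on $\sFIX$ in $\stRegzero$, which demands that every formula on the thread from a discharged assumption down to its $\sFIX$-instance have prefix length at least that of the $\sFIX$-conclusion. Folding at an arbitrary repetition need not respect this. To overcome the obstacle, I would not fold at the first repetition but choose the fold points as follows: along each infinite thread $F_0, F_1, F_2, \ldots$ of $\infDeriv$ the prefix lengths are nonnegative integers drawn from a finite set, so their $\liminf$ $m$ is attained infinitely often by some single formula $F$ (there being only finitely many), and there is an index $N$ such that $\length{\text{prefix}(F_n)} \ge m$ for all $n \ge N$. Picking two occurrences $F_i = F_j = F$ with $N \le i < j$ and folding at this pair yields an $\sFIX$-instance all of whose intermediate formulas satisfy the side condition by construction. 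Iterating this minimum-prefix-length fold along all remaining infinite threads, which is possible because $\infDeriv$ contains only finitely many distinct terms, produces the desired closed derivation in $\stRegzero$. Everything else is routine prooftree surgery.
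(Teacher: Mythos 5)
Your proposal is correct and follows essentially the same route as the paper: soundness via unfolding of $\sFIX$-instances (with Proposition~\ref{prop:guardedness:Reg:stReg:stRegzero} guaranteeing a well-defined limit derivation containing only the finitely many formulas of the finite derivation), completeness via folding admissible derivations of $\Reginf$/$\stReginf$ at repetitions on threads, and, for $\stRegzero$, folding at a repetition of a formula of minimal prefix length among those occurring infinitely often on a thread so that the extra side-condition holds. This matches the paper's proof, including its treatment of the $\stRegzero$ side-condition via a ``stricter'' notion of repetition.
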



\begin{proof}
  For (\ref{thm:Reg:stReg:stRegzero:item:Reg}),
  in view of Proposition\,\ref{prop:Reginf:stReginf}, (\ref{prop:Reginf:stReginf:item:Reginf}), it suffices 
  to be able to transform closed, admissible derivations in $\Reginf$ into closed derivations in $\Reg$, and vice versa.
  Every closed derivation $\Deriv$ in $\Reg$ can be unfolded by a stepwise, typically infinite process into a closed derivation in $\Lambdaprefixreginf$:
  in every step the subderivation of a bottommost instance $\ainst$ of $\sFIX$
  is transferred to above each of the marked assumptions that are discharged at $\ainst$, and the original instance of $\sFIX$ is removed. 
  If this process is infinite, then due to Proposition\,\ref{prop:guardedness:Reg:stReg:stRegzero}
  it always eventually increases the size of
  the part of the derivation below the bottommost occurrences of $\sFIX$. 
  Hence in the limit it produces a closed, \Vacregeager\ derivation in $\Lambdaprefixreginf$
  that contains only finitely many terms (only those in $\Deriv$), and thus is admissible in $\Reginf\hspace*{-2pt}$.
  Conversely, every admissible, closed derivation $\infDeriv$ in $\Reginf$ can be `folded'
  into a finite closed derivation in $\Reg$ by introducing $\sFIX$\nb-instances to cut off the derivation
  above the upper occurrence of a repetition. This yields a finite derivation 
  since due to admissibility of $\infDeriv$ in $\Reginf$
  every sufficiently long thread contains a repetition, and then K\H{o}nig's Lemma can be applied. 
  
  For $\stReg$ in (\ref{thm:Reg:stReg:stRegzero:item:stReg:stRegzero}) it can be argued analogously, 
  using Proposition$\,$\ref{prop:Reginf:stReginf}, (\ref{prop:Reginf:stReginf:item:stReginf}),
  and unfolding/folding between closed derivations in $\stReg$ and closed, admissible derivations in $\stReginf$.
  Soundness of $\stRegzero$ follows from soundness of $\stReg$.
  For completeness of $\stRegzero$, note that every closed, admissible derivation $\infDeriv$ in $\stReginf$ can be `folded'
  into a closed derivation of $\stRegzero$ by using a stricter version of repetition of terms:
  distinct occurrences of a term $\flabs{\vec{\bvar}}{\biter}$ on a thread of a prooftree form such a repetition only
  if all formulas in between have an equally long or longer abstraction prefix. 
  Since $\infDeriv$ is admissible, on every infinite thread $\athread$ of $\infDeriv$ there must occur 
  such a stricter form of repetition, namely of a term with the shortest abstraction prefix among the terms
  that occur infinitely often on $\athread$. 
\end{proof}

\section{Binding--Capturing Chains}
\label{sec:chains}

In this section we develop a characterization 
of strongly regular infinite \lambdaterms\ 
through a property of their term structure,
concerning `\bindcaptchains' on positions of the term. 
While not needed for obtaining the result concerning \lambdaletrecexpressibility\
in Section~\ref{sec:express}, 
we think that this characterization is of independent interest.

\Bindcaptchains\ originate from the notion of `gripping' due to {M}elli\`es \cite{mell:96}, 
and from techniques concerning the notion of `holding' of redexes
developed by van~Oostrom \cite{oost:97}.
In \cite{endr:grab:klop:oost:2011} they have been used
to study $\alpha$\nb-conversion-avoi\-ding $\mu$\nb-unfolding. 
 
Technically, \bindcaptchains\ are alternations of two kinds of links
between positions of variable occurrences and \lambdaabstractions\ (called binders below) in a \lambdaterm: 
`binding links' from a \lambdaabstraction\ downward to the variable occurrences it binds,
and
`capturing links' from a variable occurrence upward to \lambdaabstractions\ 
that do not bind it, but are situated on the upward path to its binding \lambdaabstraction. 
We formalize these links by binding and capturing relations,
which are then used to define \bindcaptchains. 





\begin{definition}[binding, capturing]\label{def:bind:iscapturedby}
  Let $\aiter\in\Ter{\inflambdacal}$. 
  On the set $\Positions{\aiter}$ of positions of $\aiter$
  (for positions in iCRS-terms, see \cite{kete:simo:2011})
  we define two binary relations:
  the \emph{binding relation} $\sbinds$, and
  the \emph{capturing relation} $\siscapturedby$. 
  Since these relations are specific to the term $\aiter$,
  they could be denoted by $\sbindsiter{\aiter}$, and $\siscapturedbyiter{\aiter}$, respectively.
  However, we will generally avoid this subscript notation,
  assuming that the underlying infinite \lambdaterm\ will always be clear from the context. 
  For defining $\sbinds$ and $\siscapturedby$ for $\aiter$, let $\apos,\bpos\in\Positions{\aiter}$. 
  
  $\apos \binds \bpos$ 
  (in words: a binder (\lambdaabstraction) at $\apos$ \emph{binds} a variable occurrence at $\bpos$)
  holds
  if $\apos$ is a binder position, and $\bpos$ a variable position in $\aiter$,
     and the binder at position $\apos$ binds the variable occurrence at position $\bpos$.
  
  $\bpos \iscapturedby \apos$ 
  (in words: a variable occurrence at $\bpos$ \emph{is captured by} a binder at $\apos$),
  and conversely
  $\apos \captures \bpos$
  (the binder at $\apos$ \emph{captures} a variable occurrence at $\bpos$),
  hold
  if $\bpos$ is a variable position and $\apos < \bpos$ a binder position in $\aiter$,
      and there is no binder position $\bposi{0}$ in $\aiter$ with $\apos \le \bposi{0}$ and $\bposi{0} \binds \bpos$.
\end{definition}

\begin{definition}[\bindcaptchain]\label{def:bind:capt:chain}
  Let $\aiter\in\Ter{\inflambdacal}$. 
  A finite or infinite sequence $\enumsequence{\aposi{0},\bposi{1},\aposi{1},\bposi{2},\aposi{2},\ldots}$
  in $\setexp{0,1}^*$ is called a \emph{\bindcaptchain\ in $\aiter$}
  if $\aposi{0},\bposi{1},\aposi{1},\bposi{2},\aposi{2},\ldots\in\Positions{\aiter}$,  
  and if these positions are linked alternatingly via binding and capturing: 
  $\aposi{1} \binds \bposi{2} \iscapturedby \aposi{2} 
             \binds \bposi{3} \iscapturedby \aposi{3}
             \binds \ldots$,
  starting with a binding and ending with a capturing.
  The \emph{length} of such a \bindcaptchain\ is the number of `is captured by' links. 
\end{definition}

See Figs.~\ref{fig:ll:expressible} and \ref{fig:not:ll:expressible}
for illustrations of \bindcaptchains\ in terms we have encountered.
Note that
binding--capturing chains occur whenever scopes overlap, or in other words
when nesting \extscope{s} occur. Every \bindcaptchain\ is fully contained
within a \extscope.  

Now we introduce a \positionannotated\ variant $\stRegposCRS$ of $\stRegCRS$ in order to relate
\bindcaptchains\ to rewrite sequences in $\stRegARS$. 
The idea is that if a \lambdaterm~$\aiter$ has 
a ge\-ne\-ra\-ted subterm $\flabs{\bvari{1}\ldots\bvari{n}}{\biter}$ in $\stRegARS$,
then $\flabspos{\bvari{1}\ldots\bvari{n}}{\aposi{1},\ldots,\aposi{n}}{\bpos}{\biter}$ is a generated subterm in $\stRegposCRS$,
where $\aposi{1},\ldots,\aposi{n}$ are the positions in $\aiter$
from which the bindings $\slabs{\bvari{1}\ldots\bvari{n}}$ in the abstraction prefix descend,
and $\bpos$ is the position in $\aiter$ of the body $\biter$ of the generated subterm.

\begin{definition}[\iCRS-representation of $\inflambdaprefixposcal$, terms in $\Ter{\inflambdaprefixposcal}$]
  \label{def:sig:lambdaprefixposcal:CRS}
  The \CRS-sig\-na\-ture for $\inflambdaprefixposcal$, 
  the  \lambdacalculus\ with \posannotated\ abstraction prefixes is given~by:
  \begin{center}
    $\siglpposcCRS = \siglcCRS 
                       \cup 
                     \descsetexpnormalsize{ \sflabsposCRS{\aposi{1},\ldots,\aposi{n}}{\bpos} }{ n\in\nats,\, \aposi{1},\ldots,\aposi{n},\, \bpos\in\setexp{0,1}^* }$ 
  \end{center}                      
  where all of the function symbols $\sflabsposCRS{\aposi{1},\ldots,\aposi{n}}{\bpos}$ are unary.
  Terms that are of the specific form 
  $\flabsposCRS{\avari{1}}{\aposi{1},\ldots,\aposi{n}}{\bpos}{\ldots\absCRS{\avari{n}}{\ater}}$
  will be denoted in informal notation as  
  $\flabspos{\avari{1}\ldots\avari{n}}{\aposi{1},\ldots,\aposi{n}}{\bpos}{\ater}$, 
  which can be abbreviatee to $\flabspos{\vec{\avar}}{\vec{\apos}}{\bpos}{\ater}$, or to $\femptylabs\ater$ in case of an empty prefix.
  By $\Ter{\inflambdaprefixposcal}$ we denote the set of closed \iCRS-terms 
  over $\siglpposcCRS$ of the form
  $\flabsposCRS{\avari{1}}{\aposi{1},\ldots,\aposi{n}}{\bpos}{\ldots\absCRS{\avari{n}}{\ater}}$
  for some $n\in\nats$,
  $\aposi{1},\ldots,\aposi{n},\bpos\in\setexp{0,1}^*$
  and some term $\labsCRS{\avari{1}}{\ldots\labsCRS{\avari{n}}{\ater}}$ over the signature $\siglcCRS$ with
  the restriction that a CRS-abstraction in $\aiter$ can only occur as an argument of a
  function symbol $\slabsCRS$. 
\end{definition}

\begin{definition}[\posannotated\ variant {\normalfont $\stRegposCRS$}]\label{def:stRegposCRS}
  On $\Ter{\inflambdaprefixcal}$ we consider the following rewrite rules in informal notation:
\vspace{-1ex}
\begin{align*}
    (\rulepos{\slappdecompi{i}}):
      & & \hspace*{-1.5ex}
    \flabspos{\avari{1}\ldots\avari{n}}{\aposi{1},\ldots,\aposi{n}}{\bpos}{\lapp{\aiteri{0}}{\aiteri{1}}}
      & {} \red
    \flabspos{\avari{1}\ldots\avari{n}}{\aposi{1},\ldots,\aposi{n}}{\bpos i}{\aiteri{i}}
      & & \hspace*{-1.5ex} \text{(for each $i\in\setexp{0,1}$)}
    \\
    (\rulepos{\slabsdecomp}):
      & & \hspace*{-1.5ex}
    \flabspos{\avari{1}\ldots\avari{n}}{\aposi{1},\ldots,\aposi{n}}{\bpos}{\labs{\bvar}{\aiteri{0}}}
      & {} \red
    \flabspos{\avari{1}\ldots\avari{n}\bvar}{\aposi{1},\ldots,\aposi{n},\bpos}{\bpos 00}{\aiteri{0}}
    \displaybreak[0]\\
    (\rulepos{\snlvarsucc}):
      & & \hspace*{-1.5ex}
    \flabspos{\avari{1}\ldots\avari{n+1}}{\aposi{1},\ldots,\aposi{n+1}}{\bpos}{\aiteri{0}}
      & {} \red
    \flabspos{\avari{1}\ldots\avari{n}}{\aposi{1},\ldots,\aposi{n}}{\bpos}{\aiteri{0}}
      & & \hspace*{-1.5ex}
    \hspace*{0ex} (\text{if bind.\ $\slabs{\avari{n+1}}$ is vacuous})
\end{align*}
  The change of the term-body position in a $\lambda$\nb-decomposition step is
  motivated by the underlying \CRS\nb-formalization of terms in $\inflambdaprefixcal$: when a
  subexpression $\labsCRS{\bvar}{\aiteri{0}}$ of a term in $\Ter{\inflambdaprefixposcal}$ 
  that represents a $\lambda$\nb-abstraction starts at position $\bpos$,
  its binding is declared at position $\bpos 0$, and its  body $\aiteri{0}$ starts at position $\bpos 00$.

  By $\stRegposARS$ we denote the
  abstract rewriting systems induced,
  similar to the definition of $\stRegARS$ in Def.~\ref{def:RegCRS:stRegCRS} earlier,
  by the rules above on \posannotated\ terms in $\Ter{\inflambdaprefixcal}$.
  
  Also analogously to Def.~\ref{def:RegCRS:stRegCRS},
  by $\stregred$ we denote the \emph{$\rulepos{\snlvarsucc}$}-eager 
  rewrite strategy for $\stRegposARS$.
%
\end{definition}

The lemma below gathers basic properties of the rewrite relation $\stregred$
on \posannotated\ prefixed \lambdaterms, and statements about
the form of possible $\stregred$\nb-rewrite sequences.

\begin{lemma}\label{lem:def:stRegposCRS}
  \begin{enumerate}[(i)]
    \item{}\label{lem:def:stRegposCRS:item:i}
      If
      $\,
       \femptylabspos{\tuple{\niks}}{\rootpos}{\aiter}
         \stregmred
       \flabspos{\vec{\avar}}{\vec{\apos}}{\bpos}{\biter}  
       \,$,
      then there is $n\in\nats$
      such that $\vec{\avar} = \tuple{\avari{1}\ldots\avari{n}}$,
                $\vec{\apos} = \tuple{\aposi{1},\ldots,\aposi{n}}$,
      $\aposi{1},\ldots,\aposi{n},\bpos\in\Positions{\aiter}$, and    
      $\aposi{1} < \aposi{2} < \ldots < \aposi{n} < \bpos$.
      
    \item{}\label{lem:def:stRegposCRS:item:ii}
%
      If
      $\flabspos{\bvari{1}\ldots\bvari{n}}{\bposi{1},\ldots,\bposi{n}}{\bpos}{\biteri{1}}
         \stregconvmred
       \flabspos{\avari{1}\ldots\avari{n}}{\aposi{1},\ldots,\aposi{n}}{\bposi{0}}{\biter}
         \stregmred
       \flabspos{\cvari{1}\ldots\cvari{m}}{\cposi{1},\ldots,\cposi{m}}{\bpos}{\biteri{2}}$
      holds for some $\bpos\in\setexp{0,1}^*$,
      then 
      $\flabspos{\bvari{1}\ldots\bvari{n}}{\bposi{1},\ldots,\bposi{n}}{\bpos}{\biteri{1}}
         =  
       \flabspos{\cvari{1}\ldots\cvari{m}}{\cposi{1},\ldots,\cposi{m}}{\bpos}{\biteri{2}}$ 
      follows, and hence also
      $n = m$, and
      $\bposi{1} = \cposi{1}$, \ldots, $\bposi{n} = \cposi{m}$.

    \item{}\label{lem:def:stRegposCRS:item:iii}
     If
      $
       \arewseq \funin 
       \flabspos{\avari{1}\ldots\avari{n}}{\aposi{1},\ldots,\aposi{n}}{\bpos}{\biter}
         \stregmred
       \flabspos{\avari{1}\ldots\avari{n'}}{\aposacci{1},\ldots,\aposacci{n'}}{\bposacc}{\biter'}  
       $
      is a rewrite sequence in $\stRegposCRS$, 
      and 
      $n_0 = \max \descsetexp{i}{1\le i\le n,\,\text{$\aposi{1} = \aposacci{1}$, \ldots, $\aposi{i} = \aposacci{i}$}}
           (\le \min \setexp{n,n'})$
      then $\arewseq$ is of the form:
      \begin{center}
      $\,
       \arewseq \funin \;
       \flabspos{\avari{1}\ldots\avari{n}}{\aposi{1},\ldots,\aposi{n}}{\bpos}{\biter}
         \stregmred
       \flabspos{\avari{1}\ldots\avari{n_0}}{\aposi{1},\ldots,\aposi{{n_0}}}{\bposi{0}}{\citer}
         \stregmred
       \flabspos{\avari{1}\ldots\avari{n'}}{\aposacci{1},\ldots,\aposacci{n'}}{\bposacc}{\biter'}  
       \,$
      \end{center}
      for
      some $\bposi{0}$ such that
      $\bpos \le \bposi{0} < \aposacci{{n_0}+1} < \ldots < \aposacci{n'}\,$.  
      
    \item{}\label{lem:def:stRegposCRS:item:iv}
      For every rewrite sequence
      $\,
       \arewseq \funin \;
       \flabspos{\avari{1}\ldots\avari{n}}{\aposi{1},\ldots,\aposi{n}}{\bpos}{\biter}
         \stregmred
       \flabspos{\avari{1}\ldots\avari{n+m+1}}{\aposi{1},\ldots,\aposi{n+m+1}}{\bposacc}{\citer}  
       \,$
      is of the form:
      \begin{align*}
        \arewseq \funin \;
          \flabspos{\avari{1}\ldots\avari{n}}{\aposi{1},\ldots,\aposi{n}}{\bpos}{\biter}
            & \stregmred
          \flabspos{\avari{1}\ldots\avari{n+m}}{\aposi{1},\ldots,\aposi{n+m}}{\aposi{n+m+1}}{\labs{\avari{n+m+1}}{\citeri{0}}}
          \displaybreak[0]\\
            & \stregred
          \flabspos{\avari{1}\ldots\avari{n+m+1}}{\aposi{1},\ldots,\aposi{n+m+1}}{\aposi{n+m+1}00}{\citeri{0}} 
          \displaybreak[0]\\
            & \stregmred 
          \flabspos{\avari{1}\ldots\avari{n+m+1}}{\aposi{1},\ldots,\aposi{n+m+1}}{\bposacc}{\citer}    
      \end{align*}
      where $\avari{n+m}$ occurs free in $\labs{\avari{n+m+1}}{\citeri{0}}$, and hence also in $\citeri{0}$.
  \end{enumerate}
\end{lemma}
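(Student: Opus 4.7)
The plan is to derive all four parts from three structural facts about $\stRegposCRS$ and the $\rulepos{\snlvarsucc}$-eagerness of $\stregred$: (a) each $\rulepos{\slappdecompi{i}}$- or $\rulepos{\slabsdecomp}$-step strictly grows the body position (appending $i$, respectively $00$); (b) $\rulepos{\slabsdecomp}$ is the sole rule extending the prefix, and it pushes the current body position onto the top; (c) $\rulepos{\snlvarsucc}$ is the sole rule shortening the prefix, and by eagerness it must fire whenever the top binder is vacuous in the current body. For (i) I would induct on the length of the rewrite from $\femptylabspos{\tuple{\niks}}{\rootpos}{\aiter}$, keeping as IH the conjunction of the three invariants ($|\vec{\avar}| = |\vec{\apos}| = n$, all positions in $\Positions{\aiter}$, and $\aposi{1} < \cdots < \aposi{n} < \bpos$). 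Only the $\rulepos{\slabsdecomp}$-case is interesting: it installs the current body position as the new top prefix entry, which by IH exceeds all previous entries, and advances $\bpos$ to $\bpos 00 > \bpos$.

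The main workhorse will be (ii), for which the plan is to show that $\stregred$ is \emph{essentially deterministic}. After $\rulepos{\snlvarsucc}$-eagerness has pinned down every removal step, the only non-forced choice at a non-normal term is between $\rulepos{\slappdecompi{0}}$ and $\rulepos{\slappdecompi{1}}$ when the body is an application; on an abstraction-body, $\rulepos{\slabsdecomp}$ is the only enabled non-$\rulepos{\snlvarsucc}$ rule. Fixing the target body position $\bpos$ determines the left/right branching at every intermediate application, so the whole rewrite from the common source to any reduct with body position $\bpos$ is forced, and the two reducts $T_1$ and $T_2$ in the statement must therefore coincide.

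For (iii) the key observation is that once a prefix entry is removed it cannot be re-added: by (i) every later addition yields a position at least as large as the current body position, which by monotonicity of body positions already exceeds the position of any previously removed entry. The source entry $\aposi{n_0+1}$, absent from the target prefix by the definition of $n_0$, must therefore be removed while at the top of the prefix; this yields an intermediate term with prefix $[\aposi{1},\ldots,\aposi{n_0}]$ and body position $\bposi{0} \ge \bpos$, and the inequalities $\bposi{0} < \aposacci{n_0+1} < \cdots < \aposacci{n'}$ then follow from (i) applied to the remainder. For (iv) the plan is analogous: identify the unique $\rulepos{\slabsdecomp}$-step in $\arewseq$ that installs $\avari{n+m+1}$, and exploit that once added $\avari{n+m+1}$ is never removed (any later addition sits above it, and the final prefix still holds it in position $n+m+1$) to freeze the first $n+m+1$ prefix entries thereafter; matching the frozen suffix against the target prefix $[\avari{1},\ldots,\avari{n+m+1}]$ forces the pre-step configuration to be $[\avari{1},\ldots,\avari{n+m}]$ at body position $\aposi{n+m+1}$ with body $\labs{\avari{n+m+1}}{\citeri{0}}$. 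Freeness of $\avari{n+m}$ in that body is immediate from $\rulepos{\snlvarsucc}$-eagerness, since otherwise $\avari{n+m}$ would have been removed before the $\rulepos{\slabsdecomp}$-step could fire. The main obstacle I foresee is the careful verification of (ii), which demands joint tracking of prefix and body position under the eagerness constraint and the ruling-out of hidden nondeterminism; once the determinism in (ii) is secured, (iii) and (iv) reduce to the structural arguments just sketched.
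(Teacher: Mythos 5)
Your handling of (i), (iii) and (iv) follows essentially the paper's route (induction on the length of the rewrite sequence, case distinction on the rule applied in the last step, plus the monotonicity of the body-position superscript and the stack discipline of the prefix), and those sketches look workable; one minor caveat for (iii) is that the comparison of $\bposi{0}$ with $\aposacci{{n_0}+1}$ can degenerate to equality when the step immediately after the intermediate term is the $\rulepos{\slabsdecomp}$-step installing entry $n_0+1$, so the strict inequality requires the same care as the statement itself.

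For (ii), however, the determinism argument as you state it has a genuine gap, and it is exactly the ``hidden nondeterminism'' you flag without resolving: $\rulepos{\snlvarsucc}$-steps leave the body-position superscript unchanged, so fixing the target body position $\bpos$ forces the sequence of $\rulepos{\slappdecompi{i}}$- and $\rulepos{\slabsdecomp}$-steps, but it does \emph{not} determine how many of the trailing (eagerness-forced) $\rulepos{\snlvarsucc}$-steps have already been performed. Consequently two $\stregmred$-reducts of a common term with the same body position need not coincide: for $\aiter = \labs{\avar}{\labs{\bvar}{\avar}}$ one has
$\femptylabspos{\tuple{\niks}}{\rootpos}{\aiter}
   \stregmred
 \flabspos{\avar\bvar}{\rootpos,00}{0000}{\avar}
   \stregred
 \flabspos{\avar}{\rootpos}{0000}{\avar}$,
and the last two terms are distinct reducts sharing the body position $0000$. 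So ``same body position implies same term'' fails in this generality, and no argument can close that gap without strengthening the hypothesis: the version actually needed later (in the proof of Lemma~\ref{lem:bind:capt:chains:stRegpos}) concerns reducts that arise \emph{immediately} from the $\rulepos{\slabsdecomp}$-step at the relevant position, i.e.\ whose body position is $q00$ with $q$ their own last prefix annotation (equivalently one may require equal prefix lengths). With such a pinning-down clause your determinism argument does go through, because the reachable terms with a fixed body position form a single chain of consecutive $\rulepos{\snlvarsucc}$-steps along which prefix lengths and last annotations strictly decrease, so at most one member of the chain satisfies the clause. The paper itself only gestures at an induction on the combined length of the two rewrite sequences here; either route is fine, but without the extra pinning-down step your verification of (ii) — which you call the workhorse — is not yet secured.
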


\begin{proof}
  Each of the statements (\ref{lem:def:stRegposCRS:item:i}), (\ref{lem:def:stRegposCRS:item:iii}), and (\ref{lem:def:stRegposCRS:item:iv})
  can be shown by induction on the length of the $\sstregred$\nb-rewrite sequence in the assumption,
  distinguishing the cases of the rule applications in the last step.
  For the part of statement (\ref{lem:def:stRegposCRS:item:i}),
  that a rewrite sequence
      $\,
       \femptylabspos{\tuple{\niks}}{\rootpos}{\aiter}
         \stregmred
       \flabspos{\vec{\avar}}{\vec{\apos}}{\bpos}{\biter}  
       \,$
  implies
      $\aposi{1},\ldots,\aposi{n},\bpos\in\Positions{\aiter}$,
  it suffices to show that $\bpos\in\Positions{\aiter}$,
  because, due to the form of the rules in $\stRegposCRS$, 
  a position is added to the list in the subscript of the abstraction prefix
  (by an application of the rule $\rulepos{\slabsdecomp}$)
  only if it has already been encountered as the position in the subscript of an abstraction prefix. 
  This remaining statement can be established as follows:
  every $\stregred$\nb-re\-write sequence from $\femptylabspos{\tuple{\niks}}{\rootpos}{\aiter}$
  explores $\aiter$ (or a pre-term representation of $\aiter$) from the root position downwards, 
  thereby keeping track of the current position in the superscript $\bpos$ of the abstraction prefix. 
  
  Statement (\ref{lem:def:stRegposCRS:item:ii}) can be shown similarly by induction on
  the sum (or the minimum) of the lengths of the two $\sstregred$\nb-rewrite sequence in the assumption.
\end{proof}

The proposition below formulates the statement that
$\stregred$\nb-rewrite sequences on terms in $\Ter\inflambdaprefixcal$
are related to $\stregred$\nb-rewrite sequences on \posannotated\ terms
via lifting (adding annotations) and projecting (dropping annotations).

\begin{proposition}\label{prop:position:lifting:projecting}
  \begin{enumerate}[(i)]
    \item{}\label{prop:position:lifting:projecting:item:lifting}  
      \emph{Lifting}:
      Every rewrite sequence 
        $
        \arewseq \;\funin\;\;
          \flabs{\vec{\avar}_0}{\aiteri{0}}
            \stregred
          \flabs{\vec{\avar}_1}{\aiteri{1}}
            \stregred
          \ldots
            \stregred
          \flabs{\vec{\avar}_n}{\aiteri{n}}
          $
      in $\stRegARS$
      can be lifted, 
      by adding given $\bposi{0}\in\positions$ and $\vec{\apos}_0\in\vecpositions$
      with $\length{\vec{\apos}_0} = \length{\vec{\avar}_0}$,
      and appropriate further position annotations
      $\bposi{1}, \ldots, \bposi{n}\in\positions$
      and $\vec{\apos}_1, \ldots, \vec{\apos}_n\in\vec{\positions}$,
      to the terms of $\arewseq$, yielding a rewrite sequence 
        $
        \arewseq^{\spos} \funin
          \flabspos{\vec{\avar}_0}{\vec{\apos}_0}{\bpos_0}{\aiteri{0}}
            \stregred
          \flabspos{\vec{\avar}_1}{\vec{\apos}_1}{\bpos_1}{\aiteri{1}}
            \stregred
          \ldots
            \stregred
          \flabspos{\vec{\avar}_n}{\vec{\apos}_n}{\bpos_n}{\aiteri{n}}
          $
      in $\stRegposARS$.
    \item{}\label{prop:position:lifting:projecting:item:projecting}    
      \emph{Projection}:
      The result of dropping the position annotations in the prefix 
      in a rewrite sequence in $\stRegposARS$ is a rewrite sequence in $\stRegARS$.
  \end{enumerate}      
\end{proposition}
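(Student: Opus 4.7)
The proof proceeds by first handling the easy direction (projection), then the inductive direction (lifting), relying on the fact that both the rewrite rules of $\stRegposARS$ and the eagerness condition are obtained from those of $\stRegARS$ purely by decorating with positions, without imposing additional constraints on the underlying \lambdaterm\ structure.

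For (\ref{prop:position:lifting:projecting:item:projecting}), define the \emph{forgetful map}
$\droppos(\flabspos{\vec{\avar}}{\vec{\apos}}{\bpos}{\biter}) \defdby \flabs{\vec{\avar}}{\biter}$
from $\Ter{\inflambdaprefixposcal}$ to $\Ter{\inflambdaprefixcal}$. A case distinction on the three rule schemes in Def.~\ref{def:stRegposCRS} shows that each step $\flabspos{\vec{\avar}}{\vec{\apos}}{\bpos}{\biter} \stregred \flabspos{\vec{\bvar}}{\vec{\bpos}}{\cpos}{\citer}$ in $\stRegposARS$ projects under $\droppos$ to a step of the corresponding unannotated rule in $\stRegARS$. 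The side-condition in $(\rulepos{\snlvarsucc})$ that the binding $\slabs{\avari{n+1}}$ is vacuous is a property of $\aiteri{0}$ alone, hence it is preserved by $\droppos$, and analogously the $\rulepos{\snlvarsucc}$-eagerness of the source translates to $\srulep{\snlvarsucc}$-eagerness after projection.

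For (\ref{prop:position:lifting:projecting:item:lifting}), I proceed by induction on the length $n$ of $\arewseq$. The base case $n=0$ is trivial: the single term $\flabs{\vec{\avar}_0}{\aiteri{0}}$ lifts to $\flabspos{\vec{\avar}_0}{\vec{\apos}_0}{\bposi{0}}{\aiteri{0}}$ with the given data, and the assumption $\length{\vec{\apos}_0} = \length{\vec{\avar}_0}$ ensures this is a well-formed term of $\Ter{\inflambdaprefixposcal}$. For the inductive step, assume the prefix $\arewseq\mid_{n-1}$ has already been lifted to end in $\flabspos{\vec{\avar}_{n-1}}{\vec{\apos}_{n-1}}{\bpos_{n-1}}{\aiteri{n-1}}$. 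The last step of $\arewseq$ uses one of the three rules of $\stRegARS$, and in each case there is a unique matching rule in $\stRegposARS$ applicable to the annotated source whose outcome dictates the next annotation: for $\srulep{\slappdecompi{i}}$ set $\bposi{n} \defdby \bposi{n-1} i$ and $\vec{\apos}_n \defdby \vec{\apos}_{n-1}$; for $\srulep{\slabsdecomp}$ set $\vec{\apos}_n \defdby \vec{\apos}_{n-1}{\cdot}\bposi{n-1}$ and $\bposi{n} \defdby \bposi{n-1} 00$; for $\srulep{\snlvarsucc}$ drop the last entry of $\vec{\apos}_{n-1}$ and keep $\bposi{n} \defdby \bposi{n-1}$. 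The $\srulep{\snlvarsucc}$-eagerness of the step in $\arewseq$ lifts to $\rulepos{\snlvarsucc}$-eagerness of the annotated step because, as in the projection case, vacuity of a prefix binding depends only on the $\siglcCRS$-term component and not on the annotations.

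There is no real obstacle: the content of the statement is that position annotations constitute independent bookkeeping that is uniquely determined (in the forward direction) by the underlying rewriting, and can be freely stripped off (in the backward direction). The only minor point to verify is that the eagerness side-condition survives both operations, which reduces to the observation already used above that vacuity of a binding is intrinsic to the underlying \lambdaterm.
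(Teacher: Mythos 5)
Your proposal is correct and follows essentially the same route as the paper, which simply proves both items by straightforward induction on the length of the rewrite sequences; your explicit per-rule case analysis (the annotation updates for $\srulep{\slappdecompi{i}}$, $\srulep{\slabsdecomp}$, $\srulep{\snlvarsucc}$ and the observation that vacuity, hence eagerness, is independent of the annotations) just spells out the details the paper leaves implicit.
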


\begin{proof}
  Statements~(i) and~(ii)
  can be shown by straightforward induction 
  on the length of rewrite sequences in $\stRegARS$ and in $\stRegposARS$, respectively. 
\end{proof}

\begin{figure}[t]
\vspace*{-2ex}  
  
\begin{center}\framebox{\begin{minipage}{375pt}\begin{center}
  \mbox{}
  \\[1ex]
  \mbox{ 
    \AxiomC{$\phantom{\flabspos{\avari{1}\ldots,\avari{n}\bvar}{\aposi{1},\ldots,\aposi{n},\cpos}{\bpos}{\:\bvar}}$}
    \RightLabel{\bvarax}
    \UnaryInfC{$\flabspos{\avari{1}\ldots\avari{n}\bvar}{\aposi{1},\ldots,\aposi{n},\cpos}{\bpos}{\:\bvar}$}
    \DisplayProof
  } 
  \hspace*{2.5ex} 
  \mbox{
    \AxiomC{$ \flabspos{\avari{1}\ldots\avari{n-1}}{\aposi{1},\ldots,\aposi{n-1}}{\bpos}{\aiter} $}
    \RightLabel{\Vacstreg\ $\;$
                \parbox[c]{80pt}{\small (if the binding\\$\slabs{\avari{n}}$ is vacuous)
                                 }%
                }
    \UnaryInfC{$ \flabspos{\avari{1}\ldots\avari{n}}{\aposi{1},\ldots,\aposi{n}}{\bpos}{\aiter} $}
    \DisplayProof
        }      
   \\[3ex]   
  \mbox{
    \AxiomC{$ \flabspos{\avari{1}\ldots\avari{n}\bvar}{\aposi{1},\ldots,\aposi{n},\bpos}{\bpos 00}{\aiteri{0}} $}
    \RightLabel{$\labscomp$}
    \UnaryInfC{$ \flabspos{\avari{1}\ldots\avari{n}}{\aposi{1},\ldots,\aposi{n}}{\bpos}{\labs{\bvar}{\aiteri{0}}} $}
    \DisplayProof
        }
  \hspace*{5.5ex}
  \mbox{
    \AxiomC{$ \flabspos{\vec{\avar}}{\vec{\apos}}{\bpos 0}{\:\aiteri{0}}$}
    \AxiomC{$ \flabspos{\vec{\avar}}{\vec{\apos}}{\bpos 1}{\:\aiteri{1}}$}
    \RightLabel{$\lappcomp$}
    \BinaryInfC{$ \flabspos{\vec{\avar}}{\vec{\apos}}{\bpos}{\:\lapp{\aiteri{0}}{\aiteri{1}}} $}
    \DisplayProof 
    }
  \\[0.5ex]\mbox{} 
\end{center}\end{minipage}}\end{center} 
  \vspace*{-2.5ex}  
  \caption{\label{fig:Lambdaprefixstregposinf}%
           The proof system $\Lambdaprefixstregposinf$ 
           }
\end{figure}


As a consequence of \Vacstregeager ness of $\stregred$,
and the position-change recorded in steps of $\stregred$,
reduction graphs with respect to $\stregred$ in $\stRegposCRS$ have the property 
to be trees. As such they can be captured directly by prooftrees (with certain properties) in a proof system $\Lambdaprefixstregposinf$
that is defined below.
Recall that $\sstregred\,$-re\-duc\-tion graphs in $\stRegCRS$ do not have tree shape in general,
but that their tree unfoldings can be captured by completed derivations in the proof system $\stReginf$ from Section~\ref{sec:proofs}.

\begin{definition}[proof system $\Lambdaprefixstregposinf$]
  \label{def:Lambdaprefixstregposinf}
  The proof system $\Lambdaprefixstregposinf$ 
  acts on \CRS-terms over the signature $\siglpposcCRS$
  as formulas, is a Hilbert-style system 
  for finite or infinite prooftrees (of depth $\le\omega$),
  and it has the axioms and rules displayed in Fig.~\ref{fig:Lambdaprefixstregposinf}.  
  
  A derivation $\infDeriv$ in $\Lambdaprefixstreginf$ is called \emph{\Vacstregeager}
  if no conclusion of an instance of ($\lappcomp$) or ($\labscomp$) in $\Deriv$
  is the source of a $\scompressstregred$\nb-step. 
  A finite or infinite derivation $\infDeriv$ in $\Lambdaprefixstregposinf$ 
  is called \emph{closed} 
  if all terms in leafs of $\infDeriv$ are axioms.
  By derivability of a term $\flabspos{\vec{\avar}}{\vec{\apos}}{\bpos}{\,\aiter}$ in $\Lambdaprefixstregposinf$,
  which is denoted symbolically by $\derivablein{\Lambdaprefixstregposinf}{\flabspos{\vec{\avar}}{\vec{\apos}}{\bpos}{\,\aiter}}$, 
  we mean the existence of a closed derivation with conclusion $\flabspos{\vec{\avar}}{\vec{\apos}}{\bpos}{\,\aiter}$.
\end{definition}

\begin{proposition}
  Let \lambdaterm~$\aiter\in\Ter{\inflambdacal}$ be an infinite \lambdaterm.
  Then there is a unique \Vacstregeager, and closed derivation $\infDeriv$ with conclusion $\femptylabs{\aiter}$ in \Lambdaprefixstregposinf.
  $\infDeriv$ corresponds directly to the $\stregred$\nb-deri\-va\-tion graph of $\femptylabs{\aiter}$ in $\stRegposCRS$,
  which has the form of a tree.
\end{proposition}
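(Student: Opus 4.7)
The plan is to build the derivation $\infDeriv$ top-down, starting from the conclusion $\femptylabs{\aiter}$ and picking at each node the uniquely forced rule of $\Lambdaprefixstregposinf$. At a formula $\flabspos{\vec{\avar}}{\vec{\apos}}{\bpos}{\biter}$ encountered during this construction, the applicable rule is determined by the following case analysis: if the prefix is non-empty and its last binder $\slabs{\avari{n}}$ is vacuous in $\biter$, then \Vacstregeager ness forces the rule $\Vacstreg$; otherwise the rule is dictated by the head of $\biter$, namely $\bvarax$ when $\biter$ is the variable $\avari{n}$, $\labscomp$ when $\biter$ is a $\lambda$\nb-ab\-strac\-tion, and $\lappcomp$ (with two premises) when $\biter$ is an application. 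Since $\aiter$ is closed and every decomposition rule preserves the invariant that the free variables of the body are contained in the prefix, whenever $\biter$ is a variable it must be some $\avari{j}$ with $1\le j\le n$: the case $j = n$ is the axiom, while for $j<n$ the binder $\slabs{\avari{n}}$ is vacuous in $\biter$, falling under the first case. Hence the (possibly infinite) recursive construction never gets stuck and yields a closed derivation that is \Vacstregeager by construction. Uniqueness of $\infDeriv$ follows at once because the rule choice at every node is forced by the analysis above, so any closed \Vacstregeager derivation with conclusion $\femptylabs{\aiter}$ must agree with the constructed one.

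The direct correspondence with the $\stregred$\nb-re\-duc\-tion graph of $\femptylabs{\aiter}$ in $\stRegposCRS$ is then obtained by matching rules with $\stregred$\nb-steps when reading $\infDeriv$ from conclusion upwards: an instance of $\labscomp$ corresponds to a single $\rulepos{\slabsdecomp}$\nb-step, an instance of $\lappcomp$ (with two premises) to the pair of $\rulepos{\slappdecompi{0}}$\nb- and $\rulepos{\slappdecompi{1}}$\nb-steps, an instance of $\Vacstreg$ to a $\rulepos{\snlvarsucc}$\nb-step, and an axiom $\bvarax$ to a $\stregred$\nb-nor\-mal form. The \Vacstregeager ness condition on $\infDeriv$ coincides with the \Vacstregeager ness built into the strategy $\stregred$, so the rewrite sequences assembled by $\infDeriv$ are precisely those that make up the reduction graph.

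For the tree shape of the reduction graph I will argue that distinct $\stregred$\nb-rewrite sequences from $\femptylabs{\aiter}$ end at distinct terms. The only non-determinism in $\stregred$ is the binary choice at an application decomposition, and those two options immediately commit the body position to either $\bpos 0$ or $\bpos 1$. Because body positions only extend under $\rulepos{\slabsdecomp}$ and $\rulepos{\slappdecompi{i}}$, and stay unchanged under $\rulepos{\snlvarsucc}$, a choice made at some application cannot be undone by later steps. Hence the entire rewrite sequence from the root to any reachable term is uniquely determined, so every vertex in the reduction graph has at most one predecessor, establishing the tree property. The main obstacle is to track the interleaved $\rulepos{\snlvarsucc}$\nb-steps, which may cascade when several consecutive trailing binders become vacuous, and to verify that \Vacstregeager ness still fully pins down their order, so that the correspondence between the nodes of $\infDeriv$ and the terms of the reduction graph is an isomorphism of trees rather than merely a surjection.
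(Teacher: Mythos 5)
Your proposal is correct in substance, and it is worth noting that the paper itself states this proposition without proof: the intended justification is implicit in the surrounding results, in particular Lemma~\ref{lem:def:stRegposCRS},~(\ref{lem:def:stRegposCRS:item:i})--(\ref{lem:def:stRegposCRS:item:ii}) and Proposition~\ref{prop:positions:vs:stregred:rewseqs}, whose uniqueness statement (for each position $\bpos\in\Positions{\aiter}$ there is exactly one $\stregred$\nb-rewrite sequence from $\femptylabspos{\tuple{\niks}}{\rootpos}{\aiter}$ reaching a term with superscript $\bpos$) yields the tree shape of the reduction graph directly. Your route is a sound elementary alternative: the forced-rule analysis (eagerness resolves the only overlap between ($\Vacstreg$) and ($\labscomp$)/($\lappcomp$), and the invariant that free variables of the body occur in the prefix rules out stuck variable cases) gives existence and uniqueness of the closed \Vacstregeager\ derivation, and monotonicity of the superscript position gives injectivity of the map from rewrite sequences to reachable terms.

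Two small points should be made explicit to close the argument. First, your "choices cannot be undone" observation handles two sequences that diverge at an application step (their superscripts thereafter extend the incomparable positions $\bpos 0$ and $\bpos 1$), but not the case where one sequence is a proper prefix of another ending in the same term; this is excluded because a nonempty cycle would have constant superscript, hence consist only of $\rulepos{\snlvarsucc}$\nb-steps, which strictly shorten the prefix --- a contradiction. Second, the "main obstacle" you leave open at the end is in fact already settled by your own case analysis: since ($\Vacstreg$), respectively $\rulepos{\snlvarsucc}$, only ever removes the \emph{last} binder and eagerness makes it the unique applicable step whenever that binder is vacuous, a cascade of such steps is completely deterministic, so the node-to-term correspondence is indeed a tree isomorphism and nothing further needs to be verified.
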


\begin{proposition}\label{prop:Lambdaprefixstregposinf}
  The proof system $\Lambdaprefixstregposinf$ is sound and complete 
  for 
  all po\-si\-tion-anno\-ta\-ted prefixed infinite \lambdaterms\ in $\Ter{\inflambdaprefixposcal}$,
  that is: 
  for all terms $\diter\in\Ter{\siglpposcCRS}$ it holds that
  $\derivablein{\Lambdaprefixstregposinf}{\diter}$
  if and only if
  $\diter\in\Ter{\inflambdaprefixposcal}$
  (thus $\diter$ is of the form
   $\flabspos{\avari{1}\ldots\avari{n}}{\aposi{1},\ldots,\aposi{n}}{\bpos}{\diteracc}$
   for $n\in\nats$, $\aposi{1},\ldots,\aposi{n},\bpos\in\setexp{0,1}^*$, 
   and $\flabs{\avari{1}\ldots\avari{n}}{\diteracc}\in\Ter{\inflambdaprefixcal}$).
\end{proposition}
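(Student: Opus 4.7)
The plan is to establish the two directions independently. For soundness, the observation is that each axiom instance $\flabspos{\avari{1}\ldots\avari{n}\bvar}{\aposi{1},\ldots,\aposi{n},\cpos}{\bpos}{\bvar}$ has by inspection the form of a term in $\Ter{\inflambdaprefixposcal}$, and each of the rules $(\Vacstreg)$, $(\labscomp)$, $(\lappcomp)$ produces a conclusion whose shape --- a single outermost $\sflabsposCRS{\cdot}{\cdot}$-symbol followed by a chain of $\slabsCRS$-ab\-strac\-tions and then a body built only from $\siglcCRS$ --- likewise puts it in $\Ter{\inflambdaprefixposcal}$. Since the form of the conclusion at the root of any derivation $\infDeriv$ is fully determined by the axiom or rule applied there, independently of whether $\infDeriv$ is finite or infinite, it follows at once that $\derivablein{\Lambdaprefixstregposinf}{\diter}$ implies $\diter\in\Ter{\inflambdaprefixposcal}$.

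For completeness, I would construct a derivation of $\diter = \flabspos{\avari{1}\ldots\avari{n}}{\aposi{1},\ldots,\aposi{n}}{\bpos}{\diteracc}$ by corecursion on the head of the body $\diteracc$, which, since the underlying iCRS-term is closed, falls into exactly one of three cases: (i)~a variable, which must coincide with some $\avari{i}$ for $1\le i\le n$; (ii)~a $\lambda$-ab\-strac\-tion $\labs{\bvar}{\aiteri{0}}$; or (iii)~an application $\lapp{\aiteri{0}}{\aiteri{1}}$. In case~(i) I peel off the trailing prefix variables $\avari{n},\ldots,\avari{i+1}$ by $n-i$ successive applications of $(\Vacstreg)$ --- each is legitimate because the body at every intermediate stage is still the single variable $\avari{i}$, in which the binding being removed is vacuous --- and then close the branch with an instance of $(\bvarax)$, where $\avari{i}$ plays the role of the distinguished last prefix variable $\bvar$. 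In case~(ii) I apply $(\labscomp)$ and continue corecursively on the unique premise $\flabspos{\avari{1}\ldots\avari{n}\bvar}{\aposi{1},\ldots,\aposi{n},\bpos}{\bpos 00}{\aiteri{0}}$, which is again in $\Ter{\inflambdaprefixposcal}$ since incorporating a $\lambda$-binder from the body into the prefix preserves the required form. In case~(iii) I apply $(\lappcomp)$ and continue on the two premises $\flabspos{\avari{1}\ldots\avari{n}}{\aposi{1},\ldots,\aposi{n}}{\bpos j}{\aiteri{j}}$ for $j\in\setexp{0,1}$.

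The main concern, and what I expect to be the only real delicacy, is to confirm that this corecursion yields a well-defined prooftree of depth at most~$\omega$ --- that is, an actual derivation in $\Lambdaprefixstregposinf$. This holds because along every infinite thread of the constructed tree the body position strictly extends: each $(\labscomp)$-node appends $00$ to the body position, each $(\lappcomp)$-node appends $0$ or $1$, while $(\Vacstreg)$-steps can only occur in finite bursts immediately preceding the $(\bvarax)$-leaf of case~(i). Hence the sequence of body terms encountered along any thread corresponds to a descending chain of positions within the iCRS-term $\diter$, and the metric-completion semantics guarantees that the prooftree so obtained is well-defined, closed, and has conclusion $\diter$, establishing $\derivablein{\Lambdaprefixstregposinf}{\diter}$.
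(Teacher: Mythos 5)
Your completeness argument is fine and is essentially the paper's own (the paper only sketches it, exactly as for Proposition~\ref{prop:Lambdaprefixreg:Lambdaprefixstreg}: every term of $\Ter{\inflambdaprefixposcal}$ whose body is not a variable is the conclusion of a rule instance, so bottom-up proof search corecursively produces the closed, possibly infinite prooftree); your productivity check for the resulting tree of depth $\le\omega$ is the right point to verify and is correct, since read bottom-up the ($\Vacstreg$)-steps occur only in the finite burst ending a branch at an axiom.

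The soundness direction, however, contains a genuine flaw as written. You claim that each rule ``produces a conclusion whose shape \ldots\ a body built only from $\siglcCRS$'' and that ``the form of the conclusion at the root of any derivation is fully determined by the axiom or rule applied there''. That is not so: the rules of $\Lambdaprefixstregposinf$ are schemas whose metavariables $\aiteri{0},\aiteri{1}$ range over arbitrary terms over $\siglpposcCRS$ (the system acts on all CRS-terms over $\siglpposcCRS$ as formulas), so a single instance of, say, ($\lappcomp$) whose premise bodies contain prefix symbols $\sflabsposCRS{\vec{\apos}}{\bpos}$, or variables not bound in the prefix, has a conclusion that is \emph{not} in $\Ter{\inflambdaprefixposcal}$. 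The last rule fixes only the root symbol of the body; membership in $\Ter{\inflambdaprefixposcal}$ is a global property of the (typically infinite) term, and because derivations may be infinite you also cannot simply induct on the derivation. The repair is routine but must be made explicit: argue by induction on the length of a body position $p$ of the conclusion that the symbol at $p$ lies in $\siglcCRS$, or is a variable bound either in the prefix or by an enclosing $\slabsCRS$-abstraction of the body. Use that, going upward along a thread, consecutive ($\Vacstreg$)-instances strictly shorten the prefix and hence form finite blocks, so every body position of the conclusion is exposed after finitely many steps as the root of the body of some formula of the closed derivation, where the available rules and the axiom ($\bvarax$) force it to be an application, an abstraction, or the last prefix variable there; this simultaneously yields closedness and the required overall shape, hence $\diter\in\Ter{\inflambdaprefixposcal}$.
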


The following proposition relates positions $\bpos$ in an infinite \lambdaterm\ $\aiter$
with $\stregred$\nb-re\-write sequences from $\femptylabspos{\tuple{\niks}}{\rootpos}{\aiter}$ in $\stRegposCRS$
that `access' $\bpos$ in $\aiter$, and that in doing so eventually produce
the `\posannotated\ generated subterm of $\aiter$ at position $\bpos$'.

\begin{proposition}\label{prop:positions:vs:stregred:rewseqs}
  Let $\aiter\in\Ter{\inflambdacal}$ be an infinite \lambdaterm. 
  Then for all $\bpos\in\setexp{0,1}^*$ it holds:
  \begin{enumerate}[(i)]
    \item{}\label{prop:positions:vs:stregred:rewseqs:item:i}
      If $\bpos\in\Positions{\aiter}$
      then there is a unique $\stregred$\nb-rewrite sequence in $\stRegposCRS$ of the form
      $\femptylabspos{\tuple{\niks}}{\rootpos}{\aiter}
         \stregmred
       \flabspos{\avari{1}\ldots\avari{n}}{\aposi{1},\ldots,\aposi{n}}{\bpos}{\biter}$
      that proceeds via terms of the form 
      $\flabspos{\vec{\avar}_i}{\vec{\apos}_i}{\bposi{i}}{\biteri{i}}$
      where the $\bposi{i}$ are contained in, and exhaust, the set $\descsetexp{\bposacc}{\bposacc\le\bpos}$
      such that furthermore:
      \begin{enumerate}[(a)]
        \item{}\label{prop:positions:vs:stregred:rewseqs:item:i:subitem:a}
          $\biter$ corresponds to the remaining body of the \lambdaterm~$\aiter$ at and below $\bpos\,$,
          with the variables free in $\biter$ being bound in the abstraction prefix of 
          $\flabspos{\avari{1}\ldots\avari{n}}{\aposi{1},\ldots,\aposi{n}}{\bpos}{\biter}$.
        \item{}\label{prop:positions:vs:stregred:rewseqs:item:i:subitem:b}
          the \posannotated\ generated subterm 
          $\flabspos{\avari{1}\ldots\avari{n}}{\aposi{1},\ldots,\aposi{n}}{\bpos}{\biter}$
          of $\aiter$ contains 
          the information on at which \lambdabinding\ positions of $\aiter$
          the free variables $\avari{1}, \ldots, \avari{n}$ of $\biter$  
          have been bound originally in $\aiter$,
          namely: 
          a free occurrence of $\avari{i}$ in $\biter$, where $1\le i\le n$, 
          descends from a variable position below $\bpos$ in $\aiter$ that is bound by a \lambdabinding\ at position $\aposi{i}$ above $\bpos$ in $\aiter$. 
      \end{enumerate}
    \item{}\label{prop:positions:vs:stregred:rewseqs:item:ii}   
      If $\femptylabspos{\tuple{\niks}}{\rootpos}{\aiter}
         \stregmred
       \flabspos{\avari{1}\ldots\avari{n}}{\aposi{1},\ldots,\aposi{n}}{\bpos}{\biter}$ holds,
      then $\bpos\in\Positions{\aiter}$ follows
      (and hence further statements described in item~(\ref{prop:positions:vs:stregred:rewseqs:item:i}) hold as well). 
  \end{enumerate} 
\end{proposition}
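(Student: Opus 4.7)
The plan is to prove both parts by induction along the downward path in $\aiter$ that ends at position~$\bpos$. The guiding intuition is that a $\stregred$\nb-rewrite sequence from $\femptylabspos{\tuple{\niks}}{\rootpos}{\aiter}$ in $\stRegposCRS$ is essentially an annotated record of a descent through the CRS\nb-syntax tree of~$\aiter$ starting at the root: the three downward rules ($\rulepos{\slabsdecomp}$, $\rulepos{\slappdecompi{0}}$, $\rulepos{\slappdecompi{1}}$) advance the body position by one structural step in~$\aiter$, while $\rulepos{\snlvarsucc}$ maintains the prefix by deleting vacuous trailing entries. The $\rulepos{\snlvarsucc}$\nb-eagerness of $\stregred$ ensures that such compression takes place whenever it can, which is what ultimately pins down uniqueness.

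For part~(\ref{prop:positions:vs:stregred:rewseqs:item:i}), I would proceed by induction on the number of positions in $\Positions{\aiter}$ that are $\le\bpos$. In the base case $\bpos=\rootpos$, the empty rewrite sequence suffices, with $\biter=\aiter$ and empty prefix. In the inductive step, let $\bposacc$ be the immediate $\le$\nb-predecessor of $\bpos$ in $\Positions{\aiter}$, and let $\flabspos{\vec{\avar}}{\vec{\apos}}{\bposacc}{\biter'}$ be the unique intermediate term given by the IH. The shape of the body $\biter'$ (which by (a) of the IH is, up to $\alpha$\nb-renaming, the subterm of $\aiter$ at $\bposacc$) forces the extending step: if $\biter' = \lapp{\biteri{0}}{\biteri{1}}$ then $\bpos = \bposacc\cdot d$ for some $d\in\setexp{0,1}$ and we apply $\rulepos{\slappdecompi{d}}$; if $\biter' = \labs{\bvar}{\biteri{0}}$ then $\bpos = \bposacc\cdot 00$ and we apply $\rulepos{\slabsdecomp}$, which appends the position $\bposacc$ to the prefix annotations\,---\,precisely establishing (b) for the new binder. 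Immediately afterwards, $\rulepos{\snlvarsucc}$\nb-eagerness forces the removal of any trailing bindings that have become vacuous in the new body. Property~(a) is then maintained by construction.

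Uniqueness follows from the observation that at every stage the choice of rule is forced: the symbol of $\aiter$ at the current body position dictates which decomposition rule is applicable, the remaining path toward $\bpos$ selects between $\rulepos{\slappdecompi{0}}$ and $\rulepos{\slappdecompi{1}}$, and $\rulepos{\snlvarsucc}$\nb-eagerness fixes when prefix-compression steps occur. Part~(\ref{prop:positions:vs:stregred:rewseqs:item:ii}) is then a straightforward induction on the length of the given rewrite sequence, case-splitting on the last rule applied: the $\rulepos{\snlvarsucc}$ case leaves the body position unchanged, while $\rulepos{\slappdecompi{d}}$ and $\rulepos{\slabsdecomp}$ extend a position in $\Positions{\aiter}$ to another position in $\Positions{\aiter}$, because the body at the previous step (known by the IH together with~(a) to be the subterm of~$\aiter$ at the previous body position) must be an application respectively a $\lambda$\nb-abstraction, as required.

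The main obstacle, in my view, will be the bookkeeping: carefully tracking the correspondence between the annotations $\aposi{i}$ in the prefix and the original $\lambda$\nb-binding positions in $\aiter$ through the interleaving of $\rulepos{\slabsdecomp}$ and $\rulepos{\snlvarsucc}$ steps, and doing so consistently across $\alpha$\nb-equivalence classes. Lemma~\ref{lem:def:stRegposCRS}, and in particular parts~(\ref{lem:def:stRegposCRS:item:i}) and~(\ref{lem:def:stRegposCRS:item:iv}), supplies the structural invariants (on prefix lengths, annotation order, and the shape of terms preceding a $\rulepos{\slabsdecomp}$) that make this bookkeeping routine.
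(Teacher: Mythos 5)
Your proposal is correct and takes essentially the same route as the paper, whose own ``proof'' is only a hint: induction on positions and on the length of $\stregred$\nb-rewrite sequences in $\stRegposCRS$, with the remark that a fully formal argument must descend to iCRS\nb-preterms and $\alpha$\nb-equivalence classes. Your step-by-step construction for (i) (each step forced by the shape of the current body, the remaining path to $\bpos$, and $\rulepos{\snlvarsucc}$\nb-eagerness) and your induction on sequence length for (ii) are exactly that argument fleshed out, and you correctly identify the preterm/position bookkeeping as the only real labour.
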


\begin{proof}[Proof (Hint)]
  The two items of the proposition can be established
  by induction on the length of $\stregred$\nb-re\-write sequences in $\stRegposCRS$,
  and by induction on the length of positions, respectively. 
  Thoroughly formal proofs of the statements here have to be based
  on the definition of \iCRS\nb-terms as $\alpha$\nb-equivalence classes 
  of \iCRS\nb-preterms and the definition of positions in \iCRS\nb-preterms.
\end{proof}


As an easy consequence, we obtain the following proposition.

\begin{proposition}\label{prop:stRegposCRS}
  For all $\aiter\in\Ter{\inflambdacal}$ and positions $\bpos\in\Positions{\aiter}$
  it holds:
  \begin{enumerate}[(i)]
    \item{}\label{prop:stRegposCRS:item:labs}
      if $\bpos$ is the position of an abstraction in $\aiter$,
      then there is a rewrite sequence in $\stRegposCRS$ of the form
      $\femptylabspos{\tuple{\niks}}{\rootpos}{\aiter}
         \stregmred
       \flabspos{\avari{1}\ldots\avari{n}}{\aposi{1},\ldots,\aposi{n}}{\bpos}{\labs{\avari{n+1}}{\biter}}$
      for some $n\ge 0$ and $\aposi{1},\ldots,\aposi{n}\in\Positions{\aiter}$.
   \item{}\label{prop:stRegposCRS:item:lapp}
     if $\bpos$ is the position of an application in $\aiter$,
     then there is a rewrite sequence in $\stRegposCRS$ of the form
     $\femptylabspos{\tuple{\niks}}{\rootpos}{\aiter}
        \stregmred
      \flabspos{\avari{1}\ldots\avari{n}}{\aposi{1},\ldots,\aposi{n}}{\bpos}{\lapp{\biteri{0}}{\biteri{1}}}$
     for some $n\ge 0$ and $\aposi{1},\ldots,\aposi{n}\in\Positions{\aiter}$.  
   \item{}\label{prop:stRegposCRS:item:var}
     if $\bpos$ is a variable position in $\aiter$,
      then there is a rewrite sequence in $\stRegposCRS$ of the form
      $\femptylabspos{\tuple{\niks}}{\rootpos}{\aiter}
         \stregmred
       \flabspos{\avari{1}\ldots\avari{n}}{\aposi{1},\ldots,\aposi{n}}{\bpos}{\avari{n}}  
       $  
      for some $n\ge 1$ and $\aposi{1},\ldots,\aposi{n}\in\Positions{\aiter}$. 
  \end{enumerate}
\end{proposition}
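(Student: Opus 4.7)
The plan is to derive Proposition~\ref{prop:stRegposCRS} as a nearly immediate corollary of Proposition~\ref{prop:positions:vs:stregred:rewseqs}, item~(\ref{prop:positions:vs:stregred:rewseqs:item:i}). First I would invoke that proposition to obtain, for the given position $\bpos\in\Positions{\aiter}$, a rewrite sequence
\[
  \femptylabspos{\tuple{\niks}}{\rootpos}{\aiter}
    \stregmred
  \flabspos{\avari{1}\ldots\avari{n}}{\aposi{1},\ldots,\aposi{n}}{\bpos}{\biter}
\]
in $\stRegposCRS$, where $\biter$ corresponds to the residual body of $\aiter$ at $\bpos$ (with the free variables of $\biter$ being precisely those variables in the abstraction prefix whose original binding positions $\aposi{i}$ lie above $\bpos$ and whose scopes reach $\bpos$).

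Next, a case distinction on whether the symbol of $\aiter$ at position $\bpos$ is an abstraction, application, or variable immediately yields the asserted shape of $\biter$. Indeed, by clause~(a) of Proposition~\ref{prop:positions:vs:stregred:rewseqs}, item~(\ref{prop:positions:vs:stregred:rewseqs:item:i}), the topmost constructor of $\biter$ coincides with that of the subterm of $\aiter$ at $\bpos$. Hence in case~(\ref{prop:stRegposCRS:item:labs}) we have $\biter = \labs{\avari{n+1}}{\biteracc}$ (for some fresh $\avari{n+1}$), in case~(\ref{prop:stRegposCRS:item:lapp}) we have $\biter = \lapp{\biteri{0}}{\biteri{1}}$, and in case~(\ref{prop:stRegposCRS:item:var}) we have that $\biter$ is a single variable.

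The only nontrivial point, and therefore the main obstacle, lies in case~(\ref{prop:stRegposCRS:item:var}): showing that the variable $\biter$ is exactly $\avari{n}$, the last variable in the prefix, rather than some earlier $\avari{j}$ with $j<n$. Here I would exploit the \Vacstregeager ness of $\stregred$. Since $\aiter$ is closed, the variable at $\bpos$ must descend from a binder strictly above $\bpos$ in $\aiter$, so by clause~(b) of Proposition~\ref{prop:positions:vs:stregred:rewseqs}, item~(\ref{prop:positions:vs:stregred:rewseqs:item:i}) we have $n \ge 1$ and $\biter$ is one of $\avari{1},\ldots,\avari{n}$. If $\biter = \avari{j}$ with $j < n$, then the binding $\slabs{\avari{n}}$ (and indeed each of $\slabs{\avari{j+1}},\ldots,\slabs{\avari{n}}$) would be vacuous in $\flabspos{\avari{1}\ldots\avari{n}}{\aposi{1},\ldots,\aposi{n}}{\bpos}{\biter}$, and so a $\rulepos{\snlvarsucc}$\nobreakdash-step would apply at the end of the prefix; by \Vacstregeager ness this step cannot have been bypassed along the rewrite sequence, contradicting the form of the terminal term produced by Proposition~\ref{prop:positions:vs:stregred:rewseqs}. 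Hence $\biter = \avari{n}$, completing case~(\ref{prop:stRegposCRS:item:var}) and the proof.
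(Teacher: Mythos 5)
Your derivation is essentially the paper's own: Proposition~\ref{prop:stRegposCRS} is stated there as an immediate consequence of Proposition~\ref{prop:positions:vs:stregred:rewseqs}, and your case analysis on the symbol of $\aiter$ at $\bpos$ is exactly the intended argument. One small refinement for case~(\ref{prop:stRegposCRS:item:var}): eagerness constrains which steps may be taken but does not force a finite rewrite sequence to be prolonged, so instead of deriving a contradiction it is cleaner to observe that at $\flabspos{\avari{1}\ldots\avari{n}}{\aposi{1},\ldots,\aposi{n}}{\bpos}{\avari{j}}$ with $j<n$ the only applicable steps are $\rulepos{\snlvarsucc}$-steps, and appending them until the body variable is the last prefix variable yields the required form (with $n:=j$) directly.
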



The next proposition describes the connection 
between the concepts of binding and capturing with
\posannotated\ $\stregred$\nb-re\-write sequences. 

\begin{proposition}\label{prop:bind:iscapturedby}
  For all $\aiter\in\Ter{\inflambdacal}$ and 
                                                       $\apos,\bpos\in\setexp{0,1}^*$ it holds:
  \vspace{-1,5ex}
  \begin{align*}
    \apos \binds \bpos
       \;\;&\Longleftrightarrow\;\;
    \text{there is a rewrite sequence  
          $\femptylabspos{\tuple{\niks}}{\rootpos}{\aiter}
             \stregmred
           \flabspos{\avari{1}\ldots\avari{n}}{\aposi{1},\ldots,\aposi{n}}{\bpos}{\avari{n}}$
          with $\apos  = \aposi{n}$}
%
%
    \\
    \apos \captures \bpos
       \;\;&\Longleftrightarrow\;\;
    \parbox[t]{310pt}{there is a rewrite sequence\\
                      $\femptylabspos{\tuple{\niks}}{\rootpos}{\aiter} 
                         \stregmred 
                       \flabspos{\avari{1}\ldots\avari{i}\ldots\avari{n}}{\aposi{1},\ldots,\aposi{i},\ldots,\aposi{n}}{\aposi{n}00}{\biter}
                         \stregmred  
                       \flabspos{\avari{1}\ldots\avari{i}}{\aposi{1},\ldots,\aposi{i}}{\bpos}{\avari{i}}$\\[0.35ex]
                      such that 
                                $i < n$, and $\apos = \aposi{n}$}
  \end{align*}
\end{proposition}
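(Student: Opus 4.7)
The plan is to exploit the structural correspondence, already made available by Propositions~\ref{prop:positions:vs:stregred:rewseqs} and~\ref{prop:stRegposCRS}, between positions in $\aiter$ and the unique $\stregred$-rewrite sequences in $\stRegposCRS$ that access them. Under this correspondence the binding equivalence is essentially a direct restatement of Proposition~\ref{prop:positions:vs:stregred:rewseqs}(i)(b) in the special case where the body of the reached term is a single variable, whereas the capturing equivalence additionally requires an analysis of how the abstraction prefix evolves between body positions $\apos 00$ and $\bpos$.

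For the binding equivalence, both directions run via the same tool. Given $\apos\binds\bpos$, Proposition~\ref{prop:stRegposCRS}(iii) supplies a rewrite sequence $\femptylabspos{\tuple{\niks}}{\rootpos}{\aiter} \stregmred \flabspos{\avari{1}\ldots\avari{n}}{\aposi{1},\ldots,\aposi{n}}{\bpos}{\avari{n}}$, and Proposition~\ref{prop:positions:vs:stregred:rewseqs}(i)(b) then tells us that the free occurrence of $\avari{n}$ at position $\bpos$ in $\aiter$ is bound by the $\lambda$-binding at position $\aposi{n}$; uniqueness of binders forces $\aposi{n}=\apos$. Conversely, given such a rewrite sequence, Proposition~\ref{prop:positions:vs:stregred:rewseqs}(ii) gives $\bpos\in\Positions{\aiter}$ and clause~(i)(b) directly yields $\aposi{n}\binds\bpos$, that is $\apos\binds\bpos$.

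For the capturing equivalence, the forward direction is the substantial part. Assume $\apos\captures\bpos$, so $\apos<\bpos$, $\apos$ is a binder position, $\bpos$ is a variable position, and its unique binder $\cpos$ satisfies $\cpos<\apos$. First use Proposition~\ref{prop:stRegposCRS}(i) to reach body position $\apos$ with body of the form $\labs{\avari{n}}{\biter}$, then apply $\rulepos{\slabsdecomp}$ to obtain the intermediate term $\flabspos{\avari{1}\ldots\avari{n}}{\aposi{1},\ldots,\aposi{n}}{\apos 00}{\biter}$ with $\aposi{n}=\apos$. Extending this by the unique $\stregred$-rewrite sequence from $\apos 00$ to $\bpos$ granted by Proposition~\ref{prop:positions:vs:stregred:rewseqs}(i), the final term has by Proposition~\ref{prop:stRegposCRS}(iii) the form $\flabspos{\bvari{1}\ldots\bvari{m}}{\bposi{1},\ldots,\bposi{m}}{\bpos}{\bvari{m}}$, with $\bposi{m}=\cpos$ by the binding equivalence just proved. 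I would then invoke Lemma~\ref{lem:def:stRegposCRS}(iii) on this second part: the lemma factorizes the sequence through a common-prefix term and guarantees that any prefix positions introduced beyond the common prefix are strictly greater than $\apos$. Since $\bposi{m}=\cpos<\apos$, no such new positions can survive in the final prefix, hence the common prefix exhausts the final one, giving $\bposi{j}=\aposi{j}$ for $j=1,\ldots,m$ and $m<n$; setting $i=m$ delivers the required form. The backward direction amounts to unpacking: the intermediate body position $\aposi{n}00\in\Positions{\aiter}$ forces $\apos=\aposi{n}$ to be a $\lambda$-position; body positions are weakly increasing along $\stregred$-sequences, so $\apos<\bpos$; and the binding equivalence together with the strict monotonicity $\aposi{i}<\aposi{n}=\apos$ from Lemma~\ref{lem:def:stRegposCRS}(i) rules out any binder at position $\ge\apos$ binding $\bpos$.

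The main obstacle is the prefix-collapse step in the forward direction of capturing. Between $\apos 00$ and $\bpos$ the prefix may both grow via $\rulepos{\slabsdecomp}$ and shrink via $\rulepos{\snlvarsucc}$, and what needs to be seen is that any newly added binding has a position that is a descendant of $\apos 00$ and hence strictly greater than $\apos$ as a string, so once the rewrite ends in a state whose last prefix position lies strictly above $\apos$ in the tree, the entire suffix of the prefix starting at $\aposi{n}=\apos$ must have been popped. Lemma~\ref{lem:def:stRegposCRS}(iii) packages precisely this stack-like invariant, so once the setup is in place the argument reduces to a careful invocation of that lemma together with the strict monotonicity of prefix positions supplied by Lemma~\ref{lem:def:stRegposCRS}(i).
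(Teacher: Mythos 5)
Your proposal is correct and follows exactly the route the paper indicates in its proof hint: everything is reduced to Proposition~\ref{prop:positions:vs:stregred:rewseqs} (and its consequence Proposition~\ref{prop:stRegposCRS}), with Lemma~\ref{lem:def:stRegposCRS}\,(\ref{lem:def:stRegposCRS:item:i}),(\ref{lem:def:stRegposCRS:item:iii}) supplying the prefix-monotonicity and the factorization that collapses the prefix in the forward capturing direction; the binding equivalence is indeed just item~(i)(b)/(ii) of Proposition~\ref{prop:positions:vs:stregred:rewseqs} specialized to a variable body, and your stack-collapse argument via Lemma~\ref{lem:def:stRegposCRS}\,(\ref{lem:def:stRegposCRS:item:iii}) together with $\aposi{m}$ being the (unique) binder position of $\bpos$, which lies strictly above $\apos$, is the right way to get $i<n$ and $\bposi{j}=\aposi{j}$.

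Two small repairs are needed. First, in the backward direction of the capturing equivalence you justify that $\apos=\aposi{n}$ is a binder position by ``$\aposi{n}00\in\Positions{\aiter}$ forces $\aposi{n}$ to be a $\lambda$-position''; this inference is not valid, since $\apos 00$ can also be a position when $\apos$ carries an application whose left argument is itself an application or abstraction. The fact you need is true, but should be argued differently: a prefix annotation $\aposi{n}$ can only be introduced by a $\rulepos{\slabsdecomp}$-step (use Lemma~\ref{lem:def:stRegposCRS}\,(\ref{lem:def:stRegposCRS:item:iv}) to exhibit that step), whose source has a $\lambda$-abstraction as body at body position $\aposi{n}$, and by Proposition~\ref{prop:positions:vs:stregred:rewseqs}\,(\ref{prop:positions:vs:stregred:rewseqs:item:i})(\ref{prop:positions:vs:stregred:rewseqs:item:i:subitem:a})/(\ref{prop:positions:vs:stregred:rewseqs:item:ii}) that body is the subterm of $\aiter$ at $\aposi{n}$, so $\aiter$ has an abstraction there. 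Second, in the forward direction, after reaching the access of the abstraction position $\apos$ via Proposition~\ref{prop:stRegposCRS}\,(\ref{prop:stRegposCRS:item:labs}) you apply $\rulepos{\slabsdecomp}$ immediately; under the $\rulepos{\snlvarsucc}$-eager strategy this step may first be forced to be preceded by $\scompressstregred$-steps if the trailing prefix binding is vacuous in the abstraction. This is harmless — after those steps the $\lambda$-step is permitted and still yields an intermediate term whose last prefix annotation is $\apos$ and whose body position is $\apos 00$ — but it should be said, since the rest of your collapse argument is phrased in terms of that intermediate term.
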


\begin{proof}[Proof (Hint)]
  The two statements of this proposition can be established along
  the statements of Proposition~\ref{prop:positions:vs:stregred:rewseqs}.
\end{proof}

\begin{lemma}\label{lem:def:stRegposCRS:item:v}
      If for some infinite \lambdaterm~$\aiter$ we have
      $\femptylabspos{\tuple{\niks}}{\rootpos}{\aiter}
         \stregmred
       \flabspos{\avari{1}\ldots\avari{n}}{\aposi{1},\ldots,\aposi{n}}{\bpos}{\biter}$ 
      with $\avari{1}, \ldots, \avari{n}$ distinct
      and 
      $i\in\setexp{1,\ldots,n}$ such that $\avari{i}$ occurs free in the body $\biter$,
      then there exists $\bpos'\in\setexp{0,1}^*$ such that 
      $\,
       \flabspos{\avari{1}\ldots\avari{n}}{\aposi{1},\ldots,\aposi{n}}{\bpos}{\biter}
         \stregmred
       \flabspos{\avari{1}\ldots\avari{i}}{\aposi{1},\ldots,\aposi{i}}{\bpos'}{\avari{i}}  
       \,$
      and $\bpos' > \bpos$. 
\end{lemma}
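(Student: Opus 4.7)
The plan is to identify an underlying variable position $\cpos\in\Positions{\aiter}$ with $\bpos\le\cpos$ at which a free occurrence of $\avari{i}$ in $\biter$ resides, to obtain from Proposition~\ref{prop:bind:iscapturedby} a global $\stregred$-rewrite sequence from the root reaching body position $\cpos$, and then to factor this sequence through the given intermediate state $T\defdby\flabspos{\avari{1}\ldots\avari{n}}{\aposi{1},\ldots,\aposi{n}}{\bpos}{\biter}$. The tail of this factorization will supply the required reduction.

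Concretely, I would first invoke Proposition~\ref{prop:positions:vs:stregred:rewseqs},~(\ref{prop:positions:vs:stregred:rewseqs:item:i}), applied to the given reduction ending at $T$: clause (b) guarantees that any free occurrence of $\avari{i}$ in $\biter$ descends from a variable position $\cpos\in\Positions{\aiter}$ with $\bpos\le\cpos$ that is bound in $\aiter$ by the $\lambda$-binding at position $\aposi{i}$, so in particular $\aposi{i}\binds\cpos$. The forward direction of Proposition~\ref{prop:bind:iscapturedby} then produces a rewrite sequence
\begin{equation*}
  \astr\funin \femptylabspos{\tuple{\niks}}{\rootpos}{\aiter}
    \stregmred
  \flabspos{\bvari{1}\ldots\bvari{m}}{\bposi{1},\ldots,\bposi{m}}{\cpos}{\bvari{m}}
\end{equation*}
with $\bposi{m}=\aposi{i}$.

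Next I would argue that $\astr$ passes through $T$. The body-position superscript in a $\stregred$-rewrite sequence is monotonically non-decreasing, and the $(\rulepos{\slabsdecomp})$- and $(\rulepos{\slappdecompi{i}})$-steps extend it along a prefix of the eventual body position; since $\bpos\le\cpos$, $\astr$ must visit some state with body position exactly $\bpos$, and by Lemma~\ref{lem:def:stRegposCRS},~(\ref{lem:def:stRegposCRS:item:ii}), that state coincides with $T$. Splitting $\astr$ at $T$ thus yields a subreduction $\astr_2\funin T\stregmred\flabspos{\bvari{1}\ldots\bvari{m}}{\bposi{1},\ldots,\bposi{m}}{\cpos}{\bvari{m}}$.

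The main obstacle, and the final step, is to verify that this target is precisely $\flabspos{\avari{1}\ldots\avari{i}}{\aposi{1},\ldots,\aposi{i}}{\cpos}{\avari{i}}$. Let $n_0$ be the largest index in $\setexp{1,\ldots,n}$ with $\aposi{j}=\bposi{j}$ for all $j\le n_0$. Applying Lemma~\ref{lem:def:stRegposCRS},~(\ref{lem:def:stRegposCRS:item:iii}), to $\astr_2$ gives that any ``new'' prefix positions $\bposi{n_0+1},\ldots,\bposi{m}$ satisfy $\bposi{n_0+1}>\bpos$, whereas $\bposi{m}=\aposi{i}<\bpos$ by Lemma~\ref{lem:def:stRegposCRS},~(\ref{lem:def:stRegposCRS:item:i}), applied to the given initial reduction. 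Hence no ``new'' positions can occur, forcing $m\le n_0$, so $\bvari{j}=\avari{j}$ and $\bposi{j}=\aposi{j}$ for all $j\le m$; then $\aposi{m}=\bposi{m}=\aposi{i}$ combined with strict monotonicity $\aposi{1}<\cdots<\aposi{n}$ forces $m=i$. Setting $\bpos'\defdby\cpos$ yields the required target state with $\bpos'\ge\bpos$; strict inequality $\bpos'>\bpos$ follows whenever the witnessing free occurrence of $\avari{i}$ lies at a proper subterm position of $\biter$, while the boundary case $\biter=\avari{i}$ is handled separately by exploiting $(\rulepos{\snlvarsucc})$-eagerness to strip the trailing vacuous bindings of $T$ down to the target.
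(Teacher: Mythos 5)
Your overall construction is sound, and since the paper itself proves this lemma only by the hint that it ``can be proved along Proposition~\ref{prop:positions:vs:stregred:rewseqs}'' (i.e.\ by a direct induction), your route --- extracting the witnessing variable position $\cpos$ with $\bpos\le\cpos$ from Proposition~\ref{prop:positions:vs:stregred:rewseqs},~(\ref{prop:positions:vs:stregred:rewseqs:item:i}),(\ref{prop:positions:vs:stregred:rewseqs:item:i:subitem:b}), obtaining a root-started sequence ending at body position $\cpos$ with last prefix position $\aposi{i}$ via the binds-direction of Proposition~\ref{prop:bind:iscapturedby}, factoring it through the given intermediate term, and then pinning down the final prefix with Lemma~\ref{lem:def:stRegposCRS},~(\ref{lem:def:stRegposCRS:item:i}) and (\ref{lem:def:stRegposCRS:item:iii}) --- is a legitimate, non-circular alternative, and the $m=i$ argument (new prefix positions created after the intermediate term would exceed $\bpos$, while $\aposi{i}<\bpos$) is correct. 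One caution: appealing to Lemma~\ref{lem:def:stRegposCRS},~(\ref{lem:def:stRegposCRS:item:ii}) to conclude that the global sequence passes through the given term \emph{itself} (rather than merely through some term with body position $\bpos$) is delicate, because reachable terms with the same body position can differ by trailing vacuous bindings; the robust reason the factorization works is $(\rulepos{\snlvarsucc})$-eagerness, which forces any sequence proceeding beyond body position $\bpos$ to traverse the entire chain of trailing-binding removals at $\bpos$, hence to visit the given term.

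The genuine gap is the strict inequality $\bpos'>\bpos$. Your separate handling of the boundary case $\biter=\avari{i}$ cannot deliver it: $(\rulepos{\snlvarsucc})$-steps leave the body position unchanged, and once the body is a variable no other rule applies, so \emph{every} reduct of $\flabspos{\avari{1}\ldots\avari{n}}{\aposi{1},\ldots,\aposi{n}}{\bpos}{\avari{i}}$ has body position exactly $\bpos$. The strict statement is in fact refuted there: for $\aiter=\labs{\avar}{\avar}$ the step $\femptylabspos{\tuple{\niks}}{\rootpos}{\aiter}\stregred\flabspos{\avar}{\rootpos}{00}{\avar}$ satisfies the hypotheses with $n=i=1$, yet the reached term is an $\sstregred$-normal form; and for the shape your stripping argument targets ($i<n$), the term $\labs{\avar}{\lapp{(\labs{\bvar}{\avar})}{(\labs{\cvar}{\cvar})}}$ reaches $\flabspos{\avar\bvar}{\rootpos,000}{00000}{\avar}$, whose only reduct is its $(\rulepos{\snlvarsucc})$-reduct with the same body position $00000$. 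So you must either exclude the case $\biter=\avari{i}$ (you cannot, as these examples show) or weaken the conclusion to $\bpos'\ge\bpos$ --- which is exactly the form in which the paper later uses the lemma in the proof of Lemma~\ref{lem:bind:capt:chains:stRegpos},~(\ref{lem:bind:capt:chains:stRegpos:item:i}) (``for some $\bposi{n+1}\ge\aposi{n+1}00$''), and which your argument does establish, with strictness precisely when $\biter\neq\avari{i}$.
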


\begin{proof}[Proof (Hint)]
  The statement of the lemma can again be proved along
  the statement of Proposition~\ref{prop:positions:vs:stregred:rewseqs}. 
\end{proof}


The lemma below describes the connection between 
\bindcaptchains\ and \posannotated\ $\stregred$\nb-re\-write sequences.

\begin{lemma}[\bindcaptchains]\label{lem:bind:capt:chains:stRegpos}
  For all $\aiter\in\iTer{\lambdacal}$ it holds:
  \vspace*{-1.5ex}
  \begin{enumerate}[(i)]
    \item{}\label{lem:bind:capt:chains:stRegpos:item:i}
      If $\femptylabspos{\niks}{\rootpos}{\aiter}
            \stregmred
          \flabspos{\avari{1}\ldots\avari{n}\!}{\aposi{1},\ldots,\aposi{n}}{\bpos}{\!\bter}$,
      then $\aposi{1},\ldots,\aposi{n}\in\Positions{\aiter}$, 
      and if $n\ge 2$,      
      there are $\bposi{2},\ldots,\bposi{n}\in\Positions{\ater}$ such that
      $\aposi{1} \binds \bposi{2} \iscapturedby \aposi{2} \binds \ldots \binds \bposi{n} \iscapturedby \aposi{n}$.
      %
    \item{}\label{lem:bind:capt:chains:stRegpos:item:ii}
      If $\aposi{1} \binds \bposi{2} \iscapturedby \aposi{2} \binds \ldots \binds \bposi{n} \iscapturedby \aposi{n}$
      is a \bindcaptchain\ in $\aiter$,
      then there exist positions $\cposi{1},\ldots,\cposi{m}\in\Positions{\ater}$ with $m\ge n$ such that
      $\femptylabspos{\tuple{\niks}}{\rootpos}{\ater}
         \stregmred
       \flabspos{\avari{1}\ldots\avari{m}}{\cposi{1},\ldots,\cposi{m}}{\cposi{m}00}{\biter}$,
      $\aposi{1},\ldots,\aposi{n} \in \setexp{\cposi{1},\ldots,\cposi{m}}$,
      and $\aposi{1} < \aposi{2} < \ldots < \aposi{n} = \cposi{m} $.
  \end{enumerate}
\end{lemma}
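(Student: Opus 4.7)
For part~(\ref{lem:bind:capt:chains:stRegpos:item:i}), Lemma~\ref{lem:def:stRegposCRS}(\ref{lem:def:stRegposCRS:item:i}) yields directly that $\aposi{1} < \ldots < \aposi{n}$ all lie in $\Positions{\aiter}$. The chain itself would then be constructed by forward induction on $i \in \setexp{2,\ldots,n}$, each step producing a position $\bposi{i}$ with $\aposi{i-1} \binds \bposi{i} \iscapturedby \aposi{i}$. For fixed $i$, the given rewrite sequence passes through an intermediate term $\flabspos{\avari{1}\ldots\avari{i}}{\aposi{1},\ldots,\aposi{i}}{\aposi{i}00}{\cteri{i}}$ (the point at which $\aposi{i}$ is first admitted into the prefix). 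Applying Lemma~\ref{lem:def:stRegposCRS}(\ref{lem:def:stRegposCRS:item:iv}) to the initial subsegment up to this term factorises it as
\[
\femptylabspos{\niks}{\rootpos}{\aiter} \stregmred \flabspos{\avari{1}\ldots\avari{i-1}}{\aposi{1},\ldots,\aposi{i-1}}{\aposi{i}}{\labs{\avari{i}}{\cteri{i}}} \stregred \flabspos{\avari{1}\ldots\avari{i}}{\aposi{1},\ldots,\aposi{i}}{\aposi{i}00}{\cteri{i}},
\]
with $\avari{i-1}$ free in $\labs{\avari{i}}{\cteri{i}}$ and hence, since $\avari{i-1}\ne\avari{i}$, free also in $\cteri{i}$. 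Lemma~\ref{lem:def:stRegposCRS:item:v} then supplies a position $\bposi{i} > \aposi{i}00$ with
\[
\flabspos{\avari{1}\ldots\avari{i}}{\aposi{1},\ldots,\aposi{i}}{\aposi{i}00}{\cteri{i}} \stregmred \flabspos{\avari{1}\ldots\avari{i-1}}{\aposi{1},\ldots,\aposi{i-1}}{\bposi{i}}{\avari{i-1}}.
\]
Concatenating these segments and invoking Proposition~\ref{prop:bind:iscapturedby} twice (its first equivalence yielding $\aposi{i-1} \binds \bposi{i}$, and its second, with $\aposi{i}$ playing the role of the capturing binder, yielding $\bposi{i} \iscapturedby \aposi{i}$) closes the induction step.

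For part~(\ref{lem:bind:capt:chains:stRegpos:item:ii}), the plan is induction on $n$. For $n=1$, Proposition~\ref{prop:stRegposCRS}(\ref{prop:stRegposCRS:item:labs}) supplies a rewrite sequence ending at a term of the form $\flabspos{\vec{\avar}}{\vec{\cpos}}{\aposi{1}}{\labs{\avari}{\biteri{0}}}$, and a concluding $\rulepos{\slabsdecomp}$-step (preceded if necessary by whatever $\rulepos{\snlvarsucc}$-steps eagerness demands) yields a term of the required shape. For the inductive step from $k$ to $k+1$, starting from a rewrite sequence to $\flabspos{\vec{\avar}}{\vec{\cpos}}{\aposi{k}00}{\biter}$ with $\setexp{\aposi{1},\ldots,\aposi{k}} \subseteq \setexp{\cposi{1},\ldots,\cposi{m}}$ and $\cposi{m} = \aposi{k}$, I will extend by $\stregred$-steps that trace the path in $\aiter$ from $\aposi{k}00$ down to $\aposi{k+1}$: $\rulepos{\slappdecompi{i}}$-steps into the branch of each encountered application containing $\aposi{k+1}$, $\rulepos{\slabsdecomp}$-steps across intermediate binders, $\rulepos{\snlvarsucc}$-steps wherever eagerness prescribes, and a concluding $\rulepos{\slabsdecomp}$-step at $\aposi{k+1}$ to admit it into the prefix.

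The decisive point is that positions $\aposi{1},\ldots,\aposi{k}$ remain in the prefix throughout this descent. The chain furnishes a bound occurrence $\bposi{k+1}$ of the variable bound at $\aposi{k}$, and the capturing relation $\bposi{k+1} \iscapturedby \aposi{k+1}$ forces $\aposi{k+1} < \bposi{k+1}$, making $\bposi{k+1}$ a descendant of every body position visited on the descent from $\aposi{k}00$ to $\aposi{k+1}$. Hence that variable is non-vacuous in the current body at each such position, so eagerness cannot remove it whenever it is rightmost; and the earlier prefix variables are shielded, since $\rulepos{\snlvarsucc}$ acts only on the rightmost and would have to remove this variable before reaching them. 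Branch selection for applications is canonical: since $\aposi{k+1}$, and therefore $\bposi{k+1}$, lies in a unique branch of each encountered application $\lapp{\biteri{0}}{\biteri{1}}$, the choice between $\rulepos{\slappdecompi{0}}$ and $\rulepos{\slappdecompi{1}}$ is forced.

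The principal technical obstacle will be the careful bookkeeping at intermediate binders encountered en route whose bound variable is itself vacuous in the remaining body: each such binder entails a $\rulepos{\slabsdecomp}$-step that must, by eagerness, be immediately followed by a $\rulepos{\snlvarsucc}$-step erasing it again. A fully formal proof would likely proceed by a subsidiary induction on the length of the path in $\aiter$ between $\aposi{k}00$ and $\aposi{k+1}$, maintaining as invariant that the rightmost prefix variable is non-vacuous in the current body, so that the strategy's permitted next move is always the descent step required.
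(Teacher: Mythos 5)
Your part~(i) is essentially the paper's own argument: the paper proves it by induction on the prefix length, and in each step uses exactly your ingredients — the factorization of Lemma~\ref{lem:def:stRegposCRS},~(\ref{lem:def:stRegposCRS:item:iv}) at the point where the last prefix variable is introduced, the freeness of the preceding prefix variable, Lemma~\ref{lem:def:stRegposCRS:item:v} to reach an occurrence of that variable, and the two equivalences of Proposition~\ref{prop:bind:iscapturedby} to obtain the links $\aposi{i-1}\binds\bposi{i}\iscapturedby\aposi{i}$; organizing this as a per-index argument rather than an induction on $n$ is only a presentational difference (your ``point where $\aposi{i}$ is first admitted'' exists by the stack discipline of the prefix, which is what the repeated use of Lemma~\ref{lem:def:stRegposCRS},~(\ref{lem:def:stRegposCRS:item:iv}) encapsulates).

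For part~(ii) you take a genuinely different route. The paper never constructs the descent from $\aposi{n}$ to $\aposi{n+1}$ by hand: it invokes Proposition~\ref{prop:bind:iscapturedby} to obtain, from the links $\aposi{n}\binds\bposi{n+1}$ and $\bposi{n+1}\iscapturedby\aposi{n+1}$, two already-existing $\stregred$\nb-rewrite sequences from the root, and then splices them onto the induction-hypothesis sequence using the coherence statement Lemma~\ref{lem:def:stRegposCRS},~(\ref{lem:def:stRegposCRS:item:ii}) (rewrite sequences reaching the same body position reach the same annotated term), which immediately yields the required sequence ending at body position $\aposi{n+1}00$. You instead build the extension directly by following the path from $\aposi{k}00$ to $\aposi{k+1}$ in $\aiter$, and your key observation — that $\aposi{k+1}<\bposi{k+1}$ makes the occurrence at $\bposi{k+1}$ a descendant of every position on the descent, so the prefix entry for $\aposi{k}$ stays non-vacuous and everything below it is shielded from $\rulepos{\snlvarsucc}$ — is correct and is exactly what makes the construction go through. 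What the paper's splicing buys is that all the ``the eager strategy actually realizes this traversal'' work is already packaged in Propositions~\ref{prop:positions:vs:stregred:rewseqs}--\ref{prop:bind:iscapturedby}, so the induction step is short; your direct construction is more elementary and self-contained, but its heavy lifting (the subsidiary induction on the path length, with bookkeeping for intermediate binders that are pushed and later popped) is left as a sketch and essentially re-proves a special case of Proposition~\ref{prop:positions:vs:stregred:rewseqs}. One small correction to that sketch: the invariant should not be ``the rightmost prefix variable is non-vacuous in the current body'' (this fails right after an intermediate vacuous binder is pushed, and an initially non-vacuous intermediate binder may become vacuous and poppable later), but rather ``the prefix entry annotated $\aposi{k}$ is never the source of a $\rulepos{\snlvarsucc}$\nb-step, hence it and all entries below it persist''; with that formulation, together with the remark that the final prefix is increasing by Lemma~\ref{lem:def:stRegposCRS},~(\ref{lem:def:stRegposCRS:item:i}) (which gives $\aposi{1}<\ldots<\aposi{n+1}=\cposi{m}$), your plan yields a complete proof.
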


\begin{proof}
  We first prove statement~(\ref{lem:bind:capt:chains:stRegpos:item:i}),
  by induction on $n\in\nats$.
  For $n=0$ nothing has to be shown.
  In case of $n=1$, for a given rewrite sequence
  $\femptylabspos{\niks}{\rootpos}{\aiter}
     \stregmred
   \flabspos{\avari{1}}{\aposi{1}}{\bpos}{\!\biter}$
  it has to be shown that $\aposi{1}\in\Positions{\aiter}$.
  This follows from Lemma~\ref{lem:def:stRegposCRS},~(\ref{lem:def:stRegposCRS:item:i}).
  For the induction step from $n$ to $n+1$ we let $n\ge 1$, and assume a rewrite sequence of the form
  $\arewseq \funin\;
          \femptylabspos{\niks}{\rootpos}{\aiter}
            \stregmred
          \flabspos{\avari{1}\ldots\avari{n+1}\!}{\aposi{1},\ldots,\aposi{n+1}}{\bpos}{\!\biter}$.
  By Lemma~\ref{lem:def:stRegposCRS},~(\ref{lem:def:stRegposCRS:item:iv}),
  $\arewseq$ is of the form:
  \begin{equation}\label{eq1:prf:lem:def:stRegposCRS}
  \left.  
  \begin{aligned}
    \arewseq \funin \;
      \femptylabspos{\niks}{\rootpos}{\ater}
        & \stregmred
      \flabspos{\avari{1}\ldots\avari{n}}{\aposi{1},\ldots,\aposi{n}}{\aposi{n+1}}{\labs{\avari{n+1}}{\citer}}
      \\
        & \stregred
      \flabspos{\avari{1}\ldots\avari{n}\avari{n+1}}{\aposi{1},\ldots,\aposi{n},\aposi{n+1}}{\aposi{n+1}00}{\citer} 
      \\
        & \stregmred 
      \flabspos{\avari{1}\ldots\avari{n}\avari{n+1}}{\aposi{1},\ldots,\aposi{n},\aposi{n+1}}{\bpos}{\biter}    
   \end{aligned}
   \qquad\right\}  
   \end{equation}
   By applying the induction hypothesis to the initial segment of $\arewseq$
   formed by the rewrite steps in the first line of \eqref{eq1:prf:lem:def:stRegposCRS},
   it follows that $\aposi{1}\ldots\aposi{n}\in\Positions{\aiter}$, and 
   that there exists a \bindcaptchain\
   $\aposi{1} \binds \bposi{2} \iscapturedby \aposi{2} \binds \ldots \binds \bposi{n} \iscapturedby \aposi{n}$ in $\aiter$,
   for some positions $\bposi{2},\ldots,\bposi{n}\in\Positions{\ater}$
   (if $n=1$, this \bindcaptchain\ has 
                                       length $0$).  
   Now note that $\avari{n}$ must occurs free in $\labs{\avari{n+1}}{\citer}$ and in $\citer$,
   because otherwise the $\sstregred$\nb-step displayed in \eqref{eq1:prf:lem:def:stRegposCRS}, which is a $\slabsdecompred$\nb-step,
   would not be $\srulep{\snlvarsucc}$-eager.
   Then Lemma~\ref{lem:def:stRegposCRS:item:v},
   implies that
   $\flabspos{\avari{1}\ldots\avari{n+1}}{\aposi{1},\ldots,\aposi{n+1}}{\aposi{n+1}00}{\citer} 
      \stregmred
    \flabspos{\avari{1}\ldots\avari{n}}{\aposi{1},\ldots,\aposi{n}}{\bposi{n+1}}{\avari{n}}$
   holds for some $\bposi{n+1} \ge \aposi{n+1}00$.
   Together with \eqref{eq1:prf:lem:def:stRegposCRS} it follows:
   \begin{equation*}
      \femptylabspos{\niks}{\rootpos}{\ater}
        \stregmred
      \flabspos{\avari{1}\ldots\avari{n+1}}{\aposi{1},\ldots,\aposi{n+1}}{\aposi{n+1}00}{\citer} 
        \stregmred
      \flabspos{\avari{1}\ldots\avari{n}}{\aposi{1},\ldots,\aposi{n}}{\bposi{n+1}}{\avari{n}} 
   \end{equation*}
   From this we obtain that $\bposi{n+1},\aposi{n+1}\in\Positions{\aiter}$ holds
   by Lemma~\ref{lem:def:stRegposCRS},~(\ref{lem:def:stRegposCRS:item:i}) ,
   and that
   $\aposi{n} \binds \bposi{n+1}$ and $\bposi{n+1} \iscapturedby \aposi{n+1}$
   hold by Proposition~\ref{prop:bind:iscapturedby}.
   With these links the already obtained \bindcaptchain\ can be extended to
   $\aposi{1} \binds \bposi{2} \iscapturedby \aposi{2} \binds \ldots \iscapturedby \aposi{n} \binds \bposi{n+1} \iscapturedby \aposi{n+1}$. 
   Also, we have seen that all positions in this \bindcaptchain\ are in $\Positions{\aiter}$.
   In this way we have established the induction step.
   
   Second, we prove statement~(\ref{lem:bind:capt:chains:stRegpos:item:ii}) of the lemma
   by induction on $n$, the number of binder positions in the assumed \bindcaptchain. 
   
   In the base case $n=0$ nothing needs to be shown. 
   For showing the case that $n=1$, let $\aposi{1}$ be a binder position in $\aiter$. 
   Then by using Proposition~\ref{prop:stRegposCRS},~(\ref{prop:stRegposCRS:item:labs}),
   and a $\sstregred$\nb-step over a \lambdaabstraction\ we obtain a rewrite sequence of the form:
   \begin{equation*}
     \femptylabspos{\tuple{\niks}}{\rootpos}{\aiter}
       \stregmred
     \flabspos{\avari{1}\ldots\avari{m-1}}{\cposi{1},\ldots,\cposi{m}}{\aposi{1}}{\labs{\avari{m}}{\biter}}
       \stregred
     \flabspos{\avari{1}\ldots\avari{m-1}\avari{m}}{\cposi{1},\ldots,\cposi{m-1},\cposi{m}}{\cposi{m}00}{\biter}  
   \end{equation*}
   for some $m\ge 1$ and $\cposi{1},\ldots,\cposi{m}\in\Positions{\aiter}$ with $\cposi{m} = \aposi{1}$.
   This shows the statement for $n=1$.
   
   For the induction step, we let $n\ge 1$ and
   $\aposi{1} \binds \bposi{2} \iscapturedby \aposi{2} \binds \ldots \iscapturedby \aposi{n} \binds \bposi{n+1} \iscapturedby \aposi{n+1}$
   be a \bindcaptchain\ in $\aiter$. 
   By applying the induction hypothesis to the \bindcaptchain\ formed by all but the last two links,
   we obtain $\cposi{1},\ldots,\cposi{m_0},\dpos\in\Positions{\ater}$ with $m_0\ge n$
   and a rewrite sequence:
   \begin{equation}\label{eq2:prf:lem:def:stRegposCRS}
     \begin{aligned}  
       \femptylabspos{\niks}{\rootpos}{\ater}
           \stregmred &
         \flabspos{\avari{1}\ldots\avari{m_0}}{\cposi{1},\ldots,\cposi{m_0}}{\cposi{m_0}00}{\biteri{0}}
       \\[-0.5ex]
         & 
         \text{such that
                $\aposi{1},\ldots,\aposi{n} \in \setexp{\cposi{1},\ldots,\cposi{m_0}}$
                and $\aposi{1} < \aposi{2} < \ldots < \aposi{n} = \cposi{m_0} $}
     \end{aligned}           
   \end{equation}
   Now since $\aposi{n} \binds \bposi{n+1}$ holds,
   it follows by Proposition~\ref{prop:bind:iscapturedby}
   that
   $\femptylabspos{\niks}{\rootpos}{\ater}
      \stregmred
    \flabspos{\bvari{1}\ldots\bvari{k}}{\dposi{1},\ldots,\dposi{k}}{\bposi{n+1}}{\bvari{k}}$
   for some $k\ge 1$ and $\dposi{1},\ldots,\dposi{k}$ such that
   $\dposi{k} = \aposi{n} = \cposi{m_0}$.
   By Lemma~\ref{lem:def:stRegposCRS},~(\ref{lem:def:stRegposCRS:item:iv}), it follows:
   \begin{equation*}
    \femptylabspos{\niks}{\rootpos}{\ater}
      \stregmred
    \flabspos{\bvari{1}\ldots\bvari{k}}{\dposi{1},\ldots,\dposi{k}}{\dposi{k}00}{\biteracci{0}}  
      \stregmred
    \flabspos{\bvari{1}\ldots\bvari{k}}{\dposi{1},\ldots,\dposi{k}}{\bposi{n+1}}{\bvari{k}}
   \end{equation*} 
   with some $\biteracci{0}$. 
   Then Lemma~\ref{lem:def:stRegposCRS},~(\ref{lem:def:stRegposCRS:item:ii}), entails
   $\flabspos{\bvari{1}\ldots\bvari{k}}{\dposi{1},\ldots,\dposi{k}}{\dposi{k}00}{\biteracci{0}}
      = 
    \flabspos{\avari{1}\ldots\avari{m_0}}{\cposi{1},\ldots,\cposi{m_0}}{\cposi{m_0}00}{\biteri{0}}$,
   $\flabspos{\bvari{1}\ldots\bvari{k}}{\dposi{1},\ldots,\dposi{k}}{\bposi{n+1}}{\bvari{k}}
      = 
    \flabspos{\avari{1}\ldots\avari{m_0}}{\cposi{1},\ldots,\cposi{m_0}}{\bposi{n+1}}{\avari{m_0}}$,
   and hence $k = m_0$.
   Thus we obtain:
   \begin{equation}\label{eq3:prf:lem:def:stRegposCRS}
     \femptylabspos{\niks}{\rootpos}{\ater}
       \stregred
     \flabspos{\avari{1}\ldots\avari{m_0}}{\cposi{1},\ldots,\cposi{m_0}}{\cposi{m_0}00}{\biteri{0}}
       \stregred
     \flabspos{\avari{1}\ldots\avari{m_0}}{\cposi{1},\ldots,\cposi{m_0}}{\bposi{n+1}}{\avari{m_0}}  
   \end{equation}    
   Due to the last link $\bposi{n+1} \iscapturedby \aposi{n+1}$ in the assumed \bindcaptchain\ 
   there exists, in view of Proposition~\ref{prop:bind:iscapturedby},
   a rewrite sequence
   $\femptylabspos{\niks}{\rootpos}{\ater}
       \stregmred
     \flabspos{\cvari{1}\ldots\cvari{l}}{\dposi{1},\ldots,\dposi{l}}{\dposi{l}00}{\citer}
       \stregmred
     \flabspos{\cvari{1}\ldots\cvari{l_0}}{\dposi{1},\ldots,\dposi{l_0}}{\bposi{n+1}}{\cvari{l_0}}$
   for some $\dposi{1},\ldots,\dposi{l}$ 
   with $\dposi{l} = \aposi{n+1}$, and $1 \le l_0 < l$. 
   By Lemma~\ref{lem:def:stRegposCRS},~(\ref{lem:def:stRegposCRS:item:ii}), it follows 
   from this rewrite sequence and the one in \eqref{eq3:prf:lem:def:stRegposCRS} that
   $\flabspos{\cvari{1}\ldots\cvari{l_0}}{\dposi{1},\ldots,\dposi{l_0}}{\bposi{n+1}}{\cvari{l_0}}
      =
    \flabspos{\avari{1}\ldots\avari{m_0}}{\cposi{1},\ldots,\cposi{m_0}}{\bposi{n+1}}{\avari{m_0}}$,
   $l_0 = m_0$, and
   $\flabspos{\cvari{1}\ldots\cvari{l}}{\dposi{1},\ldots,\dposi{l}}{\dposi{l}00}{\citer}
      =
    \flabspos{\cvari{1}\ldots\cvari{l}}{\cposi{1},\ldots,\cposi{m_0},\dposi{{l_0}+1},\ldots,\dposi{l}}{\dposi{l}00}{\citer}$.  
   Hence we obtain:   
   \begin{equation}\label{eq4:prf:lem:def:stRegposCRS}
     \femptylabspos{\niks}{\rootpos}{\ater}
       \stregmred
     \flabspos{\cvari{1}\ldots\cvari{l}}{\cposi{1},\ldots,\cposi{m_0},\dposi{{l_0}+1},\ldots,\dposi{l}}{\dposi{l}00}{\citer}
       \stregmred
     \flabspos{\avari{1}\ldots\avari{m_0}}{\cposi{1},\ldots,\cposi{m_0}}{\bposi{n+1}}{\avari{m_0}}    
   \end{equation}
   with $\dposi{l} = \aposi{n+1}$. 
   Now we let $m \defdby l$,
   $\cposi{m_0 +1} \defdby \dposi{m_0 + 1}$,
   \ldots,
   $\cposi{l} \defdby \dposi{l}$
   (note that thus $\cposi{l} = \dposi{l} = \aposi{n+1}$)
   and 
   $\flabspos{\avari{1}\ldots\avari{m}}{\cposi{1},\ldots,\cposi{m}}{\cposi{m}00}{\biter}
      \defdby
    \flabspos{\cvari{1}\ldots\cvari{l}}{\cposi{1},\ldots,\cposi{m_0},\dposi{{l_0}+1},\ldots,\dposi{l}}{\dposi{l}00}{\citer}$  
   to find that \eqref{eq4:prf:lem:def:stRegposCRS} then yields:  
   \begin{equation*}
     \femptylabspos{\niks}{\rootpos}{\ater}
       \stregmred
     \flabspos{\avari{1}\ldots\avari{m}}{\cposi{1},\ldots,\cposi{m}}{\cposi{m}00}{\biter}
   \end{equation*}
   with 
   $\aposi{1},\ldots,\aposi{n+1} \in \setexp{\cposi{1},\ldots,\cposi{m}}$
   due to 
   $\aposi{1},\ldots,\aposi{n} \in \setexp{\cposi{1},\ldots,\cposi{m_0}}$
   and
   $\aposi{n+1} = \dposi{l} = \cposi{m}$.
   Since by Lemma~\ref{lem:def:stRegposCRS}, (\ref{lem:def:stRegposCRS:item:i}), 
   $\cposi{1} < \cposi{2} < \ldots < \cposi{m}$ holds, 
   we have
   $\aposi{1} < \aposi{2} < \ldots < \aposi{n} = \cposi{m_0}$ 
   by the induction hypothesis,
   and know $m_0 < m$ and $\cposi{m} = \aposi{n+1}$, 
   we also obtain 
   $\aposi{1} < \aposi{2} < \ldots < \aposi{n} < \aposi{n+1} = \cposi{m}$.
   In this way we have successfully performed the induction step. 
\end{proof}

The following lemma relates the length of \bindcaptchains\ in an infinite \lambdaterm~$\aiter$
with the length of abstraction prefixes in generated subterms of $\aiter$ with respect to $\stregred$,
that is, the length of abstraction prefixes of terms that can be obtained from $\femptylabs{\aiter}$
by $\stregred$\nb-rewrite sequences in the not \posannotated\ system $\stRegCRS$.  

\begin{lemma}\label{lem:fin:bind:capt:chains}
  Let $\aiter\in\Ter{\inflambdacal}$. For all $n\in\nats$ it holds:
  $\aiter$ contains a \bindcaptchain\ of length $\max \setexp{n-1,0}$
  if and only if
  there is a rewrite sequence 
  $\femptylabs{\aiter} \stregmred \flabs{\avari{1}\ldots\avari{n}}{\biter}$
  (in $\stRegCRS$) for some $\flabs{\avari{1}\ldots\avari{n}}{\biter} \in\gSTstreg{\aiter}$. 
\end{lemma}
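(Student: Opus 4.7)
The strategy is to transport the statement across the two systems $\stRegCRS$ and $\stRegposCRS$ and then invoke Lemma~\ref{lem:bind:capt:chains:stRegpos}, which already supplies the bridge between \bindcaptchains\ and \posannotated\ $\stregred$\nb-rewrite sequences. Projection and lifting between the two systems are provided by Proposition~\ref{prop:position:lifting:projecting}. The case $n=0$ is trivial: the empty rewrite sequence witnesses $\femptylabs{\aiter}\stregmred\femptylabs{\aiter}$, and a \bindcaptchain\ of length $\max\{0-1,0\}=0$ is vacuously present. So we restrict attention to $n\ge 1$, where the target length is $n-1$.

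For the forward direction, suppose $\aiter$ contains a \bindcaptchain\ $\aposi{1}\binds\bposi{2}\iscapturedby\aposi{2}\binds\ldots\binds\bposi{n}\iscapturedby\aposi{n}$ of length $n-1$. By Lemma~\ref{lem:bind:capt:chains:stRegpos},~(\ref{lem:bind:capt:chains:stRegpos:item:ii}), there is an integer $m\ge n$, positions $\cposi{1},\ldots,\cposi{m}\in\Positions{\aiter}$, a term $\biter$, and a rewrite sequence $\arewseq^{\spos}\funin\,\femptylabspos{\tuple{\niks}}{\rootpos}{\aiter}\stregmred\flabspos{\avari{1}\ldots\avari{m}}{\cposi{1},\ldots,\cposi{m}}{\cposi{m}00}{\biter}$ in $\stRegposCRS$. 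Projection via Proposition~\ref{prop:position:lifting:projecting},~(\ref{prop:position:lifting:projecting:item:projecting}), strips the position annotations and produces a rewrite sequence $\arewseq\funin\,\femptylabs{\aiter}\stregmred\flabs{\avari{1}\ldots\avari{m}}{\biter}$ in $\stRegCRS$ of prefix length $m\ge n$. Since each $\stregred$\nb-step changes the prefix length by $-1$, $0$, or $+1$, and $\arewseq$ starts from prefix length $0$, by an intermediate-value argument there is a prefix $\arewseq'$ of $\arewseq$ whose last term has prefix length exactly $n$, supplying the required generated subterm in $\gSTstreg{\aiter}$.

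For the backward direction, suppose $\femptylabs{\aiter}\stregmred\flabs{\avari{1}\ldots\avari{n}}{\biter}$ in $\stRegCRS$. Lifting via Proposition~\ref{prop:position:lifting:projecting},~(\ref{prop:position:lifting:projecting:item:lifting}), starting from the empty annotation data at $\femptylabspos{\tuple{\niks}}{\rootpos}{\aiter}$, we obtain a \posannotated\ rewrite sequence
\begin{equation*}
  \femptylabspos{\tuple{\niks}}{\rootpos}{\aiter}\;\stregmred\;\flabspos{\avari{1}\ldots\avari{n}}{\aposi{1},\ldots,\aposi{n}}{\bpos}{\biter}
\end{equation*}
in $\stRegposCRS$ for suitable $\aposi{1},\ldots,\aposi{n},\bpos$. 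By Lemma~\ref{lem:bind:capt:chains:stRegpos},~(\ref{lem:bind:capt:chains:stRegpos:item:i}), $\aposi{1},\ldots,\aposi{n}\in\Positions{\aiter}$ and there are positions $\bposi{2},\ldots,\bposi{n}\in\Positions{\aiter}$ yielding the \bindcaptchain\ $\aposi{1}\binds\bposi{2}\iscapturedby\aposi{2}\binds\ldots\binds\bposi{n}\iscapturedby\aposi{n}$ of length $n-1$ in $\aiter$.

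The main obstacle I expect is the intermediate-value step in the forward direction: one needs that along $\arewseq$ the prefix-length truly traverses every value from $0$ up to $m$. This follows from inspecting the rules in Definition~\ref{def:RegCRS:stRegCRS}, since every $\stregred$\nb-step either leaves the prefix length fixed (the $\slappdecompired{i}$\nb-steps), increases it by one ($\slabsdecompred$), or decreases it by one ($\scompressstregred$), so consecutive prefix lengths differ by at most one and a sequence from $0$ to $m$ must pass through $n$. All other pieces are routine bookkeeping that is already encapsulated in the cited lifting, projection, and chain lemmas.
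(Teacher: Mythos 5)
Your proof is correct and follows the paper's route in all essentials: both directions are bridged through the \posannotated\ system $\stRegposCRS$ via Proposition~\ref{prop:position:lifting:projecting} and Lemma~\ref{lem:bind:capt:chains:stRegpos}, and your ``backward'' direction (lift, then apply Lemma~\ref{lem:bind:capt:chains:stRegpos},~(\ref{lem:bind:capt:chains:stRegpos:item:i})) is exactly the paper's argument. The one place you genuinely diverge is the extraction, in the other direction, of a generated subterm with prefix length exactly $n$ from the sequence of length $m\ge n$ supplied by Lemma~\ref{lem:bind:capt:chains:stRegpos},~(\ref{lem:bind:capt:chains:stRegpos:item:ii}): the paper stays in the annotated system and uses (possibly repeated) Lemma~\ref{lem:def:stRegposCRS},~(\ref{lem:def:stRegposCRS:item:iii}) to factor the sequence through the term with prefix $\cposi{1},\ldots,\cposi{n}$ before projecting that initial segment, whereas you project the whole sequence and apply a discrete intermediate-value argument on prefix lengths (each $\stregred$\nb-step in $\stRegCRS$ changes the prefix length by at most one, via $\slappdecompired{i}$, $\slabsdecompred$, or $\scompressstregred$). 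Your variant is sound and slightly more elementary, at the cost of losing the extra information the paper's factorization provides (that the length-$n$ prefix consists precisely of the initial chain positions), which the lemma as stated does not need. The only blemish is the corner case: your remark that a \bindcaptchain\ of length $0$ is ``vacuously present'' at $n=0$ glosses over the fact that, under the paper's reading, a length-$0$ chain is a single binder position; but the paper's own proof is equally loose there, and the substantive cases $n\ge 1$ are handled correctly.
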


\begin{proof}
  Let $\aiter\in\Ter{\inflambdacal}$. 
  For showing the direction ``$\Leftarrow$'',
  we assume a rewrite sequence in $\stRegARS$ of the form
  $\femptylabs{\aiter} \stregmred \flabs{\avari{1}\ldots\avari{n}}{\biter}$.
  By Proposition~\ref{prop:position:lifting:projecting}, (\ref{prop:position:lifting:projecting:item:lifting}), 
  this rewrite sequence can be lifted to a rewrite sequence
  $\femptylabspos{\niks}{\rootpos}{\aiter} 
     \stregmred
   \flabspos{\avari{1}\ldots\avari{n}}{\aposi{1},\ldots,\aposi{n}}{\bpos}{\biter}$
  in $\stRegposARS$.
  Then it follows from Lemma~\ref{lem:bind:capt:chains:stRegpos}, (\ref{lem:bind:capt:chains:stRegpos:item:i}),
  that there exists a \bindcaptchain\ in $\aiter$ of length~$\max \setexp{n-1,0}$.
  
  Now we show the direction ``$\Rightarrow$''.
  For the case $n=0$ nothing has to be shown.
  Now we let $n\ge 1$, and
  suppose that
  $\aposi{1} \binds \bposi{2} \iscapturedby \aposi{2} \binds \ldots \iscapturedby \aposi{n}$
  is a \bindcaptchain\ in $\aiter$ of length $n-1$. 
  Then by an appeal to Lemma~\ref{lem:bind:capt:chains:stRegpos},~(\ref{lem:bind:capt:chains:stRegpos:item:ii}),
  we obtain a rewrite sequence $\arewseq$ in $\stRegposCRS$ of the form
  $\femptylabspos{\niks}{\rootpos}{\ater}
         \stregmred
   \flabspos{\avari{1}\ldots\avari{m}}{\cposi{1},\ldots,\cposi{m}}{\cposi{m}00}{\biter}$
  for positions $\cposi{1},\ldots,\cposi{m}\in\Positions{\ater}$ with $m\ge n$ such that
  $\aposi{1},\ldots,\aposi{n} \in \setexp{\cposi{1},\ldots,\cposi{m}}$,
  and $\aposi{1} < \aposi{2} < \ldots < \aposi{n} = \cposi{m} $.
  By possibly repeated application of 
  Lemma~\ref{lem:def:stRegposCRS},~(\ref{lem:def:stRegposCRS:item:iii}),
  we obtain that $\arewseq$ is actually of the form
  $\femptylabspos{\tuple{\niks}}{\rootpos}{\aiter}
     \stregmred
   \flabspos{\avari{1}\ldots\avari{n}}{\cposi{1},\ldots,\cposi{n}}{\cposi{n}00}{\biteri{0}}
     \stregmred
   \flabspos{\avari{1}\ldots\avari{m}}{\cposi{1},\ldots,\cposi{m}}{\cposi{m}00}{\biter}  
   $.
  By applying
  Proposition$\,$\ref{prop:position:lifting:projecting},$\,$(\ref{prop:position:lifting:projecting:item:projecting}),
  to the first seqment of $\arewseq$ displayed here
  we obtain the $\stregred$\nb-re\-write sequence
  $\femptylabs{\aiter}
     \stregmred
   \flabs{\avari{1}\ldots\avari{n}}{\biteri{0}}$
  in $\stRegCRS$ by projection through just dropping the position annotations.  
  By the definition of $\gSTstreg{\aiter}$ (see Definition~\ref{def:gST:reg:streg}),
  $\flabs{\avari{1}\ldots\avari{n}}{\biteri{0}} \in\gSTstreg{\aiter}$ follows. 
\end{proof}

The lemma below states a condition that guarantees an infinite \bindcaptchain\
in an infinite \lambdaterm: the existence of an infinite $\stregred$\nb-re\-write sequence
in which the length of the abstraction prefixes tends to infinity in the limit. 

\begin{lemma}[infinite \bindcaptchains]\label{lem:inf:bind:capt:chains}
  Let $\aiter$ be a \lambdaterm, and let $\arewseq$ be an infinite $\sstregred$\nb-re\-write sequence
  $ \femptylabs{\aiter} = \flabs{\vec{\avar}_0}{\aiteri{0}}
      \stregred
    \flabs{\vec{\avar}_1}{\aiteri{1}}
      \stregred
    \ldots $
  such that $\lim_{i\to\infty} \length{\vec{\avar}_i} = \infty$. 
  Then there exists an infinite \bindcaptchain\ in $\aiter$.
\end{lemma}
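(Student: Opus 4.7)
The plan is to first lift $\arewseq$ via Proposition~\ref{prop:position:lifting:projecting}, (\ref{prop:position:lifting:projecting:item:lifting}), to a po\-si\-tion-anno\-ta\-ted rewrite sequence $\arewseq^{\spos}\funin\femptylabspos{\tuple{\niks}}{\rootpos}{\aiter} \stregred \flabspos{\vec{\avar}_1}{\vec{\apos}_1}{\bposi{1}}{\aiteri{1}} \stregred \ldots$ in $\stRegposARS$ whose projection is $\arewseq$, and then to extract from $\arewseq^{\spos}$ the binder-backbone of an infinite \bindcaptchain\ in $\aiter$. The key structural observation is that, among the three rules of $\stRegposCRS$, only $\rulepos{\snlvarsucc}$ can shorten the prefix (and only by removing its last entry), $\rulepos{\slabsdecomp}$ only appends to the prefix, and $\rulepos{\slappdecompi{k}}$ leaves it untouched. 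Combined with $\lim_{i\to\infty}\length{\vec{\apos}_i} = \infty$, this yields for each $n \ge 1$ a threshold $m_n$ such that for all $i \ge m_n$ the prefix length is $\ge n$ and the first $n$ entries of $\vec{\apos}_i$ are a fixed sequence $(\apos_1,\ldots,\apos_n)$, with these sequences nesting consistently as $n$ grows. In this way one obtains an infinite sequence $\apos_1,\apos_2,\ldots\in\Positions{\aiter}$ satisfying $\apos_1 < \apos_2 < \ldots$ by Lemma~\ref{lem:def:stRegposCRS}, (\ref{lem:def:stRegposCRS:item:i}).

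Next, for each $n \ge 1$ I would focus on the index $j_n \defdby m_{n+1} - 1$. Since rules change the prefix length by at most one and $m_{n+1} > m_n$, the length at step $j_n$ is exactly $n$, and the step $j_n \to j_n + 1$ must be a $\rulepos{\slabsdecomp}$-step appending $\apos_{n+1}$. Hence the term at $j_n$ has the form $\flabspos{\avari{1}\ldots\avari{n}}{\apos_1,\ldots,\apos_n}{\apos_{n+1}}{\labs{\avari{n+1}}{\citeri{n}}}$, producing $\flabspos{\avari{1}\ldots\avari{n+1}}{\apos_1,\ldots,\apos_{n+1}}{\apos_{n+1}00}{\citeri{n}}$. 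The $\rulepos{\snlvarsucc}$-eagerness of $\sstregred$ then forces $\avari{n}$ to occur free in $\citeri{n}$: otherwise $\rulepos{\snlvarsucc}$ would be applicable at $j_n$ (since $\avari{n}$ would be vacuous in $\labs{\avari{n+1}}{\citeri{n}}$), and taking it would drop the prefix length below $n+1$ at $j_n+1$, contradicting $j_n + 1 = m_{n+1}$. With $\avari{n}$ free in $\citeri{n}$, Lemma~\ref{lem:def:stRegposCRS}, (\ref{lem:def:stRegposCRS:item:v}), supplies a position $\bposi{n+1} > \apos_{n+1}00$ and a $\stregmred$-extension reaching $\flabspos{\avari{1}\ldots\avari{n}}{\apos_1,\ldots,\apos_n}{\bposi{n+1}}{\avari{n}}$. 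Concatenating with the initial segment of $\arewseq^{\spos}$ up to step $j_n + 1$, Proposition~\ref{prop:bind:iscapturedby} directly yields both $\apos_n \binds \bposi{n+1}$ (from the composite rewrite sequence) and $\bposi{n+1} \iscapturedby \apos_{n+1}$ (by splitting at the intermediate term $\flabspos{\avari{1}\ldots\avari{n+1}}{\apos_1,\ldots,\apos_{n+1}}{\apos_{n+1}00}{\citeri{n}}$ and using $n < n+1$). Chaining these links across all $n \ge 1$ then produces the desired infinite \bindcaptchain\ $\apos_1 \binds \bposi{2} \iscapturedby \apos_2 \binds \bposi{3} \iscapturedby \apos_3 \binds \ldots$ in $\aiter$.

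The main obstacle I anticipate is the stabilization step: verifying rigorously that the first $n$ prefix entries are eventually invariant once the prefix length stays $\ge n$, and that the lifting from Proposition~\ref{prop:position:lifting:projecting}, (\ref{prop:position:lifting:projecting:item:lifting}), indeed produces a sequence that is $\rulepos{\snlvarsucc}$-eager in $\stRegposARS$, so that the eagerness argument in the second step applies to $\arewseq^{\spos}$. Both points rest on the fact that $\rulepos{\snlvarsucc}$-applicability at a \posannotated\ term depends only on its unannotated projection, but they require a careful case analysis on the three rule applications. Once the infinite backbone $\apos_1,\apos_2,\ldots$ has been extracted, the remaining construction is a systematic chaining of Lemma~\ref{lem:def:stRegposCRS}, (\ref{lem:def:stRegposCRS:item:v}), with Proposition~\ref{prop:bind:iscapturedby}.
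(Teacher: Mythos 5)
Your proposal is correct and follows essentially the same route as the paper: lift $\arewseq$ via Proposition~\ref{prop:position:lifting:projecting},~(\ref{prop:position:lifting:projecting:item:lifting}), extract from the diverging prefix lengths a stabilizing infinite backbone of binder positions, and obtain the binding and capturing links from $\rulepos{\snlvarsucc}$-eagerness together with Lemma~\ref{lem:def:stRegposCRS:item:v} and Proposition~\ref{prop:bind:iscapturedby}. The only (inessential) difference is that the paper packages the link construction by applying Lemma~\ref{lem:bind:capt:chains:stRegpos},~(\ref{lem:bind:capt:chains:stRegpos:item:i}), along the segments of the lifted sequence, whereas you re-derive the individual links inline, which amounts to repeating the induction step of that lemma's proof.
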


\begin{proof}
  Let $\aiter$ and $\arewseq$ be as in the assumption of the lemma.
  By Proposition\,\ref{prop:position:lifting:projecting}, (i), 
  $\arewseq$ can be lifted to a rewrite sequence with position annotations:
  \begin{center}
    $  
    \arewseq^{\spos} \;\funin\;\;
       \femptylabspos{\niks}{\rootpos}{\aiter} = \flabspos{\vec{\avar}_0}{\niks}{\rootpos}{\aiteri{0}}
         \stregred
       \flabspos{\vec{\avar}_1}{\vec{\apos}_1}{\bpos_1}{\aiteri{1}}
         \stregred
       \ldots
         \stregred
       \flabspos{\vec{\avar}_i}{\vec{\apos}_i}{\bpos_i}{\aiteri{i}}
         \stregred
       \ldots
       $
  \end{center}     
  where, for all $i\in\nats$,
  $\bposi{i}$ are positions and $\vec{\apos}_i = \tuple{\aposi{1},\ldots,\aposi{m_i}}$ vectors of positions,
  with $m_i\in\nats$.
 
  Due to $\lim_{i\to\infty} \length{\vec{\avar}_i} = \infty$
  it follows that $\liminf_{i\to\infty} \length{\vec{\avar}_i} = \infty$,
  and hence there exists a sequence $\sequence{i_j}{j\in\nats}$ 
  of increasing $0 = i_0 < i_1 < i_2 < \ldots$ natural numbers
  such that
  $0 = \length{\vec{\avar}_{i_0}}
     < \length{\vec{\avar}_{i_1}}
     < \length{\vec{\avar}_{i_2}}
     < \ldots$,
  and
  $\length{\vec{\avar}_{i_j}} \le \length{\vec{\avar}_k}$
  for all $j,k\in\nats$ with $k\ge i_j$.
  Now if the rewrite sequence $\arewseq^{\spos}$ is written with highlighted segments of the form:
  \begin{equation}\label{eq2:prf:lem:inf:bind:capt:chains}
  \begin{aligned} 
    \arewseq^{\spos} \;\funin\;\;
      \femptylabspos{\niks}{\rootpos}{\aiter} =
      \flabspos{\vec{\avar}_{i_0}}{\vec{\apos}_{i_0}}{\bpos_{i_0}}{\aiteri{i_0}}
        & \stregmred  
      \ldots
      \\
      \ldots  & \stregmred
      \flabspos{\vec{\avar}_{i_j}}{\vec{\apos}_{i_j}}{\bpos_{i_j}}{\aiteri{i_j}}
        \stregmred
      \flabspos{\vec{\avar}_{i_{j+1}}}{\vec{\apos}_{i_{j+1}}}{\bpos_{i_{j+1}}}{\aiteri{i_{j+1}}}
        \stregmred
      \ldots
  \end{aligned}   
  \end{equation}
  (where $j\in\nats$),
  then it follows, for all $j\in\nats$, 
  that 
       $\lengthnormalsize{\vec{\apos}_{i_j}} < \lengthnormalsize{\vec{\apos}_{i_{j+1}}}$,
  and that all terms of the sequence after
  $\flabspos{\vec{\avar}_{i_j}}{\vec{\apos}_{i_j}}{\bpos_{i_j}}{\aiteri{i_j}}$
  have an abstraction prefix of length greater or equal $\lengthnormalsize{\vec{\avar}_{i_j}}$.
  Due to the property of steps in $\stRegposCRS$ to remove position annotations only 
  when the corresponding abstraction variable is dropped from the prefix in an $\scompressstregred$-step,
  it follows that $\vec{\apos}_{i_j} < \vec{\apos}_{i_{j+1}}$ holds in the prefix order, for all $j\in\nats$, and hence 
  that $\vec{\apos}_{i_0} < \vec{\apos}_{i_1} < \vec{\apos}_{i_2} < \ldots $.
  
  Consequently, the position vectors $\vec{\apos}_{i_j}$ tend towards an infinite vector 
  $\tuple{\cposi{1},\cposi{2},\cposi{2},\cdots}$ of positions
  such that there are $n_1 < n_2 < \ldots$ in $\nats$
  with  $\vec{\apos}_{i_j} = \tuple{\cposi{1},\cposi{2},\cdots,\cposi{n_j}}$ 
  for all $j\in\nats$ with $j>0$.
  But then induction on the segmented structure of $\arewseq^{\spos}$ as indicated in \eqref{eq2:prf:lem:inf:bind:capt:chains},
  thereby using Lemma\,\ref{lem:bind:capt:chains:stRegpos}, (\ref{lem:bind:capt:chains:stRegpos:item:i}),
  establishes the existence of positions $\dposi{1},\dposi{2},\ldots$ such that:
  \begin{center} 
    $ 
    \cposi{1}
      \binds \dposi{1} \iscapturedby
    \cposi{2}
       \binds 
    \ldots
      \iscapturedby
    \cposi{n_j}
      \binds \dposi{n_j +1} \iscapturedby
    \cposi{n_j +1}
      \binds \dposi{n_j +2} \iscapturedby
    \ldots
       \binds \dposi{n_{j+1}} \iscapturedby
    \cposi{n_{j+1}}
      \binds            
    \ldots
     $
  \end{center}   
  holds, thereby yielding an infinite \bindcaptchain\ in $\aiter$.
\end{proof}

Now we formulate and prove the main theorem of this section,
which applies the concept of \bindcaptchain\
to pin down, among all infinite \lambdaterms\ that are regular,
those that are strongly regular.

\begin{theorem}\label{thm:streg:fin:bind:capt:chains}
  A regular \lambdaterm\ is strongly regular
    if and only if
  it contains only finite \bindcaptchains.
\end{theorem}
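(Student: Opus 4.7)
The direction ($\Rightarrow$) follows immediately from Lemma~\ref{lem:fin:bind:capt:chains}: if $\aiter$ is strongly regular then $\gSTstreg{\aiter}$ is finite, so the prefix lengths of its elements are bounded by some $M\in\nats$, and the lemma then caps the length of every \bindcaptchain\ in $\aiter$ at $M-1$, ruling out infinite ones.

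For the converse I argue the contrapositive: assume $\aiter$ is regular with $\gSTstreg{\aiter}$ infinite, and exhibit an infinite \bindcaptchain\ in $\aiter$. I first show that the prefix lengths occurring in $\gSTstreg{\aiter}$ are unbounded. Since $\gSTreg{\aiter}$ is finite by regularity, Proposition~\ref{prop:compress:prefix:RegARS}~(\ref{prop:compress:prefix:RegARS:item:ii}) says that for each $k$ only finitely many prefixed \lambdaterms\ of prefix length $\le k$ have a $\scompressregmred$\nb-reduct in $\gSTreg{\aiter}$; by Lemma~\ref{lem:projection:lifting:RegCRS:stRegCRS}~(\ref{lem:projection:lifting:RegCRS:stRegCRS:item:projection}) every element of $\gSTstreg{\aiter}$ has such a reduct, so a uniform bound on prefix lengths would force $\gSTstreg{\aiter}$ to be finite.

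The plan is now to build an infinite $\sstregred$\nb-rewrite sequence $\femptylabs{\aiter}\stregred\flabs{\vec{\avar}_1}{\aiteri{1}}\stregred\ldots$ with $\lim_i\length{\vec{\avar}_i}=\infty$, to which Lemma~\ref{lem:inf:bind:capt:chains} can then be applied. Working in the position-annotated system $\stRegposCRS$, whose reduction graph $T$ from $\femptylabspos{\niks}{\rootpos}{\aiter}$ is a finitely branching tree, I lift for every $k$ a witness $\femptylabs{\aiter}\stregmred t_k$ of prefix length $\ge k$ (via Proposition~\ref{prop:position:lifting:projecting}~(\ref{prop:position:lifting:projecting:item:lifting})) to a $T$\nb-node $t_k^{\spos}$ at a body position $\bpos_k$ of length $\ge k$ (by Lemma~\ref{lem:def:stRegposCRS}~(\ref{lem:def:stRegposCRS:item:i})). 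Since the $\bpos_k$ live in the binary tree $\setexp{0,1}^*$, a subsequence converges to some $\apos^{*}\in\setexp{0,1}^{\omega}$; by Lemma~\ref{lem:def:stRegposCRS}~(\ref{lem:def:stRegposCRS:item:iii}), the paths to the $t_k^{\spos}$ pass through the unique $T$-nodes at body positions on $\apos^{*}$, and their common initial segments assemble (via König's lemma on $T$) into an infinite $\sstregred$\nb-rewrite sequence $\arewseq^{\spos}$ in $T$ whose body positions traverse $\apos^{*}$.

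The crux is to ensure that along $\arewseq^{\spos}$ the prefix lengths actually tend to $\infty$, and not merely satisfy $\limsup=\infty$, since Lemma~\ref{lem:inf:bind:capt:chains} requires the stronger condition. Here regularity enters decisively a second time: the $T$\nb-nodes of $\arewseq^{\spos}$ project via $\scompressregmred$\nb-normalization into the finite set $\gSTreg{\aiter}$, so by pigeonhole some $g\in\gSTreg{\aiter}$ is hit infinitely often, at $T$\nb-nodes sharing the body of $g$ and carrying prefixes that are $\scompressregmred$\nb-expansions of the prefix of $g$. By Proposition~\ref{prop:compress:prefix:RegARS}~(\ref{prop:compress:prefix:RegARS:item:i}) only finitely many such expansions fit under any bounded prefix length; since the recurrent $T$\nb-nodes sit at distinct, ever deeper body positions along $\apos^{*}$, their prefix lengths must diverge. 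A careful selection of intermediate $T$\nb-nodes between successive recurrences of $g$, each of at least as long a prefix as the preceding recurrence, then yields a contiguous $\sstregred$\nb-rewrite sequence satisfying $\lim_i\length{\vec{\avar}_i}=\infty$, and Lemma~\ref{lem:inf:bind:capt:chains} produces the infinite \bindcaptchain\ in $\aiter$ completing the contrapositive.
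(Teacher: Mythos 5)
Your ``$\Rightarrow$'' direction is exactly the paper's argument (bound the prefix lengths in the finite set $\gSTstreg{\aiter}$ and apply Lemma~\ref{lem:fin:bind:capt:chains}), and your opening observation that regularity plus non-strong-regularity forces unbounded prefix lengths in $\gSTstreg{\aiter}$ is sound. The problem lies in the crux of your converse: the claim that the prefix lengths along your limit branch $\arewseq^{\spos}$ must diverge. Your pigeonhole step only tells you that infinitely many nodes of $\arewseq^{\spos}$ have the same $\scompressregmred$\nb-normal form $g\in\gSTreg{\aiter}$, and Proposition~\ref{prop:compress:prefix:RegARS},~(\ref{prop:compress:prefix:RegARS:item:i}) counts \emph{non-annotated} prefixed terms that compress to $g$. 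But distinct $T$\nb-nodes at distinct body positions may well project, after dropping the position annotations, to one and the same term of $\Ter{\inflambdaprefixcal}$ --- this is precisely the phenomenon that makes a term regular, and nothing in your construction rules it out for the branch you chose. So if the prefix lengths along $\arewseq^{\spos}$ were bounded, the conclusion would merely be that infinitely many nodes share the same underlying non-annotated term, which is no contradiction; ``distinct, ever deeper body positions'' does not yield ``prefix lengths diverge''. The paper sidesteps this by applying K\H{o}nig's Lemma not to the annotated tree but to the (non-annotated) $\sstregred$\nb-reduction graph of $\femptylabs{\aiter}$, whose vertex set $\gSTstreg{\aiter}$ is infinite by assumption and whose out-degree is $\le 2$: this yields an infinite $\sstregred$\nb-rewrite sequence through \emph{pairwise distinct} terms, and then boundedness of the prefix lengths at infinitely many indices really is contradictory, because those infinitely many distinct terms would all have bounded prefixes and $\scompressregmred$\nb-reducts in the finite set $\gSTreg{\aiter}$, contradicting Proposition~\ref{prop:compress:prefix:RegARS},~(\ref{prop:compress:prefix:RegARS:item:ii}).

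A secondary gap is your final sentence. Lemma~\ref{lem:inf:bind:capt:chains} requires a single contiguous $\sstregred$\nb-rewrite sequence in which the prefix length of \emph{every} term tends to infinity; selecting recurrences of $g$ with increasing prefix lengths does not provide this, since between two selected nodes the prefix may dip arbitrarily low (prefix length drops by $1$ at each $\scompressstregred$\nb-step, and the common annotation prefix of two long-prefixed nodes can be short, cf.\ Lemma~\ref{lem:def:stRegposCRS},~(\ref{lem:def:stRegposCRS:item:iii})). Excluding such dips is again exactly the counting argument based on distinctness of the traversed terms that your construction lacks. To repair the proof, replace the limit-branch construction by the paper's choice of an infinite simple path in the non-annotated reduction graph (or prove separately that your branch passes through infinitely many pairwise distinct non-annotated terms), and then run the Proposition~\ref{prop:compress:prefix:RegARS},~(\ref{prop:compress:prefix:RegARS:item:ii}) contradiction to get $\lim_{i\to\infty}\length{\vec{\avar}_i}=\infty$ before invoking Lemma~\ref{lem:inf:bind:capt:chains}.
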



\begin{proof}
 Let $\aiter$ be an infinite \lambdaterm\ that is regular.
 
 For showing ``$\Rightarrow$'', suppose that $\aiter$ is also strongly regular.
 Then by the definition of strong regularity, $\gSTstreg{\aiter}$ is finite.
 Let $n$ be the length of the longest abstraction prefix of a term in $\gSTstreg{\aiter}$.
 Then 
 Lemma~\ref{lem:fin:bind:capt:chains} implies that the length of
 every \bindcaptchain\ in $\aiter$ is bounded by $\max \setexp{n-1,0}$.
 Hence $\aiter$ only contains finite \bindcaptchains.

 For the implication ``$\Leftarrow$'' we argue indirectly:
 assuming that $\aiter$ is not strongly regular, we show the existence of 
 an infinite \bindcaptchain\  in $\aiter$.
 
 So suppose that $\aiter$ is not strongly regular.
 Then $\gSTstreg{\aiter}$ is infinite,
 and so $\femptylabs{\aiter}$ has infinitely many $\sregred$\nb-reducts.
 Since the rewrite strategy $\sstregred$ has branching degree~$\le 2$
 (branching only happens at sources of $\slappdecompired{i}$\nb-steps),
 it follows by K\H{o}nig's Lemma%
    \footnote{Here we use the following version of \emph{K\H{o}nig's Lemma}: 
     Let $\agraph$ 
     be a rooted directed graph with root $r$. 
     Suppose that $\agraph$ has infinitely many vertices,
     that every vertex of $\agraph$ is reachable from $r$ via a directed path,
     and that every vertex has finite out-degree (finitely many successor vertices).
     Then there exists an infinite directed path in $\agraph$ that starts at $r$ and is simple (no repetitions of vertices).}
 that there is an infinite rewrite sequence:
 \begin{center}
   $
   \arewseq \;\funin\;\;
      \femptylabs{\aiter} = \flabs{\vec{\avar}_0}{\aiteri{0}}
        \stregred
      \flabs{\vec{\avar}_1}{\aiteri{1}}
        \stregred
      \ldots
        \stregred
      \flabs{\vec{\avar}_i}{\aiteri{i}}
        \stregred
      \ldots
      $
 \end{center}
 that passes through distinct terms.
 By Lemma~\ref{lem:projection:lifting:RegCRS:stRegCRS}, (\ref{lem:projection:lifting:RegCRS:stRegCRS:item:projection}),
 this rewrite sequence projects to: 
 \begin{center}
   $
   \Checkreg{\arewseq} \;\funin\;\;
      \femptylabs{\aiter} = \flabs{\vec{\avar}'_0}{\aiteri{0}}
        \regmred
      \flabs{\vec{\avar}'_1}{\aiteri{1}}
        \regmred
      \ldots
        \regmred
      \flabs{\vec{\avar}'_i}{\aiteri{i}}
        \regmred
      \ldots \punc{,}
  $      
  \vspace*{-1.5ex}
  \begin{equation}\label{eq3:prf:thm:streg:fin:bind:capt:chains}
    \hspace*{15ex}
    \text{so that for all $j\in\nats$:} \hspace*{5ex}
    \flabs{\vec{\avar}_j}{\aiteri{j}} \compressregmred \flabs{\vec{\avar}'_j}{\aiteri{j}}
  \end{equation}    
 \end{center}
 thereby respectively shortening the length of the abstraction prefix. 
 Since $\aiter$ is regular, $\gSTreg{\aiter}$ is finite,
 and hence only finitely many terms occur in $\Checkreg{\arewseq}$.
 Now we use this contrast with $\arewseq$ together with \eqref{eq3:prf:thm:streg:fin:bind:capt:chains}
 to show that the prefix lengths of terms in $\arewseq$ tend to infinity.  

 Suppose that $\lim_{i\to\infty} \length{\vec{\avar}_i} = \infty$ does not hold.
 Then there exists $l_0\in\nats$ such that $\length{\vec{\avar}_i} < l_0$ for infinitely many $i\in\nats$.
 Hence there is a sequence $i_0 < i_1 < i_2 < i_3 < \ldots$ in $\nats$
 such that:
 \begin{gather}
   \aset \defdby \descsetexp{\flabs{\vec{\avar}_{i_j}}{\aiteri{i_j}}}{j\in\nats}
     \text{ is infinite}
   \label{eq6:prf:thm:streg:fin:bind:capt:chains}
 \end{gather}
 (since the terms on $\arewseq$ are distinct),
 and $\length{\vec{\avar}_{i_j}} < l_0$ for all $\flabs{\vec{\avar}_{i_j}}{\aiteri{i_j}} \in S$.
 On the other hand:
 \begin{equation*}
   \bset \defdby
   \descsetexp{\flabs{\vec{\avar}'_{i_j}}{\aiteri{i_j}}}{j\in\nats}
     \; \subseteq \;
   \gSTreg{\aiter} \text{ is finite}
 \end{equation*}
 because $\aiter$ is regular. 
 However, since every term in $\aset$ has a $\scompressregmred$\nb-reduct in $\bset$ due to \eqref{eq3:prf:thm:streg:fin:bind:capt:chains},
 as well as an abstraction prefix of a length bounded by $l_0$,
 it follows by Proposition\,\ref{prop:compress:prefix:RegARS}, (\ref{prop:compress:prefix:RegARS:item:ii}),
 that $\aset$ also has to be finite, in contradiction with \eqref{eq6:prf:thm:streg:fin:bind:capt:chains}.
 Hence we conclude: $\lim_{i\to\infty} \length{\vec{\avar}_i} = \infty$.

 Now Lemma~\ref{lem:inf:bind:capt:chains} is applicable to $\arewseq$, and yields an infinite \bindcaptchain\ in $\aiter$.
\end{proof}

By adding the statement of Proposition~\ref{prop:def:reg:streg} we obtain the following accentuation
of this theorem.

\begin{corollary}\label{cor:thm:streg:fin:bind:capt:chains}
  An infinite \lambdaterm\ is strongly regular 
    if and only if 
  it is regular, 
  and contains only finite \bindcaptchains. 
\end{corollary}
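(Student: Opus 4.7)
The plan is to derive the corollary immediately by combining Theorem~\ref{thm:streg:fin:bind:capt:chains} with Proposition~\ref{prop:def:reg:streg}. No new technical work is required, since both ingredients have already been established.

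First I would treat the direction ``$\Rightarrow$''. Assume that an infinite \lambdaterm~$\aiter$ is strongly regular. Then by Proposition~\ref{prop:def:reg:streg} it is also regular. Hence the hypothesis of Theorem~\ref{thm:streg:fin:bind:capt:chains} is satisfied, and the direction ``$\Rightarrow$'' of that theorem applies, yielding that $\aiter$ contains only finite \bindcaptchains. Together this gives the two desired conjuncts: regularity and absence of infinite \bindcaptchains.

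For the direction ``$\Leftarrow$'', assume that $\aiter$ is regular and contains only finite \bindcaptchains. Then the hypothesis of Theorem~\ref{thm:streg:fin:bind:capt:chains} is again satisfied, and the direction ``$\Leftarrow$'' of that theorem directly yields that $\aiter$ is strongly regular.

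Since the corollary is a pure repackaging of the already-proved theorem together with the one-line observation from Proposition~\ref{prop:def:reg:streg}, there is no genuine obstacle to overcome; the only point worth emphasising in the write-up is that the statement of Theorem~\ref{thm:streg:fin:bind:capt:chains} presupposes regularity on both sides of the equivalence, which is why Proposition~\ref{prop:def:reg:streg} is needed to discharge the regularity hypothesis in the ``$\Rightarrow$'' direction of the corollary.
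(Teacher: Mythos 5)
Your proposal is correct and follows exactly the route the paper intends: the corollary is obtained by combining Theorem~\ref{thm:streg:fin:bind:capt:chains} with Proposition~\ref{prop:def:reg:streg}, the latter discharging the regularity hypothesis in the ``$\Rightarrow$'' direction. Nothing is missing.
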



\section{Expressibility by terms of the \protect\lambdacalculus\ with $\mu$}
\label{sec:express}

Having adapted (in Section~\ref{sec:regular}) the concept of regularity for infinite \lambdaterms\ in two ways,
we now obtain an expressibility result for one of these adaptations 
that is analogous to that in \cite{cour:1983}
for regular first-order trees with respect to rational expressions (or equivalently, $\mu$\nb-terms).
We show that an infinite \lambdaterm\ is strongly regular if and only if it is \lambdamucalexpressible. 

We first define terms of $\lambdamucal$, the unfolding rewrite relation, and \lambdamucalexpressibility.

\begin{definition}[\CRS-representation for $\lambdamucal$]
    \normalfont\label{def:lambdamucal}  
  The \CRS\nb-signature $\siglmcCRS = \siglcCRS \cup \setexp{\smuCRS}$
  for $\lambdamucal$ extends $\siglcCRS$ by a unary function symbol $\smuCRS$.
  By $\Ter{\lambdamucal}$ we denote the set of closed finite
  CRS\nb-terms over $\siglmcCRS$ with the restriction
  that CRS-abstraction occurs only as an argument of the symbols $\slabsCRS$ or $\smuCRS$. 
  By $\Ter{\lambdamuprefixcal}$ we denote the analogously defined set of terms over the signature $\siglpcCRS\cup\setexp{\smuCRS}$.
  We consider the \emph{$\mu$\nb-un\-fol\-ding rule} in informal and formal notation:
  \begin{center}
    $
    (\srulep{\smu}): \;\;
      \muabs{\avar}{\funap{\almter}{\avar}}
        \red
      \funap{\almter}{\muabs{\avar}{\funap{\almter}{\avar}}}
    \hspace*{9ex}  
    \srulep{\smu}: \;\;
      \muabsCRS{\avar}{\cmetavar{\avar}}
        \red
      \cmetavar{\muabsCRS{\avar}{\cmetavar{\avar}}}
      $
  \end{center}
  This rule induces the \emph{unfolding rewrite relation} $\sunfoldred$ on $\Ter{\lambdamucal}$ and $\Ter{\lambdamuprefixcal}$.
  We say that a \lambdamuterm~$\almter$ \emph{expresses} an infinite \lambdaterm~$\citer$
  if $\almter \unfoldinfred \citer$ holds, that is, $\almter$ unfolds to $\citer$ 
  via a typically infinite, strongly convergent $\sunfoldred$\nb-rewrite sequence
  (similar for terms in $\Ter{\lambdamuprefixcal}$).
  And an infinite \lambdaterm~$\aiter$ is \emph{\lambdamucalexpressible} if there is a \lambdamuterm~$\almter$ that expresses $\aiter$.
\end{definition}

We sketch some intuition for the proof, which proceeds by a sequence of proof-theoretic transformations. 
We focus on the more difficult direction.
Let $\aiter$ be a strongly regular infinite \lambdaterm. 
We want to extract a \lambdamuterm~$\almter$ that expresses $\aiter$
from the finite $\sstregred$-re\-duc\-tion graph $G$ of $\aiter$. 
We first obtain a closed derivation $\Deriv$ of $\femptylabs{\aiter}$ in $\stRegzero$.
The derivation $\Deriv$ can be viewed as a finite term graph that has $G$ as its homomorphic image,
and that does not exhibit horizontal sharing (\cite[Sec.\hspace*{1.5pt}4.3]{blom:2001}). 
Such term graphs correspond directly to \lambdamuterms\ (analogous to \cite{blom:2001}). 
In order to extract the \lambdamuterm~$\almter$ corresponding to $\Deriv$ from this derivation,
we annotate it inductively to  
a \lambdamuterm-annotated derivation $\Hat{\Deriv}$
with conclusion $\femptylabsann{\almter}{\aiter}$ in a proof system $\Expr$ that is a variant of $\stRegzero$.
Then it remains to show that $\almter$ indeed unfolds to $\aiter$.
For this we prove
that $\Hat{\Deriv}$ unfolds to/gives rise to infinite derivations in the variant systems $\Exprinf$ and $\Unfinf$,
which witness infinite outermost rewrite sequences $\almter \unfoldinfred \aiter$.

%

\vspace*{0.7ex}
The \CRS\ consisting of the rule $\srulep{\smu}$ is orthogonal and fully-extended \cite{terese:2003}.
As a consequence of the result in \cite{kete:simo:2010} that 
outermost-fair strategies in orthogonal, fully extended \iCRSs\ are normalizing,
we obtain the following proposition. 

\begin{proposition}\label{prop:unfoldomred}
  Let $\almter\in\Ter{\lambdamucal}$ and $\aiter\in\Ter{\inflambdacal}$. 
  If $\almter$ expresses $\aiter$,
  then there is an outermost $\sunfoldred$\nb-rewrite sequence of length ${\le}\hspace*{1pt}\omega$ that witnesses
  $\almter \unfoldinfred \aiter$,
  and $\aiter$ is the unique \lambdaterm\ expressed by $\almter$. 
  Analogously for prefixed terms in $\lambdamucal$ that express prefixed \lambdaterms.
  %
  Hence
  the infinite outermost unfolding rewrite relation $\unfoldominfnfred$ to infinite normal form
  defines a partial mapping from $\Ter{\lambdamucal}$ to $\Ter{\inflambdacal}$, and from $\Ter{\lambdamuprefixcal}$ to $\Ter{\inflambdaprefixcal}$.
\end{proposition}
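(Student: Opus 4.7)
The plan is to invoke the normalization theorem of Ketema and Simonsen~\cite{kete:simo:2010} referenced just above the proposition. First I would verify its hypotheses for the \iCRS\ generated by the single rule $\srulep{\smu}$: it is orthogonal, because $\srulep{\smu}$ is the only rule, its left-hand side $\muabsCRS{\avar}{\cmetavar{\avar}}$ is a linear pattern, and no critical self-overlap is possible; and it is fully extended, because the metavariable $\sametavar$ occurs applied to the bound variable $\avar$ in the left-hand side.

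Next I would construct the witnessing outermost rewrite sequence. Fix any deterministic outermost-fair strategy, for instance the one that at each step contracts the leftmost outermost $\smu$-redex. Starting from the finite term $\almter\in\Ter{\lambdamucal}$, this produces a rewrite sequence $\almter \red \almteri{1} \red \almteri{2} \red \cdots$ which is strongly convergent and of length ${\le}\,\omega$: an unfolding step $\muabs{\avar}{\funap{\almter}{\avar}}\red\funap{\almter}{\muabs{\avar}{\funap{\almter}{\avar}}}$ only creates new $\smu$-redexes strictly below the position of the contracted one, so the depth of contracted redexes tends to $\infty$, which also implies that the positions of contracted redexes are bounded away from the root only finitely often. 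By the cited result of~\cite{kete:simo:2010}, any outermost-fair rewrite sequence in our \iCRS\ is normalizing, so the limit $\aiter'$ of this sequence lies in $\Ter{\inflambdacal}$ (i.e.\ contains no $\smu$), and $\almter \unfoldominfnfred \aiter'$.

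For uniqueness, I would appeal to infinitary confluence of orthogonal fully extended \iCRSs, which is part of the same line of work (and is implicit in the setting of \cite{kete:simo:2011} used to define \iCRS\nb-terms in Section~\ref{sec:regular}). By assumption there is a strongly convergent sequence $\almter \unfoldinfred \aiter$; infinitary confluence joins $\aiter$ and the $\sunfoldred$-normal form $\aiter'$ of the previous step, and since both are $\sunfoldred$-normal, the joining reduces to the identity, yielding $\aiter = \aiter'$. In particular the outermost rewrite sequence just constructed witnesses $\almter \unfoldinfred \aiter$ in length ${\le}\,\omega$, and $\aiter$ is determined by $\almter$.

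The argument transfers verbatim to the prefixed setting, as $\Ter{\inflambdaprefixcal}$ is closed under $\sunfoldred$-steps and the prefix symbols $\sflabsCRS{n}$ are inert for rule $\srulep\smu$. Putting existence and uniqueness together shows that $\sunfoldominfnfred$ defines a partial map from $\Ter{\lambdamucal}$ to $\Ter{\inflambdacal}$ and from $\Ter{\lambdamuprefixcal}$ to $\Ter{\inflambdaprefixcal}$. The main obstacle, aside from checking the \iCRS\ hypotheses, will be to quote cleanly the infinitary confluence statement for orthogonal fully extended \iCRSs\ needed for uniqueness; everything else reduces to a bookkeeping argument about depths of contracted outermost redexes.
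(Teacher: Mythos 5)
Your overall route is the paper's route: the paper obtains this proposition exactly by noting that the \iCRS\ with the single rule $\srulep{\smu}$ is orthogonal and fully extended and then invoking the Ketema--Simonsen result that outermost-fair strategies are normalizing in such systems; uniqueness of the expressed term and the resulting partial map are read off from that. However, two of your intermediate steps are not sound as written. First, the claim that an unfolding step ``only creates new $\smu$-redexes strictly below the position of the contracted one'' is false: for $\muabs{\avar}{\avar}$ (or $\muabs{\avar}{\muabs{\bvar}{\avar}}$, and generally whenever the bound variable occurs at the head of the body) the created redex sits at the \emph{same} position as the contracted one, so the leftmost-outermost sequence from such a term loops at the root and is not strongly convergent at all. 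Hence your unconditional ``bookkeeping argument about depths'' cannot establish strong convergence; the convergence of the outermost(-fair) sequence is available only \emph{because} the hypothesis ``$\almter$ expresses $\aiter$'' guarantees that $\almter$ has a $\sunfoldred$-normal form (every term of $\Ter{\inflambdacal}$ is $\smu$-free, hence normal), and it is precisely the normalization theorem of \cite{kete:simo:2010} that then yields a strongly convergent outermost rewrite sequence to that normal form. You should delete the depth argument and let the cited theorem carry this weight.

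Second, for uniqueness you appeal to ``infinitary confluence of orthogonal fully extended \iCRSs{}'', but plain $\mathrm{CR}^{\infty}$ is not available here: the rule $\srulep{\smu}$ is collapsing (its right-hand side is rooted in the metavariable), $\muabs{\avar}{\avar}$ is hypercollapsing, and for orthogonal fully-extended \iCRSs\ one only has confluence \emph{modulo} identification of hypercollapsing subterms, or equivalently the property $\mathrm{UN}^{\infty}$ of unique normal forms. The conclusion you want still follows, since normal forms contain no hypercollapsing subterms, so two normal forms reachable from $\almter$ must coincide; but the statement should be quoted in that form rather than as unrestricted infinitary confluence. With these two repairs (and your correct verification of orthogonality, full-extendedness, and the inertness of the prefix symbols for the prefixed case), the argument matches the paper's intended justification.
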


The  relation $\unfoldominfnfred$ can be defined via derivability in the proof system $\Unfinf$ in Fig.\,\ref{fig:Unfinf}:
the existence of a possibly infinite derivation 
that is closed in the sense of Def.~\ref{def:Lambdaprefixreg:Lambdaprefixstreg},
and \emph{admissible}, 
i.e.\ it is \Vacstregeager, and does not contain infinitely many consecutive instances of the rule ($\ssmuabs$).

\begin{figure}
\vspace*{-2.5ex}  
  
\begin{center}  
  \framebox{
\begin{minipage}{380pt}
 \vspace*{1.25ex}
\begin{center}
  \mbox{
    \AxiomC{$\phantom{\flabs{\vec{\avar}\bvar}{\bvar}  \unfoldsto  \flabs{\vec{\avar}\bvar}{\bvar}}$}
    \RightLabel{$\snlvar$}
    \UnaryInfC{$\flabs{\vec{\avar}\bvar}{\bvar}  \unfoldsto  \flabs{\vec{\avar}\bvar}{\bvar}$}
    \DisplayProof
        }
  \hspace*{1ex}
  \mbox{
    \AxiomC{$\flabs{\vec{\avar}}{\almteri{0}}
               \unfoldsto
             \flabs{\vec{\avar}}{\aiteri{0}}$}
    \AxiomC{$\flabs{\vec{\avar}}{\almteri{1}}
               \unfoldsto
             \flabs{\vec{\avar}}{\aiteri{1}}$}
    \RightLabel{$\sslapp$}
    \BinaryInfC{$\flabs{\vec{\avar}}{\lapp{\almteri{0}}{\almteri{1}}}
                  \unfoldsto
                \flabs{\vec{\avar}}{\lapp{\aiteri{0}}{\aiteri{1}}}$}
    \DisplayProof
        }          
  \\[1.75ex]
  \mbox{
    \AxiomC{$\flabs{\vec{\avar}\bvar}{\almter} 
               \unfoldsto
             \flabs{\vec{\avar}\bvar}{\aiter}$}
    \RightLabel{$\sslabs$}
    \UnaryInfC{$\flabs{\vec{\avar}}{\labs{\bvar}{\almter}}
                  \unfoldsto
                \flabs{\vec{\avar}}{\labs{\bvar}{\aiter}}$}
    \DisplayProof
        }
  \hspace*{1ex}
  \mbox{
    \AxiomC{$\flabs{\vec{\avar}}{\almter}
               \unfoldsto
             \flabs{\vec{\avar}}{\aiter}$}
    \RightLabel{$\snlvarsucc$
                \parbox{67pt}{\small \rule{0pt}{4.75pt}\\
                                     \rule{0pt}{4.75pt}\\
                                     (if the binding $\slabs{\bvar}$\\[-0.35ex]
                                      \hspace*{1.25em} is vacuous\\[-0.25ex] 
                                      \hspace*{1.25em} in $\almter$ and $\aiter$)}}
    \UnaryInfC{$\flabs{\vec{\avar}\bvar}{\almter}
                  \unfoldsto
                \flabs{\vec{\avar}\bvar}{\aiter}$}
    \DisplayProof         
        }  
  \\[1.75ex]    
  \mbox{
    \AxiomC{$\flabs{\vec{\avar}}{\funap{\almter}{\muabs{\arecvar}{\funap{\almter}{\arecvar}}}} 
               \unfoldsto
             \flabs{\vec{\avar}}{\aiter}$}
    \RightLabel{$\ssmuabs$}
    \UnaryInfC{$\flabs{\vec{\avar}}{\muabs{\arecvar}{\funap{\almter}{\arecvar}}}
                  \unfoldsto
                \flabs{\vec{\avar}}{\aiter}$}
    \DisplayProof
        }   
\vspace*{1ex}  
\end{center}
\end{minipage}
   }
\end{center}
\vspace*{-2ex} 
\caption{\label{fig:Unfinf}%
         Proof system $\Unfinf$ for completely unfolding of \lambdamuterms\
         into infinite \lambdaterms.
 }
\end{figure}

\begin{proposition}\label{prop:Unfinf}
  $\Unfinf$ is sound and complete w.r.t.\ $\sunfoldominfnfred\,$:
  For all $\flabs{\vec{\avar}}{\aiter}\in\Ter{\inflambdaprefixcal}$ and $\flabs{\vec{\avar}}{\almter}\in\Ter{\lambdamuprefixcal}$, 
  $\,\derivablein{\Unfinf}{\flabs{\vec{\avar}}{\almter} \unfoldsto \flabs{\vec{\avar}}{\aiter}}$ 
    holds 
    if and only if
  $\,\flabs{\vec{\avar}}{\almter} \unfoldominfnfred \flabs{\vec{\avar}}{\aiter}$.
\end{proposition}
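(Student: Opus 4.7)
The plan is to establish both directions by directly relating the branching structure of closed admissible $\Unfinf$-derivations to the sequential structure of strongly convergent outermost $\sunfoldred$-rewrite sequences to infinite normal form. Each instance of the rule $(\ssmuabs)$ in a derivation corresponds to one outermost $\srulep{\smu}$-rewrite step, while the structural rules $(\sslabs)$, $(\sslapp)$, $(\snlvarsucc)$, and $(\snlvar)$ merely descend through the matching skeleton that $\almter$ and $\aiter$ must share below all outermost $\mu$-binders.

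For completeness $(\Leftarrow)$, I would build the derivation coinductively. Given $\flabs{\vec{\avar}}{\almter} \unfoldominfnfred \flabs{\vec{\avar}}{\aiter}$, strong convergence at depth $0$ implies that only finitely many steps of the witnessing outermost sequence occur at the root, so there is a finite chain $\almter \unfoldmred \almteracc$ of root $\srulep{\smu}$-reductions after which the outermost constructor of $\almteracc$ coincides with that of $\aiter$. Each such root reduction translates into one $(\ssmuabs)$-instance. Then I apply the structural rule dictated by this common outermost constructor --- $(\sslabs)$, $(\sslapp)$, or the axiom $(\snlvar)$ --- prefaced eagerly by $(\snlvarsucc)$-instances whenever the topmost variable in the current prefix is vacuous in the body, so as to satisfy $\Vacstregeager$ness. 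The residual rewrite sequences on each immediate subterm are again strongly convergent outermost sequences to infinite normal form, on which I invoke the coinductive hypothesis. Admissibility is immediate: every thread contains at most a finite block of consecutive $(\ssmuabs)$-instances between any two structural rules, coming from a single depth-$0$ block at some stage.

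For soundness $(\Rightarrow)$, given a closed admissible derivation $\Deriv$ of $\flabs{\vec{\avar}}{\almter} \unfoldsto \flabs{\vec{\avar}}{\aiter}$, I would produce the rewrite sequence by a breadth-first traversal of $\Deriv$: each $(\ssmuabs)$-instance becomes an outermost $\sunfoldred$-step executed at the position in the current reduct determined by the number and kind of structural rules encountered below the instance on the thread to the root. Breadth-first ordering guarantees outermostness of each step. Strong convergence at depth $d$ follows from the admissibility condition, since the absence of infinitely many consecutive $(\ssmuabs)$-instances on any thread yields, for each $d$, only finitely many $(\ssmuabs)$-instances lying above all depth-$d$ structural ancestors in $\Deriv$. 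Closedness, together with the completeness of the structural fragment of $\Unfinf$ for $\Ter{\inflambdaprefixcal}$-terms (analogous to Proposition~\ref{prop:Lambdaprefixreg:Lambdaprefixstreg}), ensures that the limit of the resulting rewrite sequence is precisely $\aiter$.

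The main obstacle is not conceptual but bookkeeping: I must verify that each $(\ssmuabs)$-instance translated in the soundness direction really corresponds to an outermost $\mu$-redex of the then-current reduct at the intended position, and symmetrically that each root $\srulep{\smu}$-reduction in the completeness direction leaves the subterms in a form on which the coinductive construction proceeds correctly. This is unambiguous because the unfolding CRS is orthogonal and fully-extended, so by Proposition~\ref{prop:unfoldomred} the outermost infinite-normal-form reduct is unique; the underlying $\alpha$-equivalence handling for \iCRS-preterms requires attentive but routine care, chiefly in aligning the prefix $\vec{\avar}$ across $(\ssmuabs)$-instances that unfold a $\smu$-binder into a context that may contain it several times.
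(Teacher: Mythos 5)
The paper states Proposition~\ref{prop:Unfinf} without giving a proof, treating it as the direct correspondence between closed admissible $\Unfinf$-derivations and strongly convergent outermost $\sunfoldred$-rewrite sequences to infinite normal form, and your argument correctly supplies exactly that correspondence in both directions (finite root-blocks of ($\ssmuabs$)-instances matching the depth-$0$ segments of the sequence, structural rules tracking the common skeleton). The only point that deserves slightly more care than you give it is the strong-convergence claim in the soundness direction: besides the ban on infinitely many consecutive ($\ssmuabs$)-instances you also need that consecutive ($\snlvarsucc$)-instances are bounded by the prefix length and that the prooftree is finitely branching, so that a K\H{o}nig-style argument shows that the region of the derivation lying below $d$ nested ($\labscomp$)/($\lappcomp$)-instances contains only finitely many ($\ssmuabs$)-instances, which is what bounds the number of steps at depth $\le d$.
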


\begin{definition}[proof systems $\Expr$, $\Exprinf$, and $\Exprmu$, $\Exprmuinf$]%
    \label{def:Expr:Exprinf:Exprmu:Exprmuinf}
  The natural-de\-duc\-tion-style proof system $\Expr$ has as its formulas abstraction-prefixed \lambdamuterms\ annotated by infinite \lambdaterms,
  and the rules in Fig.\,\ref{fig:Expr:Exprmu}.  
  The system $\Exprmu$ has abstraction-prefixed \lambdamuterms\ as formulas, and its rules arise from $\Expr$
  by dropping the \lambdaterms. 
  Derivability in these systems means the existence of a closed (no open assumptions), \Vacstregeager, finite derivation.
  
  The variant $\Exprinf$ of $\Expr$ arises by replacing the rule ($\sFIX$) with the rule ($\ssmuabs$) in Fig.\,\ref{fig:Exprinf:Exprmuinf}.
  $\Exprmuinf$ arises from $\Exprmu$ analogously. A derivation in either of these systems is called \emph{admissible}
  if it does not contain infinitely many consecutive instances of ($\ssmuabs$), and 
  if it is \Vacstregeager\ in the sense of Def.~\ref{def:Reginf:stReginf}.
  Derivability in these systems means the existence of an admissible derivation that is
  closed in the sense of Def.~\ref{def:Lambdaprefixreg:Lambdaprefixstreg}.  
\end{definition}

We first observe that derivations in $\Exprinf$ are closely linked to derivations in $\Unfinf$.

\begin{lemma}\label{lem:Exprinf:Unfinf}
  $\derivablein{\Exprinf}{\flabsann{\vec{\avar}}{\almter}{\aiter}}$ holds
    if and only if   
  $\derivablein{\Unfinf}{\flabs{\vec{\avar}}{\almter}  \unfoldsto  \flabs{\vec{\avar}}{\aiter}}$ holds,
  for all $\flabs{\vec{\avar}}{\almter}\in\Ter{\lambdamuprefixcal}$ and $\flabs{\vec{\avar}}{\aiter}\in\Ter{\inflambdaprefixcal}$.
\end{lemma}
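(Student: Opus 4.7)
The plan is to establish the equivalence by exhibiting a direct, structure-preserving bijection between closed admissible derivations in $\Exprinf$ and closed admissible derivations in $\Unfinf$. The key observation driving the proof is that the formulas of the two systems are essentially the same data presented in two different syntactic guises: a formula $\flabsann{\vec{\avar}}{\almter}{\aiter}$ in $\Exprinf$ carries exactly the same information as the judgment $\flabs{\vec{\avar}}{\almter} \unfoldsto \flabs{\vec{\avar}}{\aiter}$ in $\Unfinf$, namely a prefixed \lambdamuterm\ together with a prefixed infinite \lambdaterm\ that is claimed to be its unfolding.

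First I would observe that the rules of $\Exprinf$ (obtained from $\Expr$ by replacing ($\sFIX$) with ($\ssmuabs$), see Fig.~\ref{fig:Exprinf:Exprmuinf}) stand in bijective correspondence with the rules of $\Unfinf$ (Fig.~\ref{fig:Unfinf}): the variable axiom corresponds to ($\snlvar$), the binary rule for applications to ($\sslapp$), the unary rule for $\lambda$-abstraction to ($\sslabs$), the rule for dropping vacuous prefix bindings to ($\snlvarsucc$), and the $\mu$-unfolding rule to ($\ssmuabs$). In each case the side-conditions (vacuousness of the dropped binding) coincide, and the premise-conclusion patterns agree exactly once the two syntactic formats of formulas are identified.

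Second I would define the translation on derivations by recursion on the depth of the prooftree (or, in the infinite case, corecursively on the prooftree structure): a derivation $\Derivann$ in $\Exprinf$ with conclusion $\flabsann{\vec{\avar}}{\almter}{\aiter}$ translates to a derivation $\infDeriv$ in $\Unfinf$ with conclusion $\flabs{\vec{\avar}}{\almter} \unfoldsto \flabs{\vec{\avar}}{\aiter}$ by applying the corresponding rule at the root and translating the subderivations; the inverse translation is defined analogously. Next I would check that closedness and admissibility transfer: closedness is the condition that all leaves are axiom instances, which is preserved since the axiom ($\bvarax$) of $\Exprinf$ corresponds to the axiom ($\snlvar$) of $\Unfinf$; \Vacstregeager ness is a local condition on instances of ($\sslabs$) and ($\sslapp$) that transfers under the rule correspondence; and the prohibition of infinitely many consecutive ($\ssmuabs$)-instances is a condition on threads that is preserved because the correspondence maps ($\ssmuabs$)-instances to ($\ssmuabs$)-instances.

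Combining these steps yields both directions of the equivalence simultaneously. There is no real obstacle here --- the proof is essentially bookkeeping verifying that $\Exprinf$ and $\Unfinf$ are notational variants of one another, with $\Exprinf$ highlighting the \lambdamuterm\ structure via annotations while $\Unfinf$ presents the unfolding judgment in relational form. The only point deserving care is the admissibility condition concerning consecutive ($\ssmuabs$)-applications, but since the translation preserves threads and the labels of rule instances along them, this condition transfers without modification.
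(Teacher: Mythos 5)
Your proposal is correct and follows essentially the same route as the paper: the paper's proof also identifies $\Exprinf$ and $\Unfinf$ as notational variants via the formula correspondence $\flabsann{\vec{\avar}}{\almter}{\aiter} \leftrightarrow \flabs{\vec{\avar}}{\almter} \unfoldsto \flabs{\vec{\avar}}{\aiter}$, extended rule-by-rule to derivations while preserving and reflecting closedness and admissibility. Your version merely spells out the bookkeeping in more detail.
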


\begin{proof}
  Derivations in $\Exprinf$ and in $\Unfinf$ differ only in the notation used for their formulas.
  A formula
  $\flabsann{\vec{\bvar}}{\blmter}{\biter}$ in $\Exprinf$
  corresponds to the formula
  $\flabs{\vec{\bvar}}{\blmter} \unfoldsto \flabs{\vec{\bvar}}{\biter}$
  in $\Unfinf$.
  This correspondence 
  extends to a correspondence between derivations of $\Exprinf$ and $\Unfinf$,
  which preserves and reflects the property of derivations to be closed and admissible. 
\end{proof}

The lemma below gathers basic properties of the proof systems $\Exprmu$ and $\Exprmuinf$. 

\begin{lemma}\label{lem:Exprmu:Exprmuinf}
  \begin{enumerate}[(i)]
    \item{}\label{lem:Exprmu:Exprmuinf:item:i}
      For every \lambdamuterm~$\almter$:
      $\derivablein{\Exprmu}{\femptylabs{\almter}}$
      if and only if
      $\derivablein{\Exprmuinf}{\femptylabs{\almter}}$.
    \item{}\label{lem:Exprmu:Exprmuinf:item:ii} 
      Every closed derivation in $\Exprmuinf$ contains only finitely many \lambdamuterms.
    \item{}\label{lem:Exprmu:Exprmuinf:item:iii} 
      For every \lambdamuterm~$\almter$ it holds:
      $\derivablein{\Exprmu}{\femptylabs{\almter}}$
      if and only if 
      there is no $\sstregred$\nb-ge\-ne\-ra\-ted subterm of $\almter$ (in $\gSTstreg{\almter}$) 
      of the form $\femptylabs{\muabs{\avari{0}\ldots\avari{n}}{\avari{0}}}$ for $n\in\nats$.  
  \end{enumerate}
\end{lemma}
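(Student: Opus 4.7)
The plan is to adapt the unfolding/folding proof pattern of Theorem~\ref{thm:Reg:stReg:stRegzero} together with an explicit bound on the $\lambda\mu$-term bodies reachable in $\Exprmuinf$-derivations.

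For statement~(\ref{lem:Exprmu:Exprmuinf:item:i}) I would follow the unfolding/folding strategy of Theorem~\ref{thm:Reg:stReg:stRegzero}. A closed finite derivation $\Deriv$ in $\Exprmu$ is unfolded into a closed derivation in $\Exprmuinf$ by iteratively replacing each bottommost $\sFIX$-style $\mu$-introduction (with conclusion $\flabs{\vec{\avar}}{\muabs{\arecvar}{\blmter}}$, discharging assumptions of the form $[\flabs{\vec{\avar}}{\arecvar}]^{\amarker}$) with copies of its subderivation attached above each discharged assumption. The side-condition $\depth{\Derivi{0}}\ge 1$ on $\sFIX$-style instances enforces a guardedness property analogous to Proposition~\ref{prop:guardedness:Reg:stReg:stRegzero}, which in the limit translates into the admissibility condition for $\Exprmuinf$: no thread contains infinitely many consecutive $\ssmuabs$-applications. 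Conversely, a closed admissible derivation in $\Exprmuinf$ contains only finitely many $\lambda\mu$-term bodies by~(\ref{lem:Exprmu:Exprmuinf:item:ii}), so K\H{o}nig's Lemma yields repetitions of formulas along sufficiently long threads, permitting folding into a finite $\Exprmu$-derivation by inserting $\sFIX$-style instances that discharge the upper occurrences.

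For statement~(\ref{lem:Exprmu:Exprmuinf:item:ii}), the plan is to bound the set of $\lambda\mu$-term bodies appearing in a closed derivation of $\femptylabs{\almter}$ in $\Exprmuinf$ by the finite set $S$ whose elements are obtained from sub-$\lambda\mu$-terms $\blmter$ at positions of $\almter$ by simultaneously substituting each free recursion variable $\arecvar$ of $\blmter$ by its binding $\mu$-abstraction from the enclosing structure of $\almter$. Then $S$ has at most as many elements as $\almter$ has positions, and an induction on the construction of the derivation, proceeding from the conclusion upward, shows that every body of a formula lies in $S$: the rules $\lappcomp$, $\labscomp$, and $\Vacstreg$ descend into a subterm position while preserving the ambient $\mu$-substitutions, and the rule $\ssmuabs$ rewrites $\muabs{\arecvar}{\funap{\blmter}{\arecvar}}\in S$ to $\funap{\blmter}{\muabs{\arecvar}{\funap{\blmter}{\arecvar}}}$, which again corresponds to a position in $\almter$ under the $\mu$-substitution, hence lies in $S$.

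For statement~(\ref{lem:Exprmu:Exprmuinf:item:iii}), via~(\ref{lem:Exprmu:Exprmuinf:item:i}) it suffices to characterize derivability in $\Exprmuinf$. For ``$\Rightarrow$'': assuming $\almter$ has a $\sstregred$-generated subterm of the form $\flabs{\vec{\avar}}{\muabs{\avari{0}\ldots\avari{n}}{\avari{0}}}$, I would argue via a correspondence between complete $\Exprmuinf$-derivations and $\sstregred$-rewrite sequences (in the spirit of Proposition~\ref{prop:Reginf:stReginf}) that such a formula must be reached in every admissible derivation; $\sstregred$-eagerness then forces $\Vacstreg$-steps to strip the prefix to $\femptylabs{\muabs{\avari{0}\ldots\avari{n}}{\avari{0}}}$, at which point the only applicable rule is $\ssmuabs$; however $\ssmuabs$-unfolding of this trivial $\mu$-nest again produces a formula whose body is a $\mu$-abstraction, always ending in an innermost body equal to the outermost-bound recursion variable, so iterating we obtain an infinite consecutive $\ssmuabs$-chain, violating admissibility. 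For ``$\Leftarrow$'': if no such trivial $\mu$-generated subterm exists, then upon reaching any $\mu$-abstraction during the combined $\sstregred$-plus-$\ssmuabs$ decomposition of $\femptylabs{\almter}$, finitely many $\ssmuabs$-unfoldings will expose a $\lambda$-abstraction, an application, or a recursion variable distinct from the outermost-bound one, allowing some non-$\ssmuabs$ rule to fire, so no infinite consecutive $\ssmuabs$-chain arises and the resulting derivation is admissible. The main obstacle will be in~(\ref{lem:Exprmu:Exprmuinf:item:iii}): making precise the correspondence between the $\sstregred$-decomposition of the $\lambdamucal$-term $\almter$ (under a suitable extension of $\sstregred$ that includes the $\mu$-unfolding rule $\srulep{\smu}$) and the $\ssmuabs$-unfoldings inside $\Exprmuinf$-derivations, so as to verify that trivial $\mu$-nests in $\gSTstreg{\almter}$ are in exact correspondence with the infinite consecutive $\ssmuabs$-chains obstructing admissibility, while properly accounting for the interleaved $\Vacstreg$-steps imposed by $\sstregred$-eagerness.
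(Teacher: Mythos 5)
Your ``$\Rightarrow$'' of (i) coincides with the paper's unfolding argument, but the way you prove (ii) and then use it for ``$\Leftarrow$'' of (i) has a real gap. Your set $S$ of $\mu$\nb-substitution closures of subterms of $\almter$ bounds only the \emph{bodies} of the formulas occurring in a closed $\Exprmuinf$\nb-derivation, whereas the formulas are abstraction-prefixed \lambdamuterms: finiteness of the set of bodies does not bound the lengths of the abstraction prefixes, because ($\Vacstreg$) only removes \emph{trailing} vacuous bindings, so vacuous bindings in the middle of the prefix can in principle accumulate --- this is exactly the phenomenon separating regularity from strong regularity (cf.\ Fig.~\ref{fig:not:ll:expressible}), and proving that it cannot happen for the decomposition-with-unfolding of a \lambdamuterm\ is essentially the nontrivial content here; it does not follow from your induction over $S$. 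The paper gets (ii) by a different route: every closed admissible $\Exprmuinf$\nb-derivation is exhibited as the unfolding of a \emph{finite} closed derivation in the side-condition-free variant $\Exprmumin$, and since unfolding only substitutes \lambdamuterms\ for the constants and never changes prefixes, the set of \emph{prefixed} formulas is finite --- which is also the form in which (ii) is invoked later, in the proof of Lemma~\ref{lem:Expr:Exprinf}.

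Consequently your ``$\Leftarrow$'' of (i) (K\H{o}nig plus folding) needs finiteness of prefixed formulas, not of bodies, to obtain repetitions; and even granting that, the folding is more delicate than ``insert $\sFIX$ and discharge'': a foldable repetition must sit at conclusions of ($\ssmuabs$)\nb-instances (only formulas with a $\mu$\nb-abstraction body can be conclusions of ($\sFIX$)); above the lower occurrence the infinite derivation carries the unfolded body $\funap{\blmter}{\muabs{\arecvar}{\funap{\blmter}{\arecvar}}}$, so in the folded segment the descendants of the unfolded occurrence have to be replaced by the fresh constant $\aconstnamei{\amarker}$; and both side-conditions of ($\sFIX$) in Fig.~\ref{fig:Expr:Exprmu} --- guardedness $\depth{\Derivi{0}}\ge 1$ and the prefix-length condition --- must be checked, the latter by picking a repeated formula of minimal prefix length among those recurring on the thread, as in the completeness argument for $\stRegzero$ in Theorem~\ref{thm:Reg:stReg:stRegzero}. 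The paper avoids all of this by a uniqueness argument: take a finite closed $\Exprmumin$\nb-derivation of the same conclusion (which exists unconditionally), unfold it, use that closed \Vacstregeager\ derivations in $\Exprmuinf$ are unique to conclude the unfolding is the given derivation, and infer guardedness of the finite derivation from admissibility; (ii) and (iii) are then read off from this correspondence with finite $\Exprmumin$\nb-derivations, whereas your (iii) still rests on the streg-versus-$\mu$\nb-unfolding correspondence that you yourself flag as unproven. To keep your route you would have to add a direct proof that prefix lengths are bounded in closed admissible $\Exprmuinf$\nb-derivations and spell out the constant-substituting folding.
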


\begin{proof}
  For (\ref{lem:Exprmu:Exprmuinf:item:i}), in order to show ``$\Rightarrow$'' let $\Deriv$
  be a finite, closed, \Vacstregeager\ derivation in $\Exprmu$ with conclusion $\femptylabs{\almter}$. 
  By `unfolding' this derivation through a process in which in each step: 
  \begin{center}
    \parbox{20ex}{\centering a subderivation\\of a bottommost\\instance of $\sFIXExpr$} \hspace*{1ex}
    \scalebox{0.9}{$
    \begin{aligned}[c]
      \AxiomC{$ [\flabs{\vec{\bvar}}{\aconstnamei{\amarker}}]^{\amarker} $}
      \noLine
      \UnaryInfC{$\funap{\Derivi{0}}{\aconstnamei{\amarker}}$}
      \noLine
      \UnaryInfC{$ \flabs{\vec{\bvar}}{\funap{\blmter}{\aconstnamei{\amarker}}} $}
      \RightLabel{$\sFIXExpr,l$ $\;$}                   
      \UnaryInfC{$ \flabs{\vec{\bvar}}{\muabs{\arecvar}{\funap{\blmter}{\arecvar}}} $}
      \DisplayProof
    \end{aligned}  
                   $}
    \hspace*{1ex}\parbox[c]{19ex}{\centering is `unfolded' into\\a subderivation}\hspace*{1ex} 
    \scalebox{0.9}{$ 
    \begin{aligned}[c]
      \AxiomC{$ [\flabs{\vec{\bvar}}{\aconstnamei{\amarker}}]^{\amarker} $}
      \noLine
      \UnaryInfC{$\funap{\Derivi{0}}{\aconstnamei{\amarker}}$}
      \noLine
      \UnaryInfC{$ \flabs{\vec{\bvar}}{\funap{\blmter}{\aconstnamei{\amarker}}} $}
      \RightLabel{$\sFIXExpr,l$ $\;$}
      \UnaryInfC{$ [\flabs{\vec{\bvar}}{\muabs{\arecvar}{\funap{\blmter}{\arecvar}}}] $}
      \noLine
      \UnaryInfC{$\funap{\Derivi{0}}{\muabs{\arecvar}{\funap{\blmter}{\arecvar}}}$}
      \noLine
      \UnaryInfC{$ \flabs{\vec{\bvar}}{\funap{\blmter}{\muabs{\arecvar}{\funap{\blmter}{\arecvar}}}} $}
      \RightLabel{$\ssmuabs$}                   
      \UnaryInfC{$ \flabs{\vec{\bvar}}{\muabs{\arecvar}{\funap{\blmter}{\arecvar}}} $}
      \DisplayProof
    \end{aligned}
                    $}
  \end{center}
  in the limit a closed derivation $\infDeriv$ in $\Exprmuinf$ is obtained with the same conclusion as $\Deriv$.
  Furthermore, $\infDeriv$ does not contain infinitely many consecutive instances of $\ssmuabs$,
  since the side-condition on ($\sFIX$) guarantees a
  guardedness condition analogous to Proposition\,\ref{prop:guardedness:Reg:stReg:stRegzero}. 
  Hence $\infDeriv$ is a closed admissible derivation in $\Exprmuinf$ with conclusion $\femptylabs{\almter}$.   
  For showing ``$\Leftarrow$'', suppose that $\infDeriv$ is a closed, admissible derivation in $\Exprmuinf$
  with conclusion $\femptylabs{\almter}$. Then there is a finite closed
  derivation $\Deriv$ with the same conclusion in the variant system
  $\Exprmumin$ that does not require the side-condition part
  $\depth{\Derivi{0}}\ge 1$ for instances of ($\sFIX$). 
  Via the process described above, $\Deriv$ unfolds to a closed, \Vacstregeager\ derivation in $\Exprmuinf$, which has to
  be equal to $\infDeriv$, since closed \Vacstregeager\ derivations in $\Exprmuinf$ are unique
  (due to the rules of this system). 
  If $\Deriv$ would not satisfy the guardedness condition described in Proposition$\,$\ref{prop:guardedness:Reg:stReg:stRegzero},
  and therefore would also violate the mentioned side-condition part, for any of its ($\sFIX$)\nb-in\-stan\-ces, 
  then $\infDeriv$ would not be admissible.
  It follows that $\Deriv$ is a closed derivation in $\Exprmu$ with conclusion $\femptylabs{\almter}$.
  
  For (\ref{lem:Exprmu:Exprmuinf:item:ii}) note that by the argument for ``$\Leftarrow$'' in (\ref{lem:Exprmu:Exprmuinf:item:i}),
  every closed derivation in $\Exprmuinf$ is the unfolding of a closed derivation in $\Exprmu$, and that the
  unfolding process can produce only finitely many \lambdamuterms.
  Statement~(\ref{lem:Exprmu:Exprmuinf:item:iii}) follows by an easy analysis of closed derivations in $\Exprmumin$
  that violate the guardedness condition in Proposition$\,$\ref{prop:guardedness:Reg:stReg:stRegzero}
  on any of its ($\sFIX$)\nb-instances.
\end{proof}

\begin{figure}[t!]
\vspace*{-3ex}  
  
\begin{center}  
  \framebox{
\begin{minipage}{380pt}
\begin{center}
  \mbox{}
  \\[1ex]
  \mbox{ 
    \AxiomC{$\phantom{\flabsann{\vec{\avar}\bvar}{\bvar}{\bvar}}$}
    \RightLabel{\bvarax}
    \UnaryInfC{$\flabsann{\vec{\avar}\bvar}{\bvar}{\bvar}$}
    \DisplayProof
        }
  \hspace*{3ex}       
  \mbox{
    \AxiomC{$ \flabsann{\vec{\avar}\bvar}{\almter}{\ater} $}
    \RightLabel{$\labscomp$}
    \UnaryInfC{$ \flabsann{\vec{\avar}}{\labs{\bvar}{\almter}}{\labs{\bvar}{\ater}} $}
    \DisplayProof
        } 
  \hspace*{3ex}
  \mbox{
    \AxiomC{$ \flabsann{\vec{\avar}}{\almteri{0}}{\ateri{0}}$}
    \AxiomC{$ \flabsann{\vec{\avar}}{\almteri{1}}{\ateri{1}}$}
    \RightLabel{$\lappcomp$}
    \BinaryInfC{$ \flabsann{\vec{\avar}}{\lapp{\almteri{0}}{\almteri{1}}}{\lapp{\ateri{0}}{\ateri{1}}} $}
    \DisplayProof 
    }     
  \\[2ex]
  \mbox{}\hspace*{30ex}
  \mbox{
    \AxiomC{$ \flabsann{\vec{\avar}}{\almter}{\aiter} $}
    \RightLabel{\annVacstreg\
                \small (if the binding $\slabs{\bvar}$ is vacuous)
                       }
    \UnaryInfC{$ \flabsann{\vec{\avar}\bvar}{\almter}{\aiter} $}
    \DisplayProof
        }
  \\[-4ex]
  \mbox{
    \AxiomC{$ [\flabsann{\vec{\avar}}{\aconstnamei{\amarker}}{\ater}]^{\amarker} $}
    \noLine
    \UnaryInfC{$\Derivi{0}$}
    \noLine
    \UnaryInfC{$ \flabsann{\vec{\avar}}{\funap{\almter}{\aconstnamei{\amarker}}}{\ater} $}
    \RightLabel{$\sFIXExpr,l$ $\;$
                \parbox{205pt}{\small 
                               (if $\depth{\Derivi{0}} \ge 1$, and $\length{\vec{y}} \ge \length{\vec{x}}$ for all
                                $\flabsann{\vec{y}}{\blmter}{\biter}$ on threads 
                                from open assumptions $(\flabsanntxt{\vec{\avar}}{\aconstnamei{\amarker}}{\ater})^{\amarker}$ down)}
                }                   
    \UnaryInfC{$ \flabsann{\vec{\avar}}{\muabs{\arecvar}{\funap{\almter}{\arecvar}}}{\ater} $}
    \DisplayProof
    }  
  \mbox{}  
\end{center}
\end{minipage}
   }
\end{center} 
  \vspace*{-2ex}  
  \caption{\label{fig:Expr:Exprmu}%
           Natural-deduction style proof system $\Expr$ for expressibility of infinite \lambdaterms\ by \lambdamuterms.
           The proof system $\Exprmu$ for \lambdamuterms\ that express infinite \lambdaterms\ 
        arises by dropping the colons `:' and the subsequent infinite \lambdaterms. 
        Derivations in $\Expr$ and $\Exprmu$ must be \Vacstregeager.}
\end{figure}
\begin{figure}[t!]
\begin{center}
  \framebox{
\begin{minipage}{300pt}
\begin{center}  
  \mbox{}
  \\[0.75ex]
  \mbox{
    \AxiomC{$\flabsann{\vec{\avar}}{\funap{\almter}{\muabs{\arecvar}{\funap{\almter}{\arecvar}}}}{\aiter}$}
    \RightLabel{$\ssmuabs$}
    \UnaryInfC{$\flabsann{\vec{\avar}}{\muabs{\arecvar}{\funap{\almter}{\arecvar}}}{\aiter}$}
    \DisplayProof
        } 
  \\[0.25ex]
  \mbox{}   
\end{center}
\end{minipage}
            }
\end{center}
  \vspace*{-2ex} 
  \caption{\label{fig:Exprinf:Exprmuinf}%
           The proof system $\Exprinf$ for expressibility of \lambdaterms\ by \lambdamuterms\
           arises from $\Expr$ by replacing the rule $\sFIXExpr$ with the rule $\ssmuabs$. 
           The proof system $\Exprmuinf$ for \lambdamuterms\ that express \lambdaterms\ 
           arises from $\Exprinf$ by dropping the colons `:' and the subsequent infinite \lambdaterms.
           Admissible derivations in $\Exprinf$ and in $\Exprmuinf$ do not have infinitely many consecutive
           instances of $\ssmuabs$.%
           }
  \vspace*{-2ex}
\end{figure}

The lemma below links derivability in $\Expr$ with derivability in $\Exprinf$.
Its proof establishes this link via `unfolding' and `folding' of derivations.

\begin{lemma}\label{lem:Expr:Exprinf}
  $\derivablein{\Expr}{\flabsann{\vec{\avar}}{\almter}{\aiter}}$ holds
    if and only if 
  $\derivablein{\Exprinf}{\flabsann{\vec{\avar}}{\almter}{\aiter}}$ holds,
  for all $\flabs{\vec{\avar}}{\almter}\in\Ter{\lambdamuprefixcal}$ and $\flabs{\vec{\avar}}{\aiter}\in\Ter{\inflambdaprefixcal}$.
\end{lemma}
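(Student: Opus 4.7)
The plan is to prove both directions by means of the same unfolding/folding transformations between finite derivations of $\Expr$ and admissible closed derivations of $\Exprinf$ that were already used in the proof of Lemma~\ref{lem:Exprmu:Exprmuinf}, now carried out on derivations that carry additional $\lambda^{\infty}$-term annotations. The key new point is to verify that these annotations are preserved by the transformations and that the side condition on ($\sFIXExpr$) in $\Expr$ can always be met.

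For the direction ``$\Rightarrow$'', I would take a closed, \Vacstregeager, finite derivation $\Deriv$ in $\Expr$ with conclusion $\flabsann{\vec{\avar}}{\almter}{\aiter}$, and perform the same stepwise unfolding as in the proof of Lemma~\ref{lem:Exprmu:Exprmuinf},~(\ref{lem:Exprmu:Exprmuinf:item:i}),~``$\Rightarrow$'': replace a bottommost instance of $\sFIXExpr$ with conclusion $\flabsann{\vec{\bvar}}{\muabs{\arecvar}{\funap{\blmter}{\arecvar}}}{\biter}$ and marked assumptions $\flabsann{\vec{\bvar}}{\aconstnamei{\amarker}}{\biter}$ by copying the subderivation $\Derivi{0}$ once above each marked assumption, instantiating the constant $\aconstnamei{\amarker}$ by $\muabs{\arecvar}{\funap{\blmter}{\arecvar}}$ (while leaving the $\lambda^{\infty}$\nb-annotation $\biter$ untouched, since at each discharged assumption the annotation equals that of the conclusion of $\sFIXExpr$), and appending an $\ssmuabs$-step to restore the original conclusion. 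The guardedness guaranteed by the side condition $\depth{\Derivi{0}}\ge 1$ ensures, exactly as in the proof of Lemma~\ref{lem:Exprmu:Exprmuinf}, that the limit is a closed derivation $\infDeriv$ in $\Exprinf$ with conclusion $\flabsann{\vec{\avar}}{\almter}{\aiter}$ that does not contain infinitely many consecutive $\ssmuabs$-instances. \Vacstregeager ness of $\Deriv$ is preserved by construction, since unfolding does not introduce new $\lappcomp$- or $\labscomp$-instances. Therefore $\infDeriv$ is admissible.

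For the direction ``$\Leftarrow$'', let $\infDeriv$ be a closed, admissible derivation in $\Exprinf$ with conclusion $\flabsann{\vec{\avar}}{\almter}{\aiter}$. Erasing the $\lambda^{\infty}$\nb-annotations yields a closed admissible derivation $\infDeriv^{-}$ in $\Exprmuinf$ with conclusion $\flabs{\vec{\avar}}{\almter}$, so by Lemma~\ref{lem:Exprmu:Exprmuinf},~(\ref{lem:Exprmu:Exprmuinf:item:ii}), only finitely many $\lambda\mu$-terms occur in $\infDeriv$. Now observe that the rules of $\Exprinf$ act on the $\lambda\mu$-part and on the $\lambda^{\infty}$-part in lockstep, and that from any conclusion $\flabsann{\vec{\bvar}}{\blmter}{\biter}$ the premiss-formulas are uniquely determined by the chosen rule (the only choice being between ($\ssmuabs$) and whichever decomposition/vacuous rule applies to the remaining $\lambda^{\infty}$-term); consequently the full formulas $\flabsann{\vec{\bvar}}{\blmter}{\biter}$ occurring in $\infDeriv$ are determined by the $\lambda\mu$-terms, so $\infDeriv$ contains only finitely many distinct formulas. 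Fold $\infDeriv$ to a finite derivation in $\Expr$ by the stricter repetition criterion used for $\stRegzero$ in the proof of Theorem~\ref{thm:Reg:stReg:stRegzero}: on every infinite thread, choose the first pair of occurrences of a formula $\flabsann{\vec{\bvar}}{\blmter}{\biter}$ whose prefix length is minimal among those formulas that occur infinitely often on that thread; then all formulas between the upper and the lower occurrence have prefix length $\ge\length{\vec{\bvar}}$. Collapse the subderivation between these two occurrences by introducing an instance of ($\sFIXExpr$) whose marker discharges the (now open) assumption introduced at the upper occurrence, replacing the repeated occurrence of $\muabs{\arecvar}{\funap{\blmter'}{\arecvar}}$ within $\blmter$ by a fresh constant $\aconstnamei{\amarker}$. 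The prefix-length condition just verified is exactly the side condition $\length{\vec{y}}\ge\length{\vec{x}}$ of ($\sFIXExpr$), and admissibility of $\infDeriv$ (in particular the absence of infinite consecutive $\ssmuabs$-chains) guarantees the other part $\depth{\Derivi{0}}\ge 1$. Iterating, the resulting prooftree is finite by K\"onig's Lemma together with the fact that only finitely many formulas occur. \Vacstregeager ness transfers directly from $\infDeriv$ to the folded derivation.

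The main obstacle, and the step I would expect to need the most care, is verifying that the folding step in the ``$\Leftarrow$'' direction simultaneously meets the full side condition of ($\sFIXExpr$) on both the $\lambda\mu$- and the $\lambda^{\infty}$-com\-po\-nent: one needs that the chosen pair of repeating formulas coincides in both components and that the marked assumption introduced at the upper position carries the same annotation $\biter$ as the conclusion, so that all copies of the subderivation fit together after the substitution of $\aconstnamei{\amarker}$ for the displayed occurrence of $\muabs{\arecvar}{\funap{\blmter'}{\arecvar}}$. This is precisely where the lockstep character of $\Exprinf$-rules is needed, and where the finiteness of the set of full formulas (and not merely of their $\lambda\mu$-components) enters essentially.
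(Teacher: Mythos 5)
Your ``$\Rightarrow$'' direction and your overall folding plan for ``$\Leftarrow$'' coincide with the paper's proof (unfolding as in Lemma~\ref{lem:Exprmu:Exprmuinf}, erasing annotations to use Lemma~\ref{lem:Exprmu:Exprmuinf},~(\ref{lem:Exprmu:Exprmuinf:item:ii}), then folding as in Theorem~\ref{thm:Reg:stReg:stRegzero}). But there is a gap at the crucial step of ``$\Leftarrow$'': the claim that $\infDeriv$ contains only finitely many distinct \emph{full} formulas. You justify it by saying that the rules of $\Exprinf$ act ``in lockstep'' and that the premiss formulas are uniquely determined by the conclusion and the chosen rule. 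That only gives top-down determinacy of the prooftree from its end formula; it does not exclude that two occurrences in $\infDeriv$ carry the same prefixed \lambdamuterm~$\flabs{\vec{\bvar}}{\blmter}$ but different annotations $\biter\neq\biter'$, which is exactly what must be ruled out to pass from ``finitely many \lambdamuterms'' to ``finitely many formulas''. What is needed is that, for every formula occurrence $\flabsann{\vec{\bvar}}{\blmter}{\biter}$ in a closed admissible derivation, $\biter$ is \emph{the} infinite unfolding of $\flabs{\vec{\bvar}}{\blmter}$, hence a function of the \lambdamuterm. The paper obtains this by translating $\infDeriv$ into a closed admissible derivation in $\Unfinf$ (Lemma~\ref{lem:Exprinf:Unfinf}), observing that subderivations of such derivations are again closed and admissible, and then invoking soundness of $\Unfinf$ (Proposition~\ref{prop:Unfinf}) together with uniqueness of infinite unfoldings (Proposition~\ref{prop:unfoldomred}). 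Your sketch never appeals to these results, and your closing paragraph only flags the issue without supplying the argument; spelling out your ``lockstep'' remark rigorously would essentially amount to reproving soundness of $\Exprinf$/$\Unfinf$ with respect to $\sunfoldominfnfred$.

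The repair is cheap: insert exactly the paper's detour, i.e.\ pass from $\infDeriv$ to the corresponding $\Unfinf$-derivation, apply Propositions~\ref{prop:Unfinf} and~\ref{prop:unfoldomred} to each subderivation to conclude that the annotation of every formula is determined by its \lambdamuterm\ component, and then your finiteness and folding argument (including the prefix-length bookkeeping for the side condition of ($\sFIXExpr$) and the guardedness part $\depth{\Derivi{0}}\ge 1$ from admissibility) goes through as you describe.
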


\begin{proof}
  For ``$\Rightarrow$'' let $\Deriv$ be a closed, \Vacstregeager, and finite derivation in $\Expr$ 
  with the conclusion $\,\flabsanntxt{\vec{\avar}}{\almter}{\aiter}\,$.
  By an unfolding process and arguments analogous as described in the proof of Lemma~\ref{lem:Exprmu:Exprmuinf},~(\ref{lem:Exprmu:Exprmuinf:item:i}),
  $\Deriv$ unfolds to a closed, admissible derivation $\infDeriv$ in $\Exprinf$
  with the same conclusion
  $\,\flabsanntxt{\vec{\avar}}{\almter}{\aiter}\,$
  
  For ``$\Leftarrow$'', suppose that 
  $\infDeriv$ is a closed, admissible derivation in $\Exprinf$
  with conclusion $\flabsann{\vec{\avar}}{\almter}{\aiter}$.
  By changing the notation of the formulas used in $\infDeriv$
  to the notation for formulas in $\Unfinf$ as explained in the proof of Lemma~\ref{lem:Exprinf:Unfinf}, 
  a closed, admissible derivation $\infDerivacc$ in $\Unfinf$ with
  conclusion $\,\flabs{\vec{\avar}}{\almter} \unfoldsto \flabs{\vec{\avar}}{\aiter}\,$ is obtained.
  Since subderivations of closed admissible derivations in $\Unfinf$ are again such derivations,
  it follows from the soundness of $\Unfinf$ with respect to $\sunfoldinfnfred$ (cf.\ Proposition\,\ref{prop:Unfinf}),
  and from the uniqueness of the infinite unfolding (if it exists) of a \lambdamuterm\ (cf.\ Proposition\,\ref{prop:unfoldomred})
  that $\infDerivacc$
  does not contain more infinite prefixed \lambdaterms\ than prefixed \lambdamuterms.
  By dropping the symbols $\sunfoldsto$ and the infinite \lambdaterms\ on the right in $\infDerivacc$
  (or by dropping the colons `$:$' and the infinite \lambdaterms\ on the right in $\infDeriv$)
  a closed admissible derivation $\infDerivi{\mu}$ in $\Exprmuinf$ is obtained.
  Due to Lemma~\ref{lem:Exprmu:Exprmuinf}, (\ref{lem:Exprmu:Exprmuinf:item:ii}), 
  $\infDerivi{\mu}$ contains only finitely many prefixed \lambdamuterms.
  Due to the construction of $\infDerivi{\mu}$ this holds for $\infDeriv$ and $\infDerivacc$ as well.
  Since $\infDerivacc$ does not contain more infinite \lambdaterms\ than \lambdamuterms,
  it follows that $\infDerivacc$ contains only finitely many formulas. 
  Due to the correspondence between $\infDerivacc$ and $\infDeriv$,
  this holds also for $\infDeriv$. 
  
  Therefore $\infDeriv$ can be `folded', similar as in the proof of Theorem~\ref{thm:Reg:stReg:stRegzero}, 
  into a finite closed derivation $\Derivacc$ in $\Expr$ with conclusion $\,\flabsann{\vec{\avar}}{\almter}{\aiter}\,$
  by introducing ($\sFIX$)\nb-instances 
  to cut off the derivation above the upper occurrence of a repetition 
  (the side-condition on such instances of ($\sFIX$) is guaranteed due to the admissibility of $\infDeriv$).
\end{proof}

By gathering properties of the systems $\Expr$, $\Exprinf$, and $\Unfinf$ that we have shown
we can now justify the name of the system $\Expr$
on the grounds that this system formalizes the property of (prefixed) \lambdamuterms\ to express (prefixed) infinite \lambdaterms. 

\begin{theorem}\label{thm:Expr}
      The proof system $\Expr$ is sound and complete with respect to 
                                                                     \lambdamuexpressibility.
      That is,
      for all expressions $\flabsanntxt{\vec{\avar}}{\almter}{\aiter}$ of \lambdamuterm\nb-anno\-ta\-ted, prefixed infinite \lambdaterms\
      it holds that 
      $\:\derivablein{\Expr}{\flabsanntxt{\vec{\avar}}{\almter}{\aiter}}\:$
        if and only if  
      $\flabs{\vec{\avar}}{\almter}$ expresses $\flabs{\vec{\avar}}{\aiter}$.   
\end{theorem}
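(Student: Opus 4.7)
The plan is to derive this theorem as a direct corollary of the machinery already assembled, by chaining together four equivalences that connect derivability in $\Expr$ to the infinite outermost unfolding rewrite relation, which in turn characterizes expressibility. Since the heavy lifting has been done in the preceding lemmas, the proof becomes essentially a bookkeeping argument rather than an independent piece of reasoning.

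First I would invoke Lemma~\ref{lem:Expr:Exprinf} to replace derivability in the finitary system $\Expr$ with derivability in the infinitary system $\Exprinf$: $\derivablein{\Expr}{\flabsanntxt{\vec{\avar}}{\almter}{\aiter}}$ iff $\derivablein{\Exprinf}{\flabsanntxt{\vec{\avar}}{\almter}{\aiter}}$. Next, I would apply Lemma~\ref{lem:Exprinf:Unfinf}, which establishes that $\Exprinf$ and $\Unfinf$ derivations differ only in the notation used for their judgements; this translates derivability in $\Exprinf$ into derivability of $\flabs{\vec{\avar}}{\almter} \unfoldsto \flabs{\vec{\avar}}{\aiter}$ in $\Unfinf$. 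Then Proposition~\ref{prop:Unfinf} transfers this to the rewriting-level statement $\flabs{\vec{\avar}}{\almter} \unfoldominfnfred \flabs{\vec{\avar}}{\aiter}$, the infinite outermost $\sunfoldred$-rewrite sequence to normal form.

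Finally, I would close the chain by appealing to Proposition~\ref{prop:unfoldomred}: expressibility of $\flabs{\vec{\avar}}{\aiter}$ by $\flabs{\vec{\avar}}{\almter}$ is defined via $\flabs{\vec{\avar}}{\almter} \unfoldinfred \flabs{\vec{\avar}}{\aiter}$, and that proposition (obtained from the normalization of outermost-fair strategies in orthogonal, fully extended iCRSs) guarantees that whenever a \lambdamuterm\ expresses an infinite \lambdaterm, an outermost witnessing rewrite sequence of length ${\le}\,\omega$ exists, giving $\sunfoldinfred$ and $\sunfoldominfnfred$ the same extent on pairs $(\flabs{\vec{\avar}}{\almter}, \flabs{\vec{\avar}}{\aiter})$ of closed prefixed terms. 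Composing the four ``iff''s yields the claim.

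The main obstacle is not in this theorem at all but resides in its prerequisite Lemma~\ref{lem:Expr:Exprinf}, where the folding direction requires arguing that a closed admissible $\Exprinf$-derivation contains only finitely many formulas so that K\H{o}nig-style folding with $\sFIXExpr$-instances can cut off repetitions while respecting the prefix-length side condition on $\sFIXExpr$. Once that lemma is in hand, the present theorem is essentially a transparent composition, and I would restrict the write-up to a short ``by Lemmas~\ref{lem:Expr:Exprinf}, \ref{lem:Exprinf:Unfinf} and Propositions~\ref{prop:Unfinf}, \ref{prop:unfoldomred}'' argument, being careful only to note that the two notions of ``expresses''---via $\sunfoldinfred$ and via $\sunfoldominfnfred$---coincide for closed prefixed terms so that the equivalence carries through without loss.
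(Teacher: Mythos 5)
Your proposal is correct and matches the paper's own proof, which is exactly the chaining of Lemma~\ref{lem:Expr:Exprinf}, Lemma~\ref{lem:Exprinf:Unfinf}, and Proposition~\ref{prop:Unfinf}. The only difference is that you explicitly invoke Proposition~\ref{prop:unfoldomred} (together with $\sunfoldominfnfred \subseteq \sunfoldinfred$) to reconcile ``expresses'' via $\sunfoldinfred$ with the outermost relation $\sunfoldominfnfred$ --- a step the paper leaves implicit here and only spells out later in the proof of Theorem~\ref{thm:lm-expressible:streg}, so your write-up is, if anything, slightly more careful on this point.
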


\begin{proof}
  By chaining the equivalences stated by
  Proposition~\ref{prop:Unfinf}, Lemma~\ref{lem:Exprinf:Unfinf}, and Lemma~\ref{lem:Expr:Exprinf}.
\end{proof}

The following lemma establishes the correspondence between
derivability in the proof system $\stRegzero$ from Section~\ref{sec:proofs} 
and
derivability in $\Expr$.
The crucial part of the proof consists in the constructive extraction, 
from a closed derivation $\Deriv$ in $\stRegzero$ with conclusion $\flabs{\vec{\avar}}{\aiter}$,
of a \lambdamuterm~$\flabs{\vec{\avar}}\almter$ that
describes the form of the derivation $\Deriv$,  
and (as the results gathered in this section will show) also unfolds to $\flabs{\vec{\avar}}{\aiter}$.

\begin{lemma}\label{lem:Expr:stRegzero} 
  For all infinite prefixed \lambdaterms~$\flabs{\vec{\avar}}{\aiter}$, 
  $\:\derivablein{\stRegzero}{\flabs{\vec{\avar}}{\aiter}}\;$ holds if and only if
  there exists a prefixed \lambdamuterm~$\flabs{\vec{\avar}}{\almter}$ such that $\:\derivablein{\Expr}{\flabsann{\vec{\avar}}{\almter}{\aiter}}\,$
  holds.  
\end{lemma}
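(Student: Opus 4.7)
The plan is to establish the bijective correspondence between closed derivations in $\stRegzero$ and closed derivations in $\Expr$ by forgetting or reconstructing $\lambda\mu$-term annotations on their formulas.

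For the direction $(\Leftarrow)$, given a closed \Vacstregeager\ derivation in $\Expr$ of $\flabsanntxt{\vec{\avar}}{\almter}{\aiter}$, I erase all $\lambda\mu$-term annotations, replacing each formula $\flabsanntxt{\vec{y}}{\blmter}{\biter}$ by $\flabs{\vec{y}}{\biter}$. Inspection of Figures~\ref{fig:stReg:Lambdaprefixstreg:stRegzero} and~\ref{fig:Expr:Exprmu} shows that the rules $(\bvarax)$, $(\labscomp)$, $(\lappcomp)$, $(\annVacstreg)$, $(\sFIXExpr)$ of $\Expr$ descend to exactly the rules $(\bvarax)$, $(\labscomp)$, $(\lappcomp)$, $(\Vacstreg)$, $(\sFIX)$ of $\stRegzero$, and the side-condition on $(\sFIXExpr)$, namely the depth condition together with the prefix-length condition, is word-for-word the same as that on $(\sFIX)$ in $\stRegzero$. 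Since \Vacstregeager ness depends only on the underlying prefixed $\lambda$-terms, the resulting derivation is a closed \Vacstregeager\ derivation of $\flabs{\vec{\avar}}{\aiter}$ in $\stRegzero$.

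For the direction $(\Rightarrow)$, I proceed by induction on the structure of a closed \Vacstregeager\ derivation $\Deriv$ of $\flabs{\vec{\avar}}{\aiter}$ in $\stRegzero$, bottom-up annotating each formula by a $\lambda\mu$-term. Choose a fresh constant $\aconstnamei{\amarker}$ for each marker $\amarker$ in $\Deriv$ and a fresh recursion variable $\arecvari{\amarker}$ for each $(\sFIX)$-instance. Proceed rule-by-rule: an axiom $\flabs{\vec{\avar}\bvar}{\bvar}$ is annotated by $\bvar$; a marked assumption $[\flabs{\vec{\avar}}{\ater}]^{\amarker}$ is annotated by $\aconstnamei{\amarker}$; a $(\labscomp)$-instance maps the premise annotation $\almter$ to the conclusion annotation $\labs{\bvar}{\almter}$; a $(\lappcomp)$-instance combines premise annotations $\almteri{0}$, $\almteri{1}$ into $\lapp{\almteri{0}}{\almteri{1}}$; a $(\Vacstreg)$-instance produces an $(\annVacstreg)$-instance with the annotation passed through unchanged; and a $(\sFIX)$-instance discharging marker $\amarker$ whose premise carries the annotation $\funap{\almter}{\aconstnamei{\amarker}}$ yields on its conclusion the annotation $\muabs{\arecvari{\amarker}}{\funap{\almter}{\arecvari{\amarker}}}$.

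The resulting annotated derivation $\Derivann$ is in $\Expr$ by construction: each node is the conclusion of a rule of $\Expr$, and both the depth condition and the prefix-length condition on $(\sFIXExpr)$-instances are directly inherited from the corresponding $(\sFIX)$-instances in $\Deriv$. $\Derivann$ is \Vacstregeager\ because $\Deriv$ is, and it is closed because every constant $\aconstnamei{\amarker}$ introduced at a marked assumption is abstracted away at the $(\sFIX)$-instance discharging $\amarker$. The $\lambda\mu$-term $\almter$ obtained as the annotation of the root formula then witnesses $\derivablein{\Expr}{\flabsann{\vec{\avar}}{\almter}{\aiter}}$. I expect no substantial obstacle: the two systems are essentially the same proof calculus, with $\Expr$ being the annotated variant in which each derivation carries along a witness $\lambda\mu$-term constructed inductively from its own shape, and the argument reduces to routine bookkeeping of freshness for constants and recursion variables across nested $(\sFIX)$-instances.
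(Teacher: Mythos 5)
Your proposal is correct and follows essentially the same route as the paper: the direction from $\Expr$ to $\stRegzero$ by erasing the \lambdamuterm\ annotations, and the converse by induction on the derivation, annotating each formula with the \lambdamuterm\ that the rules of $\Expr$ uniquely determine (fresh constants at marked assumptions, a $\mu$\nb-abstraction at each ($\sFIX$)-instance). The one point the paper makes explicit and you pass over silently is the induction invariant that the annotating \lambdamuterm\ has exactly the same free prefix variables as the infinite \lambdaterm\ it annotates, which is what licenses turning each ($\Vacstreg$)-instance of $\stRegzero$ into a legitimate ($\annVacstreg$)-instance of $\Expr$ (the binding must be vacuous in the \lambdamuterm\ as well) and keeps the annotated derivation \Vacstregeager; your construction does satisfy it, but it merits a sentence rather than being subsumed under freshness bookkeeping.
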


%

\begin{proof}
  For the implication ``$\Rightarrow$''
  it suffices to show that 
  every derivation $\Deriv$ in $\stRegzero$ with conclusion $\flabs{\vec{\bvar}}{\biter}$ and possibly with open assumptions
  can be transformed, 
  by adding appropriate annotating \lambdamuterms\ in the formulas of $\Deriv$,
  into a derivation $\Derivann$ in \Expr\ with conclusion $\flabsanntxt{\vec{\bvar}}{\blmter}{\biter}$
  and corresponding (if any) open assumptions,
  and such that the same variables in $\vec{\bvar} = \tuple{\bvari{1},\ldots,\bvari{n}}$
  occur free in $\blmter$ as in $\biter$.
  This can be established by induction on the depth~$\depth{\Deriv}$ of $\Deriv$.
  In the base case, axioms ($\bvarax$) of $\stRegzero$ are annotated to axioms ($\bvarax$) of $\Expr$,
  and marked assumptions $(\flabs{\vec{\cvar}}{\citer})^{\amarker}$ in $\stRegzero$ 
  to marked assumptions $(\flabsanntxt{\vec{\cvar}}{\aconstnamei{\amarker}}{\citer})^{\amarker}$.
  
  In the induction step it has to be shown that a derivation $\Deriv$ in $\stRegzero$
  with immediate subderivation $\Derivi{0}$ can be annotated appropriately to a derivation $\Derivann$ in $\Expr$,
  making use of the induction hypothesis that guarantees an annotated version $\Derivanni{0}$ of $\Derivi{0}$.
  For obtaining $\Derivann$ from $\Derivanni{0}$ the fact is used that
  the rules in $\Expr$ uniquely determine the \lambdamuterm\ in the conclusion of an instance, once the \lambdamuterm(s) in the premise(s)
  (and in the case of ($\sFIXExpr$) additionally the constants in the assumptions that are discharged) are given. 
  In order to establish that instances of ($\Vacstreg$) in $\Deriv$ give rise to corresponding instances of ($\Vacstreg$) in $\Derivann$,
  the part of the induction hypothesis is used that guarantees that the \lambdamuterm\ annotation in the premise of the rule
  contains precisely the same variable bindings from the abstraction prefix as the \lambdaterm\ it annotates.

  For showing ``$\Leftarrow$'', let $\Deriv$ be a closed derivation in \Expr\ 
  with conclusion $\flabsanntxt{\vec{\avar}}{\allter}{\aiter}$.
  Then a closed derivation $\check{\Deriv}$ in \stRegzero\ with conclusion $\flabs{\vec{\avar}}{\aiter}$
  can be obtained by simply dropping the annotating $\lambdamucal$\nb-terms in formulas of $\Deriv$.
\end{proof}

\begin{example}\label{example:Expr}
  The derivation $\Derivi{l}$ in $\stRegzero$ from Example~\ref{ex:Reg:stReg}, (\ref{ex:Reg:stReg:item:stReg}), on the left can be annotated,
  as described by Lemma~\ref{lem:Expr:stRegzero}
  to obtain the following derivation $\Hat{\Deriv}_{l}$ in $\Expr\,$:
  \begin{center}
    \scalebox{0.92}{
    \AxiomC{$ (\femptylabsann{\aconstnamei{\amarker}}{\aiter})^{\amarker} $}
    \RightLabel{$\Vacstreg$}
    \UnaryInfC{$ \flabsann{\avar}{\aconstnamei{\amarker}}{\ater} $}
    \RightLabel{$\Vacstreg$}
    \UnaryInfC{$ \flabsann{\avar\bvar}{\aconstnamei{\amarker}}{\ater} $}
    \AxiomC{\mbox{}}
    \RightLabel{$\bvarax$}
    \UnaryInfC{$ \flabsann{\avar\bvar}{\bvar}{\bvar} $}
    \RightLabel{$\lappcomp$}
    \BinaryInfC{$ \flabsann{\avar\bvar}{\lapp{\aconstnamei{\amarker}}{\bvar}}{\lapp{\ater}{\bvar}} $}
    \AxiomC{\mbox{}}
    \RightLabel{$\bvarax$}
    \UnaryInfC{$ \flabsann{\avar}{\avar}{\avar} $}
    \RightLabel{$\Vacstreg$}
    \UnaryInfC{$ \flabsann{\avar\bvar}{\avar}{\avar} $}
    \RightLabel{$\lappcomp$}
    \BinaryInfC{$ \flabsann{\avar\bvar}{\lapp{\lapp{\aconstnamei{\amarker}}{\bvar}}{\avar}}{\lapp{\lapp{\ater}{\bvar}}{\avar}} $}
    \RightLabel{$\labscomp$}
    \UnaryInfC{$ \flabsann{\avar}{\labs{\bvar}{\lapp{\lapp{\aconstnamei{\amarker}}{\bvar}}{\avar}}}{\lapp{\lapp{\ater}{\bvar}}{\avar}} $}
    \RightLabel{$\labscomp$}
    \UnaryInfC{$ \femptylabsann{\labs{\avar\bvar}{\lapp{\lapp{\aconstnamei{\amarker}}{\bvar}}{\avar}}}{\labs{\avar\bvar}{\lapp{\lapp{\ater}{\bvar}}{\avar}}} $}
    \RightLabel{$\sFIXExpr, u$}
    \UnaryInfC{$ \femptylabsann{\muabs{\arecvar}{\labs{\avar\bvar}{\lapp{\lapp{\arecvar}{\bvar}}{\avar}}}}{\ater} $}
    \DisplayProof
      }
  \end{center}
  Note that the \lambdamuterm\ in the conclusion
  unfolds to $\aiter$, the infinite \lambdaterm\ in Fig.\,\ref{fig:ll:expressible}. 
%
\end{example}

Now we can prove our main result on \lambdamuexpressibility\ 
by composing the proof-theoretic transformations developed in this section,
and by applying the characterization, from Section~\ref{sec:proofs},
of strong regularity of infinite \lambdaterms\ via derivability in $\stRegzero$.

\begin{theorem}\label{thm:lm-expressible:streg}
  An infinite \lambdaterm\ is $\lambdamu$\nb-expressible if and only if it is strongly regular.
\end{theorem}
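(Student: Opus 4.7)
The plan is to derive this theorem as essentially a corollary of the proof-theoretic infrastructure assembled in Sections \ref{sec:proofs} and \ref{sec:express}. The key observation is that three characterisations have already been established, and we only need to chain them:
\begin{enumerate}[(a)]
  \item Strong regularity of $\aiter$ is equivalent to derivability of $\femptylabs{\aiter}$ in $\stRegzero$ (Theorem \ref{thm:Reg:stReg:stRegzero}, (\ref{thm:Reg:stReg:stRegzero:item:stReg:stRegzero})).
  \item Derivability of $\femptylabs{\aiter}$ in $\stRegzero$ is equivalent to the existence of some $\almter$ with $\derivablein{\Expr}{\femptylabsann{\almter}{\aiter}}$ (Lemma \ref{lem:Expr:stRegzero}).
  \item Derivability of $\femptylabsann{\almter}{\aiter}$ in $\Expr$ is equivalent to $\almter$ expressing $\aiter$, i.e.\ $\almter\unfoldinfred\aiter$ (Theorem \ref{thm:Expr}).
\end{enumerate}

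For the direction \emph{strongly regular $\Rightarrow$ $\lambdamucal$\nb-expressible}, I would start from a strongly regular $\aiter\in\Ter{\inflambdacal}$, apply~(a) to obtain a closed derivation $\Deriv$ of $\femptylabs{\aiter}$ in $\stRegzero$, then apply the constructive $\Rightarrow$ direction of~(b) to produce an annotating \lambdamuterm~$\almter$ such that $\femptylabsann{\almter}{\aiter}$ is derivable in $\Expr$, and finally invoke~(c) to conclude that $\almter$ expresses $\aiter$. For the converse, assume $\aiter$ is expressed by some \lambdamuterm~$\almter$, use~(c) to obtain a closed derivation of $\femptylabsann{\almter}{\aiter}$ in $\Expr$, use the $\Leftarrow$ direction of~(b) (which just erases the annotating \lambdamuterm{s}) to obtain a closed derivation of $\femptylabs{\aiter}$ in $\stRegzero$, and use~(a) to conclude strong regularity.

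The hard work has already been done upstream, so the proof itself is short. The genuine obstacles were all handled earlier: the soundness and completeness of $\Expr$ for $\lambdamuexpressibility$ in Theorem~\ref{thm:Expr} rests on the nontrivial chain $\Expr\leftrightsquigarrow\Exprinf\leftrightsquigarrow\Unfinf\leftrightsquigarrow\sunfoldominfnfred$, together with the normalisation result of \cite{kete:simo:2010} used in Proposition~\ref{prop:unfoldomred}; and Lemma~\ref{lem:Expr:stRegzero} in turn required the side-condition on ($\sFIXExpr$) in $\Expr$ to be aligned with the prefix-length side-condition on ($\sFIX$) in $\stRegzero$, so that annotating a $\stRegzero$\nb-derivation inductively from the leaves yields a well-formed $\Expr$\nb-derivation. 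Once those results are in place, the present theorem is obtained simply by composing the equivalences, without any further case analysis.
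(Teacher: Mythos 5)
Your proposal is correct and follows essentially the same route as the paper: the paper's proof of this theorem is exactly the chain of equivalences you describe (definition of expressibility, Proposition~\ref{prop:unfoldomred}, Proposition~\ref{prop:Unfinf}, Lemmas~\ref{lem:Exprinf:Unfinf}, \ref{lem:Expr:Exprinf}, \ref{lem:Expr:stRegzero}, and Theorem~\ref{thm:Reg:stReg:stRegzero},~(\ref{thm:Reg:stReg:stRegzero:item:stReg:stRegzero})), with your steps (b) and (c) merely packaging the middle links via Theorem~\ref{thm:Expr}, which the paper itself establishes from the same ingredients.
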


\begin{proof}
  For all infinite \lambdaterms~$\aiter$ it holds: 
  \begin{align*}
    \text{$\aiter$ is \lambdamuexpressible}    
      \;\; & \Longleftrightarrow\;\;
      \existsst{\almter\in\Ter{\lambdamucal}}{\;\almter \unfoldinfred \aiter}
      & & \text{(by } \parbox[t]{18ex}{the definition of\\[-0.5ex] \lambdamuexpressibility)} 
    \\[-0.25ex] 
    & \Longleftrightarrow\;\;
      \existsst{\almter\in\Ter{\lambdamucal}}{\;\almter \unfoldominfnfred \aiter}
      & & \parbox[t]{25ex}{(``$\Rightarrow$'' by Proposition$\,$\ref{prop:unfoldomred},\\[-0.5ex] 
                           \phantom{(}``$\Leftarrow$'' due to $\sunfoldominfnfred \subseteq \,\sunfoldinfred)$}
    \displaybreak[0]\\[-0.25ex] 
    & \Longleftrightarrow\;\;
      \existsst{\almter\in\Ter{\lambdamucal}}{\;\derivablein{\Unfinf}{\femptylabs{\almter} \unfoldsto \femptylabs{\aiter}}}
      & & \text{(by Proposition\,\ref{prop:Unfinf})}
    \displaybreak[0]\\[-0.25ex] 
    & \Longleftrightarrow\;\;
      \existsst{\almter\in\Ter{\lambdamucal}}{\;\derivablein{\Exprinf}{\femptylabsann{\almter}{\aiter}}}
      & & \text{(by Lemma\,\ref{lem:Exprinf:Unfinf})}
    \displaybreak[0]\\[-0.25ex]
    & \Longleftrightarrow\;\;  
      \existsst{\almter\in\Ter{\lambdamucal}}{\;\derivablein{\Expr}{\femptylabsann{\almter}{\aiter}}} \hspace*{2ex}
      & & \text{(by Lemma\,\ref{lem:Expr:Exprinf})} 
    \displaybreak[0]\\[-0.45ex]
    & \Longleftrightarrow\;\;  
      {\derivablein{\stRegzero}{\femptylabs{\aiter}}}
      & & \text{(by Lemma\,\ref{lem:Expr:stRegzero})}
    \\[-0.25ex]
    & \Longleftrightarrow\;\; 
    \text{$\aiter$ is strongly regular}
      & & \text{(by Theorem$\,$\ref{thm:Reg:stReg:stRegzero},$\,$(\ref{thm:Reg:stReg:stRegzero:item:stReg:stRegzero})),}
  \end{align*}
  which establishes the statement of the theorem.
\end{proof}

From Theorem~\ref{thm:lm-expressible:streg} 
and Corollary~\ref{cor:thm:streg:fin:bind:capt:chains} we obtain a theorem that condenses our main results. 

\begin{theorem}
  For all infinite \lambdaterms\ $\aiter$ the following statements are equivalent:
  \vspace*{-0.75ex}
  \begin{enumerate}[(i)]
    \item $\aiter$ is \lambdamuexpressible. 
    \item $\aiter$ is strongly regular.
    \item $\aiter$ is regular, and it only contains finite \bindcaptchains.
  \end{enumerate}
\end{theorem}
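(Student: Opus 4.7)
The plan is very short, because both non-trivial equivalences have already been established earlier in the paper, and the final theorem is just their composition. I would simply cite Theorem~\ref{thm:lm-expressible:streg} to obtain (i)~$\Leftrightarrow$~(ii), and Corollary~\ref{cor:thm:streg:fin:bind:capt:chains} to obtain (ii)~$\Leftrightarrow$~(iii), and then conclude that all three statements are pairwise equivalent.

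More concretely, I would first recall the statement of Theorem~\ref{thm:lm-expressible:streg}, which asserts that an infinite \lambdaterm\ is \lambdamuexpressible\ precisely when it is strongly regular. This is the deepest of the three links, and its proof occupies all of Section~\ref{sec:express}; no further work is needed here. Second, I would invoke Corollary~\ref{cor:thm:streg:fin:bind:capt:chains}, which packages together Proposition~\ref{prop:def:reg:streg} (strong regularity implies regularity) with Theorem~\ref{thm:streg:fin:bind:capt:chains} (a regular infinite \lambdaterm\ is strongly regular iff all its \bindcaptchains\ are finite). This supplies the equivalence of (ii) and (iii).

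Because equivalence is transitive, chaining (i)~$\Leftrightarrow$~(ii) with (ii)~$\Leftrightarrow$~(iii) immediately gives the three-way equivalence of (i), (ii), and (iii). There is no additional technical obstacle to overcome, nor any hidden hypothesis that needs to be re-verified: both cited results apply to arbitrary infinite \lambdaterms~$\aiter\in\Ter{\inflambdacal}$, so no restriction on $\aiter$ is needed. Thus the proof can be written in a single sentence, referencing the two previously established results and noting the transitivity of logical equivalence.
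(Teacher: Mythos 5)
Your proposal is correct and matches the paper's own argument exactly: the paper derives this theorem by combining Theorem~\ref{thm:lm-expressible:streg} for the equivalence of (i) and (ii) with Corollary~\ref{cor:thm:streg:fin:bind:capt:chains} for the equivalence of (ii) and (iii). Nothing further is needed.
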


\section{Generalization to $\lambdaletreccal$ and practical perspectives}
\label{sec:conclusion}

In \cite{grab:roch:2012} we undertook an in-depth study of expressibility in $\lambdaletreccal$,  
and obtained the more general, but analogous result 
for full $\lambdaletreccal$ instead of only for $\lambdamucal$. 
While there are significantly more technicalities involved,  
the structure of the proofs is analogous to here.
Instead of demanding eager application of the scope-delimiting rules
$\srulep{\scompress}$ and $\srulep{\snlvarsucc}$, respectively, there we study
\lambdaterm\ decomposition $\sregred^\astrat$ and $\sstregred^\astrat$
for arbitrary scope-delimiting strategies $\astrat$.

Concepts introduced here and in \cite{grab:roch:2012} have the
potential to be practically relevant for the implementation of functional
programming languages.
In \cite{grab:roch:2013:TERMGRAPH} we study various higher-order and first-order
term-graph representations of cyclic \lambdaterms. Their definitions draw
heavily on the decomposition rewrite systems in this paper. That is, every term in
\lambdaletreccal\ can be translated into a finite first-order `\lambdatg' 
by applying the rewrite strategy $\sstregred$ to the expressed strongly
regular, infinite \lambdaterm. Thereby vertices with the labels
$\sslabs$, $\sslapp$, $\snlvarsucc$ are created according to the kind of
$\sstregred$\nb-step observed (plus variable occurrence vertices with label $\snlvar$).
The degree of sharing exhibited by \lambdatgs\ can be analyzed with functional
bisimulation. In \cite{grab:roch:2013:TERMGRAPH} we identify a class of first-order representations 
with eager application of scope closure
that faithfully preserves and reflects the sharing order on higher-order term
graphs. This leads to an algorithm for efficiently
determining the maximally shared form of a 
                                           term in \lambdaletreccal, 
which
can be put to use in a compiler as part of an optimizing transformation.

\enlargethispage{10ex}

Another aspect is that functional programming languages based on the
\lambdacalculus\ with \stxtletrec\ restrict the set of (in the unfolding semantics) expressible terms 
to the strongly regular infinite \lambdaterms. But members of the superclass of regular terms
are also finitely expressible via sets of equations or \CRS\nb-rules. Therefore the question
arises whether finite representations of regular terms afford new opportunities
in compiling functional programming languages.



\def\sortunder#1{}
\bibliography{expressibility}

\end{document}